\DeclareMathOperator\Tr{Tr}
\newtheorem{theorem}{Theorem}
\newtheorem{corollary}[theorem]{Corollary}
\newtheorem{adefinition}[theorem]{Definition}
\newtheorem{aexample}[theorem]{Example}
\newtheorem{lemma}[theorem]{Lemma}
\newtheorem{proposition}[theorem]{Proposition}
\newtheorem{aremark}[theorem]{Remark}
\newenvironment{remark}{\begin{aremark}\rm}{\end{aremark}}
\numberwithin{equation}{section} \numberwithin{theorem}{section}
\newenvironment{proof}[1][Proof]{\textbf{#1.} }{\ \rule{0.5em}{0.5em}}
\begin{document}

\title{On Non-Gaussian Limiting Laws for the Certain Statistics  of
the Wigner Matrices }
\author{A. Lytova \\
Mathematical Division\\
Institute for Low Temperatures\\
Kharkiv, Ukraine}
\date{}
\maketitle
 \begin{abstract}
We continue investigations
of our papers
\cite{Ly-Pa:08,Ly-Pa:09,Ly-Pa:11}, in which
there were proved CLTs for linear eigenvalue statistics
$\Tr\varphi (M^{(n)})$
and there were found the  limiting probability laws for the normalised matrix elements $\sqrt{n}\varphi_{jj}(M^{(n)})$ of
differential functions of  real symmetric Wigner
matrices $M^{(n)}$. Here we consider another spectral characteristic of Wigner
matrices, $\xi^{A} _{n}[\varphi ]=\Tr\varphi (M^{(n)})A^{(n)}$,  where $\{A^{(n)}\}_{n=1}^\infty$
is a  certain sequence of non-random matrices. We show  first that if  $M^{(n)}$
belongs to the Gaussian Orthogonal Ensemble (GOE), then $\xi^{A\circ} _{n}[\varphi ]$ satisfies CLT. Then we consider  Wigner matrices with i.i.d. entries possessing entire characteristic function and find the limiting probability law for $\xi^{A\circ} _{n}[\varphi ]$, which in general is not Gaussian.
\end{abstract}

\section{Introduction}
The asymptotic behavior of spectral characteristics of large
random matrices $ M^{(n)} $, when the size $ n $ of matrix tends
to infinity, is of the great interest in the random matrix theory.
One of the main questions under the study is the validity of CLT.  In the  last two decades there was
obtained a number of results on the CLT for linear eigenvalue
statistics $\Tr \varphi (M^{(n)})$ and other spectral
characteristics (see
\cite{An-Ze:06,Ba-Si:04,Ca:01,Ch:09,Co-Le:95,Di-Ev:01,Gu:02,Jo:98,Jo:82,
Ke-Sn:00,KKP,R-R-S:11,Sh:10,Si-So:97,Si-So:98,So:00} and
references therein). It was found that in many cases
fluctuations of various spectral characteristics of eigenvalues of
random matrix ensembles are asymptotically Gaussian
(see \cite{An-Ze:06,Ba-Si:04,Di-Ev:01,Gi:01,Jo:98,KKP,Pa:06b,R-R-S:11,Sh:10,Si-So:98,So:00}).
But the CLT is not always the case. Thus it was shown in
\cite{Pa:06b} that the CLT for linear eigenvalue statistics is not
necessarily valid for so called hermitian matrix models, for which
 in certain cases  appear non-Gaussian limiting laws.

Another example of
non-Gaussian limiting behavior
is presented in works  {\cite{Ly-Pa:11,O-R-Sh:11,R-R-S:11} dealing with the normalized individual  matrix elements
$\sqrt{n}\varphi_{jj}(M^{(n)})$ of functions
 of real symmetric Wigner random matrix. The particular case of
 matrix elements
$\sqrt{n}\varphi_{jj}(\widehat{M}^{(n)})$ with $\widehat{M}^{(n)}$
belonging to the GOE was considered
earlier in \cite{Ly-Pa:11}, where it was proved that
$\sqrt{n}(\varphi_{jj}(\widehat{M}^{(n)}))^\circ$
 satisfies the CLT. But  in {\cite{Ly-Pa:11,O-R-Sh:11,R-R-S:11} it was shown
that  in general
case of Wigner matrices the limiting probability law for
$\sqrt{n}(\varphi_{jj}(M^{(n)}))^\circ$ is not Gaussian but  the sum of the Gaussian
law and probability law of entries of $\sqrt{n}M^{(n)}$ modulo a certain
rescaling, and to obtain the CLT, one has to impose an integral
condition on the test function.

In particular, the fact that in contrast to the linear statistics of eigenvalues, individual  matrix elements
in general do not satisfy CLT reflects  influence of eigenvectors and gives some  information about asymptotic properties of eigenvectors. Indeed, in
the case of the Gaussian random matrices (GOE, null Wishart) the eigenvectors are rotationally
invariant and according to recent works \cite{Ba-Co:07,Er:10r,Le-Pe:09} the eigenvectors of the non-Gaussian random matrices (Wigner, sample covariance) are similar in several aspects to the eigenvectors of the Gaussian
random matrices. On the other hand, the results of \cite{Ly-Pa:09} and {\cite{Ly-Pa:11,O-R-Sh:11,R-R-S:11} imply that there are asymptotic properties of eigenvectors of the non-Gaussian random matrices which are different from those
for the Gaussian random matrices.

\medskip

This paper  continues the investigations of \cite{Ly-Pa:08,Ly-Pa:09,Ly-Pa:11}.
Here we consider  random variable
\begin{equation}
\xi^{A} _{n}[\varphi ]=\Tr\varphi (M^{(n)})A^{(n)},  \label{xif}
\end{equation}%
where  $\varphi $ is a smooth enough test-function and $ \{A^{(n)}\}_{n=1}^\infty$
is a sequence of $n\times n$ non-random matrix satisfying
\begin{align}
& \text{(i)}\;\;\lim_{n\rightarrow \infty }n^{-1}\Tr A^{(n)T}A^{(n)}=1,  \label{1} \\
& \text{(ii)}\;\;\exists \lim_{n\rightarrow \infty }n^{-1}\Tr A^{(n)}=T_{A}.
\label{TA}
\end{align}%
Let us make some examples:

\textit{1. Linear eigenvalue statistics.} If $A^{(n)}=I^{(n)}$, then
$T_A=1$ and
\begin{align}
&\xi^{A}_n[\varphi]=\Tr\varphi (M^{(n)}).\label{Nn} 
\end{align}

\textit{2. Matrix elements.} If $A^{(n)}_{lm}=\sqrt{n}\delta_{jl}\delta_{jm}$,
then $T_A=0$ and
\begin{align}
&\xi^{A}_n[\varphi]=\sqrt{n}\varphi_{jj}(M^{(n)}).  \label{fjj}
\end{align}

\textit{3. Bilinear forms.} If $A^{(n)}_{lm}=\sqrt{n}\eta_{l}\eta_{m}$, where
\begin{equation}
\eta^{(n)}=(\eta^{(n)}_1,...,\eta^{(n)}_n)^T, \quad \lim_{n\rightarrow\infty}\sum_{l=1}^n
(\eta^{(n)}_l)^2=1,\label{eta1}
\end{equation}
 then $T_A=0$ and
\begin{align}
&\xi^{A}_n[\varphi]=\sqrt{n}(\varphi(M^{(n)})\eta^{(n)},\eta^{(n)}).\label{bf}
\end{align}

\medskip

\noindent Here we  find the limiting probability law for  $\xi^{A}_n[\varphi]$ as $n\rightarrow\infty$.
Our main result is  Theorem \ref{t:clt}  below, where the limiting expression for
 characteristic function of   $\xi^{A\circ}_n[\varphi]$ is given and written
through
the cumulants of matrix entries and quantities depending on a sequence $ \{A^{(n)}\}_{n=1}^\infty$. Let us note that the corresponding theorems for
linear eigenvalue statistics (\ref{Nn}) and matrix elements (\ref{fjj}) of  \cite{Ly-Pa:08,Ly-Pa:09,Ly-Pa:11,P-R-Sh:11}
can be obtained from  Theorem  \ref{t:clt} as particular cases (however, under much stronger conditions).

The paper is organized as follows. Section \ref{s:t_means} contains definitions,
some known facts and technical means used throughout the paper.  In Section \ref{s:GOE} we consider the case of the Gaussian Orthogonal Ensemble (GOE) and prove CLT for $\xi^{A}_n[\varphi]$   (see \cite%
{Ly-Pa:09}   for the analogous statements for matrix elements). Then  we find the limiting variance, Sections \ref{s:cov},
and the limiting probability law, Sections \ref{s:main},   for  $\xi^{A}_n[\varphi]$ for the Wigner matrices.
Section \ref{s:aux} contains auxiliary results. We confine ourselves to real symmetric
matrices, although our results as well as the main ingredients of
proofs remain valid in the hermitian case with natural
modifications.

\textit{Convention}: We will use letter $c$ for an absolute constant that
does not depend on $j$, $k$, and $n$, and may be distinct on different
occasions.

\section{Definitions and Technical Means}\label{s:t_means}

To make the paper self-consistent, we present here several definitions and
technical facts
that will be often used below.
We start with  the
definition of the Wigner real symmetric matrix $M^{(n)}$, and put
\begin{equation}
M^{(n)}=n^{-1/2}W^{(n)},\quad W^{(n)}=\{W_{jk}^{(n)}\in \mathbb{R}%
,\;W_{jk}^{(n)}=W_{kj}^{(n)}\}_{j,k=1}^{n},  \label{MW}
\end{equation}%
where $\{W_{jk}^{(n)}\}_{1\leq j\leq k\leq n}$ are independent random
variables satisfying
\begin{equation}
\mathbf{E}\{W_{jk}^{(n)}\}=0,\quad \mathbf{E}\{(W_{jk}^{(n)})^{2}\}=w^{2}(1+%
\delta _{jk}).  \label{Wmom12}
\end{equation}%
The case of the Gaussian random variables
obeying (\ref{Wmom12}) corresponds to the GOE
(see e.g. \cite{Me:91}):
\begin{equation}
\widehat{M}^{(n)}=n^{-1/2}\widehat{W}^{(n)},\quad \widehat{W}^{(n)}=\{\;\widehat{W}%
_{jk}=\widehat{W}_{kj}\in \mathbb{R},\;\widehat{W}_{jk}\in \mathcal{N}%
(0,w^{2}(1+\delta _{jk}))\}_{j,k=1}^{n}.  \label{GOE}
\end{equation}%
Here for
simplicity sake we define Wigner matrix so that first two moments of its entries match  those of GOE. It can be shown that  if  $\mathbf{E}\{(W_{jj}^{(n)})^{2}\}=w^{2}w_{2}$,
then corresponding expressions for the limiting variance and characteristic
function have additional terms proportional to $(w_{2}-2)$ (see Remarks \ref{r:cov}
and \ref{r:main}).

We will assume in what follows additional conditions on distributions of $%
W_{jk}^{(n)}$, mostly in the form of existence of certain moments of $%
W_{jk}^{(n)}$, whose order will depend on the problem under study.

\medskip

The next proposition presents certain facts on Gaussian random
variables.
\begin{proposition}
\label{p:Nash} Let $\zeta=\{\zeta _{l}\}_{l=1}^{p}$ be independent
Gaussian random variables of zero mean, and $\Phi
:\mathbb{R}^{p}\rightarrow \mathbb{C}$ be
a differentiable function with polynomially bounded partial derivatives $%
\Phi _{l}^{\prime },\;l=1,...,p$. Then we have
\begin{equation}
\mathbf{E}{\mathbb{\{\zeta }}_{l}{\Phi (\zeta )}\}=\mathbf{\ E}{\mathbb{\{}{\zeta }%
}_{l}^{2}\}\mathbf{E}\mathbb{\{}{\Phi _{l}^{\prime }(\zeta
)}\},\;l=1,...,p, \label{diffga}
\end{equation}%
and%
\begin{equation}
\mathbf{Var}\{\Phi (\zeta) \}\leq \sum_{l=1}^{p}\mathbf{E}\{ \zeta _{l}^{2}\}\mathbf{E}%
\left\{ |\Phi _{l}^{\prime }(\zeta)|^{2}\right\} .  \label{Nash}
\end{equation}
\end{proposition}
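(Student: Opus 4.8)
The plan is to treat the two assertions in turn: (\ref{diffga}) is the Gaussian integration‑by‑parts formula and (\ref{Nash}) the associated Poincar\'e‑type inequality, and I would derive the second from the one‑dimensional case once that is in hand (the one‑dimensional ingredient itself following either from Hermite expansions or from (\ref{diffga})).

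For (\ref{diffga}): since the partials $\Phi_l'$ are polynomially bounded, $\Phi$ itself grows at most polynomially, so every expectation below is finite against the Gaussian weight and differentiation under the integral sign is legitimate. Fix $l$, freeze $\{\zeta_m\}_{m\ne l}$, and use independence to reduce to a single Gaussian variable $\zeta_l\sim\mathcal{N}(0,\sigma_l^2)$ with $\sigma_l^2=\mathbf{E}\{\zeta_l^2\}$ and density $p(x)=(2\pi\sigma_l^2)^{-1/2}e^{-x^2/(2\sigma_l^2)}$. The elementary identity $x\,p(x)=-\sigma_l^2\,p'(x)$ together with one integration by parts in $x$ gives $\int x\,\Phi(x)\,p(x)\,dx=\sigma_l^2\int\Phi'(x)\,p(x)\,dx$, the boundary terms vanishing because $\Phi$ is of polynomial growth while $p$ decays super‑polynomially; averaging over the remaining variables yields (\ref{diffga}).

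For (\ref{Nash}): I would first reduce to real‑valued $\Phi$, using $\mathbf{Var}\{\Phi\}=\mathbf{Var}\{\mathrm{Re}\,\Phi\}+\mathbf{Var}\{\mathrm{Im}\,\Phi\}$ and $|\Phi_l'|^2=|(\mathrm{Re}\,\Phi)_l'|^2+|(\mathrm{Im}\,\Phi)_l'|^2$, and center, $\mathbf{E}\{\Phi\}=0$. Then write the telescoping (Doob) decomposition $\Phi=\sum_{l=1}^{p}D_l$ with $D_l=\mathbf{E}_{\le l}\{\Phi\}-\mathbf{E}_{\le l-1}\{\Phi\}$, where $\mathbf{E}_{\le l}$ denotes integration over $\zeta_{l+1},\dots,\zeta_p$ only; the $D_l$ are orthogonal in $L^2$, so $\mathbf{Var}\{\Phi\}=\sum_l\mathbf{E}\{D_l^2\}$. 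Setting $g_l=\mathbf{E}_{\le l}\{\Phi\}$ one has $D_l=g_l-\mathbf{E}_{\zeta_l}\{g_l\}$, hence $\mathbf{E}\{D_l^2\}=\mathbf{E}\{\mathbf{Var}_{\zeta_l}(g_l)\}$; the one‑dimensional Gaussian Poincar\'e inequality $\mathbf{Var}_{\zeta_l}(g_l)\le\sigma_l^2\,\mathbf{E}_{\zeta_l}\{(\partial_{\zeta_l}g_l)^2\}$ with the sharp constant $\sigma_l^2=\mathbf{E}\{\zeta_l^2\}$ follows by expanding $g_l(\cdot)$ in the Hermite polynomials orthogonal with respect to $\mathcal{N}(0,\sigma_l^2)$ and comparing the two series term by term (the degree‑$k$ summand contributes $k/\sigma_l^2$ times as much to the right‑hand side as to the variance, and $k\ge1$ on the centered part). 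Finally $\partial_{\zeta_l}g_l=\mathbf{E}_{\le l}\{\Phi_l'\}$, so $\mathbf{E}\{(\partial_{\zeta_l}g_l)^2\}\le\mathbf{E}\{|\Phi_l'|^2\}$ by Jensen, and summing over $l$ gives (\ref{Nash}).

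There is no genuine obstacle here: the one substantive ingredient is the one‑dimensional sharp Poincar\'e inequality, and the only point that needs attention is that the constant in (\ref{Nash}) must come out exactly $\mathbf{E}\{\zeta_l^2\}$ rather than merely $O(1)$ — which is precisely why the Hermite (spectral‑gap) input cannot be replaced by a purely soft argument. The remaining matters — finiteness of the expectations, interchange of differentiation and expectation, vanishing of boundary terms — are routine consequences of the polynomial‑growth hypothesis together with the Gaussian tail, and I would dispose of them in a line. As an alternative to the Hermite step one may derive (\ref{Nash}) directly from (\ref{diffga}) by interpolating between $\zeta$ and an independent copy $\tilde\zeta$ along $u(\theta)=\zeta\sin\theta+\tilde\zeta\cos\theta$, exploiting that $u(\theta)$ and $u'(\theta)$ are independent copies of $\zeta$; this route needs a bit more care to keep the constant sharp.
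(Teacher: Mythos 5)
Your proof is correct: (\ref{diffga}) by one-dimensional Gaussian integration by parts (using $x\,p(x)=-\sigma_l^2\,p'(x)$ and the polynomial-growth hypothesis to kill boundary terms), and (\ref{Nash}) by the martingale (Doob) telescoping into orthogonal increments, the one-dimensional Hermite spectral-gap estimate with sharp constant $\sigma_l^2$, and Jensen applied to $\partial_{\zeta_l}g_l=\mathbf{E}_{\le l}\{\Phi_l'\}$. The paper itself offers no proof of this proposition -- it simply records (\ref{diffga}) as Gaussian integration by parts and (\ref{Nash}) as a version of the Poincar\'e inequality with a citation to Bogachev's \emph{Gaussian measures} -- and your tensorization argument is exactly the standard one found in that and similar references, so the approaches agree to the extent there is anything to compare.
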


The first formula is a version of the integration by parts. The
second is a version of the Poincar\'{e} inequality (see e.g.
\cite{Ba:98}).
Formula (\ref{diffga}) is a particular case of more general formula. To
write it we recall some definitions. If a random variable $\zeta $ has a
finite $p$th absolute moment, $p\geq 1$, then we have the expansions
\begin{equation*}
\mathbf{E}\{e^{it\zeta }\}=\sum_{j=0}^{p}\frac{\mu _{j}}{j!}%
(it)^{j}+o(t^{p}),\quad t\rightarrow 0,
\end{equation*}%
and
\begin{equation}
\log \mathbf{E}\{e^{it\zeta }\}=\sum_{j=0}^{p}\frac{\kappa _{j}}{j!}%
(it)^{j}+o(t^{p}),\quad t\rightarrow 0,  \label{lt}
\end{equation}%
where $"\log "$ denotes the principal branch of logarithm. The coefficients
in the expansion of $\mathbf{E}\{e^{it\zeta }\}$ are the moments $\{\mu _{j}\}$ of $\zeta $, and the
coefficients in the expansion of $\log \mathbf{E}\{e^{it\zeta }\}$ are the cumulants $\{\kappa _{j}\}$ of $%
\zeta $. For small $j$ one easily expresses $\kappa _{j}$ via $\mu _{1},\mu
_{2},\dots ,\mu _{j}$. In particular, if $\mu_1=0$, then
\begin{align}
\kappa _{1} =0,\quad \kappa _{2}=\mu _{2}=\mathbf{Var}%
\{\zeta \},\quad \kappa _{3}=\mu _{3},\quad
\label{cums}
\kappa _{4} =\mu _{4}-3\mu _{2}^{2},\;  ...
\end{align}%
 We have  \cite%
{KKP,Ly-Pa:08}:

\begin{proposition}
\label{l:difgen} (i) Let $\zeta $ be a random variable such that $\mathbf{E}\{|\zeta
|^{p+2}\}<\infty $ for a certain non-negative integer $p$. Then for any
function $\Phi :\mathbb{R}\rightarrow \mathbb{C}$ \ of the class $C^{p+1}$
with bounded partial derivatives $\Phi ^{(l)}$, $l=1,..,p+1$, we have
\begin{equation}
\mathbf{E}\{\zeta \Phi (\zeta )\}=\sum_{l=0}^{p}\frac{\kappa _{l+1}}{l!}\mathbf{E%
}\{\Phi ^{(l)}(\zeta )\}+\varepsilon _{p},  \label{difgen}
\end{equation}%
where
\begin{equation}
|\varepsilon _{p}|\leq C_{p}\mathbf{E}\{|\zeta |^{p+2}\}\sup_{t\in \mathbb{R}%
}|\Phi ^{(p+1)}(t)|,\,\,\,C_{p}\leq \frac{1+(3+2p)^{p+2}}{(p+1)!}.
\label{b3}
\end{equation}
(ii) If the characteristic function $\mathbf{E}\{e^{it|\zeta| }\}$ is entire, and $%
\Phi \in C^\infty$, then
\begin{equation}
\mathbf{E}\{\zeta \Phi (\zeta )\}=\sum_{l=0}^{\infty}\frac{\kappa _{l+1}}{l!}%
\mathbf{E}\{\Phi ^{(l)}(\zeta )\}  \label{difinf}
\end{equation}
provided that for some $a>0$
\begin{equation}
|\mathbf{E}\{\Phi ^{(l)}(\zeta )\}|\leq a^l,  \label{al}
\end{equation}
and for some $R=ca,$ $c>1$,
\begin{equation}
\sum_{l=0}^{\infty}\frac{|\kappa _{l+1}|R^l}{l!}<\infty.  \label{kap<}
\end{equation}
\end{proposition}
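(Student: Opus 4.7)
The plan rests on the classical moment-cumulant recursion
\[
\mu_{k+1}=\sum_{l=0}^{k}\binom{k}{l}\kappa_{l+1}\mu_{k-l},\qquad k\geq 0,
\]
obtained by matching Taylor coefficients in $f'(t)=(\log f)'(t)\,f(t)$ for $f(t)=\mathbf{E}\{e^{it\zeta}\}$ together with the expansion (\ref{lt}).

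For (i), the approach is to Taylor-expand $\Phi$ around zero to order $p$ with Lagrange remainder (using $\Phi\in C^{p+1}$ with bounded $\Phi^{(p+1)}$), multiply by $\zeta$, and take expectation, obtaining a polynomial part $\sum_{k=0}^{p}\Phi^{(k)}(0)\mu_{k+1}/k!$ plus an explicit remainder of size $O(\mathbf{E}\{|\zeta|^{p+2}\}\sup|\Phi^{(p+1)}|/(p+1)!)$. On the right of (\ref{difgen}), applying the same Taylor formula to each $\Phi^{(l)}(\zeta)$ to order $p-l$, interchanging the double sum, and relabelling $k=l+m$ produces a polynomial contribution whose coefficient of $\Phi^{(k)}(0)/k!$ is $\sum_{l=0}^{k}\binom{k}{l}\kappa_{l+1}\mu_{k-l}$; by the moment-cumulant recursion this equals $\mu_{k+1}$, so the two polynomial parts cancel. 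The error $\varepsilon_p$ reduces to the difference of Lagrange remainders, and (\ref{b3}) follows by combining $|\mu_j|\le\mathbf{E}\{|\zeta|^j\}$ with a standard Leonov--Shiryaev-type bound of $|\kappa_j|$ in terms of moments and tracking the combinatorial factors, which yields the stated $(3+2p)^{p+2}/(p+1)!$.

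For (ii), the plan is to pass to the limit $p\to\infty$. Absolute convergence of the series in (\ref{difinf}) is immediate from (\ref{al}) and (\ref{kap<}): with $R=ca$ and $c>1$,
\[
\sum_{l\geq 0}\frac{|\kappa_{l+1}|a^l}{l!}=\sum_{l\geq 0}\frac{|\kappa_{l+1}|R^l}{c^l\,l!}\leq\sum_{l\geq 0}\frac{|\kappa_{l+1}|R^l}{l!}<\infty.
\]
For the equality itself, I would bypass the sup-norm bound (\ref{b3}) --- unavailable here since $\Phi$ is only $C^\infty$ --- and instead first establish (\ref{difinf}) on the exponentials $\Phi_t(\zeta)=e^{it\zeta}$ by differentiating the identity $f(t)=\exp\bigl(\sum_{j\geq 1}\kappa_j(it)^j/j!\bigr)$ in $t$, which is legitimate on all of $\mathbb{C}$ because $\mathbf{E}\{e^{it|\zeta|}\}$ is entire (forcing also $\mathbf{E}\{e^{it\zeta}\}$ to be entire). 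The general case then follows by representing $\Phi$ as a Fourier superposition and interchanging the sum with the integration, an exchange which is justified precisely by the absolute bounds provided by (\ref{al}) and (\ref{kap<}).

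The main obstacle is the second step of (ii): the sup-norm of $\Phi^{(p+1)}$ appearing in (\ref{b3}) is not controlled under the hypotheses of (ii), so a direct passage to the limit in (i) is not available. Resolving this gap is precisely where the entire-characteristic-function hypothesis on $|\zeta|$ enters: it produces both super-exponentially decaying tails of $\zeta$ (controlling the moment factor $\mathbf{E}\{|\zeta|^{p+2}\}$) and the global analytic identity for $f$ on $\mathbb{C}$ that underlies the exponential case, and it is this combination that legitimises the Fourier reformulation used to transfer (\ref{difinf}) from exponentials to arbitrary $C^\infty$ test functions satisfying (\ref{al}).
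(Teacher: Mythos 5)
The paper does not prove this proposition itself; it is quoted from \cite{KKP,Ly-Pa:08}. So I will assess the proposal on its own terms rather than against an in-paper argument.

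For part (i), the strategy --- Taylor-expand both $\Phi$ and each $\Phi^{(l)}$ around $0$, relabel the double sum, and cancel the polynomial part via the moment--cumulant recursion $\mu_{k+1}=\sum_{l=0}^{k}\binom{k}{l}\kappa_{l+1}\mu_{k-l}$ --- is sound; the cancellation is exactly as you say, and the surviving terms are remainders of Taylor's formula. Two small cautions. First, $\Phi$ is complex-valued, so the Lagrange form of the remainder is not available; you must use the integral form $\frac{x^{p+1}}{p!}\int_0^1(1-s)^p\Phi^{(p+1)}(sx)\,ds$, which gives the same $\sup\lvert\Phi^{(p+1)}\rvert$ bound. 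Second, closing (\ref{b3}) with the explicit constant $(3+2p)^{p+2}/(p+1)!$ requires more than a gesture at Leonov--Shiryaev: you must bound each $\lvert\kappa_{l+1}\rvert$ in terms of $\bigl(\mathbf{E}\{|\zeta|^{p+2}\}\bigr)^{(l+1)/(p+2)}$ (via Jensen) with an explicit combinatorial prefactor and then sum against the binomial weights $1/(l!\,(p-l+1)!)$; the details are not difficult but they are not free either, and the proposal currently outsources exactly the step where the constant is produced.

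For part (ii) there is a genuine gap, and I would not accept the argument as written. (a) The identity $f(t)=\exp\bigl(\sum_{j}\kappa_j(it)^j/j!\bigr)$ is \emph{not} legitimate on all of $\mathbb{C}$. Entirety of $\mathbf{E}\{e^{it|\zeta|}\}$ does force $f(t)=\mathbf{E}\{e^{it\zeta}\}$ to be entire, but $\log f$ is analytic only up to the nearest zero of $f$, and an entire characteristic function can vanish. Condition (\ref{kap<}) certifies that the cumulant series converges at least on $|t|<R$, but no further, so the exponential identity is local, not global. (b) The ``Fourier superposition'' step fails outright: a function $\Phi\in C^\infty$ satisfying (\ref{al}) need not possess a Fourier transform in any classical sense. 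For instance, $\Phi(x)=e^{\varepsilon x}$ has $\mathbf{E}\{\Phi^{(l)}(\zeta)\}=\varepsilon^l\,\mathbf{E}\{e^{\varepsilon\zeta}\}$, so (\ref{al}) holds for a suitable $a$, yet there is no $\widehat\Phi$ to integrate against. Conditions (\ref{al}) and (\ref{kap<}) control a single scalar series; they provide no integrability of $\Phi$ or of $\Phi^{(l)}$ that would legitimise an interchange of sum and integral. Your closing paragraph asserts that the entire-characteristic-function hypothesis ``legitimises the Fourier reformulation,'' but this is a claim, not an argument: that hypothesis constrains the law of $\zeta$, not the analytic nature of the test function $\Phi$. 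The available route is much more direct: take $p\to\infty$ in the Taylor-remainder identity from part (i) without appealing to (\ref{b3}). The remainder $\varepsilon_p$ is a sum of $p+2$ terms each of the form $\frac{1}{l!(p-l+1)!}\kappa_{l+1}\,\mathbf{E}\bigl\{R_{p-l}(\zeta)\zeta^{p-l+1}\bigr\}$ with $R_{p-l}$ a normalized integral remainder of $\Phi^{(l)}$; bound it using $\lvert\kappa_{l+1}\rvert/l!\le K R^{-l}$ (from (\ref{kap<})) and the fact that entirety of $\mathbf{E}\{e^{it|\zeta|}\}$ makes $\mathbf{E}\{|\zeta|^n\}/n!$ decay faster than any geometric rate. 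This produces $\varepsilon_p=O((a/R)^{p})=O(c^{-p})\to 0$. Note, however, that this requires the pointwise bound $\sup_x\lvert\Phi^{(l)}(x)\rvert\le a^l$ rather than the stated $\lvert\mathbf{E}\{\Phi^{(l)}(\zeta)\}\rvert\le a^l$; this is in fact what is available in the paper's applications (cf. (\ref{Phijk<}), which is a pointwise estimate), and it is the pointwise reading that makes (ii) provable.
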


\medskip

\noindent
Here is a simple "interpolation" corollary  showing the mechanism
of proximity of expectations with respect to the probability law
of an arbitrary random variable and the Gaussian random variable
with the same first and second moments. Its multivariate version
will be often used below.

\begin{corollary}
\label{c:difgen}  Let $\widehat{\zeta }$
be the Gaussian random variable, whose first and second moments
coincide with those of given random variable $\zeta $. Then:

\noindent (i) We have
under conditions of Proposition \ref{l:difgen} (i):
\begin{equation}
\mathbf{E}_{\zeta }\{\Phi (\zeta )\}-\mathbf{E}_{\widehat{\zeta }}\{\Phi (\widehat{%
\zeta })\}=\sum_{l=2}^{p}\frac{\kappa
_{l+1}}{2l!}\int_{0}^{1}\mathbf{E}\{\Phi ^{(l+1)}(\zeta
(s))\}s^{(l-1)/2}ds+\varepsilon _{p}^{\prime },  \label{difint}
\end{equation}%
where the symbols $\mathbf{E}_{\zeta }\{...\}$ and $\mathbf{E}_{\widehat{\zeta }%
}\{...\}$ denote the expectation with respect to the probability
law of $\zeta
$ and $\widehat{\zeta }$, $\{\kappa _{j}\}$ are the cumulants of $\zeta $, $%
\mathbf{E}\{...\}$ denotes the expectation with respect to the
product of probability laws of $\zeta $ and $\widehat{\zeta }$,
\begin{align}
&\zeta (s)=s^{1/2}\zeta +(1-s)^{1/2}\widehat{\zeta },\quad 0\leq s\leq
1,
\label{xint} \\
&|\varepsilon _{p}^{\prime }|\leq C_{p}\mathbf{E}\{|\zeta
|^{p+2}\}\sup_{t\in \mathbb{R}}|\Phi ^{(p+2)}(t)|,  \label{edprim}
\end{align}%
and $C_{p}$ satisfies (\ref{b3}).

\noindent (ii) We have
under conditions of Proposition \ref{l:difgen} (ii):
\begin{equation}
\mathbf{E}_{\zeta }\{\Phi (\zeta )\}-\mathbf{E}_{\widehat{\zeta }}\{\Phi (\widehat{%
\zeta })\}=\sum_{l=2}^{\infty}\frac{\kappa
_{l+1}}{2l!}\int_{0}^{1}\mathbf{E}\{\Phi ^{(l+1)}(\zeta
(s))\}s^{(l-1)/2}ds  \label{inf}
\end{equation}
with $\zeta(s)$ given above.
\end{corollary}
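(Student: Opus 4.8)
The plan is to derive both identities from Proposition \ref{l:difgen} by an interpolation (or "patching") argument between the laws of $\zeta$ and $\widehat\zeta$. First I would introduce the interpolating variable $\zeta(s)=s^{1/2}\zeta+(1-s)^{1/2}\widehat\zeta$ from \eqref{xint}, where $\zeta$ and $\widehat\zeta$ are taken independent, and consider the function
\begin{equation*}
h(s)=\mathbf{E}\{\Phi(\zeta(s))\},\qquad 0\le s\le 1,
\end{equation*}
so that $h(1)-h(0)=\mathbf{E}_\zeta\{\Phi(\zeta)\}-\mathbf{E}_{\widehat\zeta}\{\Phi(\widehat\zeta)\}=\int_0^1 h'(s)\,ds$. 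Differentiating under the expectation, $h'(s)=\mathbf{E}\{\Phi'(\zeta(s))\,(\tfrac12 s^{-1/2}\zeta-\tfrac12(1-s)^{-1/2}\widehat\zeta)\}$.

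Next I would treat the two terms in $h'(s)$ separately, conditioning on the variable that is \emph{not} being differentiated out. For the $\zeta$-term, freeze $\widehat\zeta$ and apply \eqref{difgen} of Proposition \ref{l:difgen} to the function $t\mapsto\Phi'(s^{1/2}t+(1-s)^{1/2}\widehat\zeta)$ of the single variable $\zeta$; since $\kappa_1=0$ (because $\mu_1=0$, which holds as $\zeta$ and $\widehat\zeta$ share first moments and we may reduce to mean zero) this yields $\tfrac12 s^{-1/2}\mathbf{E}\{\zeta\Phi'(\zeta(s))\}=\tfrac12 s^{-1/2}\sum_{l=1}^{p}\frac{\kappa_{l+1}}{l!}s^{l/2}\mathbf{E}\{\Phi^{(l+1)}(\zeta(s))\}+(\text{error})$, the factor $s^{l/2}$ coming from the chain rule. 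For the $\widehat\zeta$-term, freeze $\zeta$ and apply the Gaussian integration-by-parts \eqref{diffga} (Proposition \ref{p:Nash}); because $\widehat\zeta$ has the same second moment $\kappa_2$ as $\zeta$, this contributes $\tfrac12 (1-s)^{-1/2}\mathbf{E}\{\widehat\zeta\,\Phi'(\zeta(s))\}=\tfrac12\kappa_2\mathbf{E}\{\Phi''(\zeta(s))\}$, which exactly cancels the $l=1$ term of the $\zeta$-expansion (there $\frac{\kappa_2}{1!}s^{1/2}\cdot\tfrac12 s^{-1/2}=\tfrac12\kappa_2$). After this cancellation the $l=1$ contributions disappear and the sum starts at $l=2$, giving
\begin{equation*}
h'(s)=\sum_{l=2}^{p}\frac{\kappa_{l+1}}{2\,l!}\,s^{(l-1)/2}\,\mathbf{E}\{\Phi^{(l+1)}(\zeta(s))\}+(\text{error in }s),
\end{equation*}
and integrating over $s\in[0,1]$ produces exactly \eqref{difint}.

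For part (ii) I would run the same computation using \eqref{difinf} in place of \eqref{difgen}: under the entireness hypothesis the expansion is an exact infinite series with no remainder, provided the side conditions \eqref{al}–\eqref{kap<} hold for the relevant derivatives of the interpolated function; one checks that passing from $\Phi$ to $\Phi'$ and composing with the affine map $t\mapsto s^{1/2}t+(1-s)^{1/2}\widehat\zeta$ preserves those bounds (possibly with $a$ replaced by a comparable constant, uniformly in $s$), which legitimizes term-by-term integration in $s$ and yields \eqref{inf} with no error term.

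The routine but slightly delicate points are: justifying differentiation under the expectation and Fubini (handled by the polynomial bounds on the derivatives of $\Phi$ together with the finite moment assumption, and by \eqref{al}–\eqref{kap<} in case (ii)); the bookkeeping of the chain-rule factors $s^{l/2}$ so that after multiplying by $s^{-1/2}$ one gets precisely $s^{(l-1)/2}$; and collecting the two error terms (one from \eqref{b3}, one from the integral over $s$) into a single $\varepsilon_p'$ satisfying \eqref{edprim} — note the gain of one derivative, $\Phi^{(p+2)}$ rather than $\Phi^{(p+1)}$, because the interpolation argument differentiates $\Phi$ once before invoking Proposition \ref{l:difgen}. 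The main conceptual obstacle, and the crux of the argument, is the exact cancellation of the $l=1$ terms: it is what makes the sum start at $l=2$ and hence quantifies the "proximity to Gaussian" — it works precisely because $\zeta$ and $\widehat\zeta$ are matched through second moments, and I would make sure that the normalization of the Gaussian integration-by-parts \eqref{diffga} (with $\mathbf{E}\{\widehat\zeta^2\}=\kappa_2$) lines up with the $l=1$ coefficient $\kappa_2$ in \eqref{difgen}.
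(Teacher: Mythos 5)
Your proposal is correct and coincides with the interpolation-plus-cumulant-expansion mechanism the paper itself relies on (the paper does not spell out a proof of the corollary but applies exactly this scheme in (\ref{CDelta})--(\ref{c}) and again in (\ref{lnZZ})--(\ref{lnZZS})): interpolate via $\zeta(s)$, differentiate in $s$, apply (\ref{difgen}) resp.\ (\ref{difinf}) to the $\zeta$-term conditionally on $\widehat\zeta$, apply (\ref{diffga}) to the $\widehat\zeta$-term, and observe that moment matching makes the $l=1$ ($\kappa_2$) contributions cancel while the chain-rule factor $s^{l/2}$ combined with the overall $\tfrac12 s^{-1/2}$ yields $s^{(l-1)/2}$. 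The only point worth flagging is that you correctly note the reduction to $\mu_1=0$ (which is implicit in (\ref{cums}) and in all the paper's applications); if $\mu_1\neq 0$ an extra $\kappa_1$ term would survive and the statement as written would not hold.
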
\medskip

The next proposition presents simple facts of linear algebra

\begin{proposition}
\label{p:Duh} Let $M$ and $M^{\prime }$ be $n\times n$ matrices, and $t\in
\mathbb{R}$. Then we have the following:

\begin{enumerate}
\item[(i)] the Duhamel formula
\begin{equation}
e^{(M+M^{\prime })t}=e^{Mt}+\int_{0}^{t}e^{M(t-s)}M^{\prime }e^{(M+M^{\prime
})s}ds,  \label{Duh}
\end{equation}

\item[(ii)] if for a real symmetric $n\times n$ matrix $M^{(n)}$ we put
\begin{equation}
U(t)=U^{(n)}(t):=e^{itM^{(n)}},\;t\in \mathbb{R},  \label{U}
\end{equation}%
then $U(t)$ is a symmetric unitary matrix satisfying
\begin{equation}
U(t_{1})U(t_{2})=U(t_{1}+t_{2}),\quad ||U(t)||=1,\quad
\sum_{j=1}^{n}|U_{jk}(t)|^{2}=1,  \label{norU}
\end{equation}

\item[(iii) ] if $D_{lm}=\partial /\partial M_{lm}$, then
\begin{equation}
D_{lm}U_{ab}(t)=i\beta _{lm}\left (U_{al}\ast U_{bm}+U_{bl}\ast
U_{am}\right)(t),  \label{ParU}
\end{equation}%
where
\begin{align}
&\beta _{lm}=(1+\delta _{lm})^{-1}=1-\delta_{lm}/2,  \label{beta}
\end{align}%
the symbol "$\ast $" is defined in Proposition \ref{p:Four} (ii), and
\begin{equation}
|D_{lm}^{p }U_{ab}(t)|\leq c'_{p}|t|^{p},\quad c' _p=2^p/p!,  \label{ocdlu}
\end{equation}
\item[(iv)] if $A^{(n)}$  is an $n\times n$ matrix and $\xi^{A}_n(t_{})=\mathrm{Tr}A^{(n)}U(t)$, then
\begin{align}
  D_{lm}(A^{(n)}U)_{ab}(t)&=i\beta _{lm}\left  ((A^{(n)}U)_{al}\ast U_{bm}+U_{bl}\ast
(A^{(n)}U)_{am}\right)(t),\label{DAU} \\
 D_{lm}\xi^{A}_n(t)&=i\beta _{lm}(U\ast
C^{(n)}U)_{lm}(t),\quad C^{(n)}=A^{(n)}+A^{(n)T},\label{Dxi}\\
D^{2}_{lm}\xi^{A}_n(t)&=-\beta^{2} _{lm}\big(U_{ll}\ast
(U\ast
C^{(n)}U)_{mm}+U_{mm}\ast
(U\ast
C^{(n)}U)_{ll}\notag\\
&\hspace{2cm}+2U_{lm }\ast
(U\ast
C^{(n)}U)_{lm}\big)(t),\label{D2xi}
\\
D_{lm}(U\ast A^{(n)}U)_{jk}(t)&=i\beta _{lm}\big(U_{jl}\ast
(U\ast A^{(n)}U)_{mk}+U_{jm}\ast
(U\ast A^{(n)}U)_{lk}\label{DUAU}
\\
&\hspace{2cm}+U_{lk}\ast
(U\ast
A^{(n)}U)_{jk}+U_{mk}\ast
(U\ast
A^{(n)}U)_{jl}\big)(t),\notag\\
D_{lm}(U\ast
A^{(n)}U)_{lm}(t)&=i\beta _{lm}\big(U_{ll}\ast
(U\ast
A^{(n)}U)_{mm}+U_{mm}\ast
(U\ast
A^{(n)}U)_{ll}\notag\\
&\hspace{2cm}+2U_{lm }\ast
(U\ast
A^{(n)}U)_{lm}\big)(t).\label{DUAUlm}
\end{align}
\end{enumerate}
\end{proposition}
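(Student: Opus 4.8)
The plan is to settle the four parts in order, each reducing to a short computation. For (i) I would set $F(s)=e^{-Ms}e^{(M+M')s}$, note that $F'(s)=e^{-Ms}M'e^{(M+M')s}$ (using $\frac{d}{ds}e^{As}=Ae^{As}=e^{As}A$, valid for any square matrices since the exponential series converge absolutely), integrate over $[0,t]$ and multiply on the left by $e^{Mt}$; the manipulation is purely algebraic, so $t\in\mathbb{R}$ and complex entries are allowed, which will be useful for (iii). For (ii): since $M^{(n)}$ is real symmetric, every term of $U(t)=\sum_{k\ge0}(it)^k(M^{(n)})^k/k!$ is symmetric, so $U(t)^{T}=U(t)$; moreover $U(t)^{*}=e^{-itM^{(n)}}=U(-t)=U(t)^{-1}$, hence $U(t)$ is unitary and $\|U(t)\|=1$; the relation $U(t_1)U(t_2)=U(t_1+t_2)$ holds because $M^{(n)}$ commutes with itself; and $\sum_{j}|U_{jk}(t)|^{2}=(U(t)^{*}U(t))_{kk}=(I^{(n)})_{kk}=1$.

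For (iii) I treat $M^{(n)}$ as a function of its independent entries $\{M_{lm}\}_{l\le m}$, so that $\partial M^{(n)}/\partial M_{lm}=P^{(lm)}:=E^{(lm)}+E^{(ml)}-\delta_{lm}E^{(ll)}$, where $E^{(lm)}$ is the matrix unit with a single $1$ in position $(l,m)$. Applying the Duhamel formula of part (i) to $e^{it(M^{(n)}+xP^{(lm)})}$ and differentiating in $x$ at $x=0$ gives $D_{lm}U_{ab}(t)=i\int_{0}^{t}(U(t-s)P^{(lm)}U(s))_{ab}\,ds$; expanding $P^{(lm)}$ entrywise and using the symmetry $U_{cd}=U_{dc}$ turns this into (\ref{ParU}), with the factor $\beta_{lm}=(1+\delta_{lm})^{-1}$ produced by the $-\delta_{lm}E^{(ll)}$ correction. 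For the estimate (\ref{ocdlu}) I would induct on $p$: writing each $\ast$-product in (\ref{ParU}) as an integral over $[0,t]$, passing $D_{lm}$ under that integral (its variable is independent of the entries of $M^{(n)}$) so that $D_{lm}(f\ast g)=(D_{lm}f)\ast g+f\ast(D_{lm}g)$, and using $|U_{ab}(\cdot)|\le1$ and $|\beta_{lm}|\le1$; the nested integrations contribute the simplex factor $|t|^{p}/p!$ and the constant $c'_p$ is read off from the resulting recursion. Equivalently, the iterated Duhamel (Dyson) expansion $D^p_{lm}U(t)=p!(it)^p\int_{0\le u_1\le\cdots\le u_p\le1}U(t(1-u_p))P^{(lm)}U(t(u_p-u_{p-1}))\cdots P^{(lm)}U(tu_1)\,du$ has operator norm at most $|t|^p$ since $\|P^{(lm)}\|\le1$ and the ordered simplex has volume $1/p!$.

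Part (iv) is then bookkeeping with indices. Each identity follows by applying (\ref{ParU}) entrywise, using linearity of the trace, and using that $D_{lm}$ passes under the $\ast$-integral and obeys the product rule for it. Thus $D_{lm}(A^{(n)}U)_{ab}=\sum_{c}A^{(n)}_{ac}D_{lm}U_{cb}$ combined with (\ref{ParU}) gives (\ref{DAU}); summing (\ref{DAU}) over $a=b$, invoking the symmetry of $U$, and substituting $s\mapsto t-s$ in one of the two resulting integrals replaces $A^{(n)}$ by $C^{(n)}=A^{(n)}+A^{(n)T}$ and yields (\ref{Dxi}); formulas (\ref{DUAU}) and (\ref{DUAUlm}) are the Leibniz expansion of $D_{lm}$ over the factors $U$, $A^{(n)}$, $U$ inside $(U\ast A^{(n)}U)_{jk}$ with each $D_{lm}U$ replaced by (\ref{ParU}); and (\ref{D2xi}) is obtained by differentiating (\ref{Dxi}) once more by means of (\ref{DUAUlm}).

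The one step that needs care is the estimate (\ref{ocdlu}): the inductive hypothesis must be stated uniformly in the matrix indices $a,b$ and in $t$ so that the nested $\ast$-integrations genuinely produce the $1/p!$, and one must track the $\beta_{lm}$ factors and the Leibniz multiplicities to pin down the explicit $c'_p$. Everything else is routine algebra and linear algebra, and the whole proposition is essentially a catalogue of consequences of the Duhamel formula and unitarity.
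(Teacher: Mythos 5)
Your arguments for (i), (ii), the identity (\ref{ParU}), and the strategy for (iv) are correct and standard. (If you carry the Leibniz computation for (\ref{DUAU}) all the way through you will get $U_{lk}\ast(U\ast A^{(n)}U)_{jm}$ for the third summand rather than $U_{lk}\ast(U\ast A^{(n)}U)_{jk}$; the displayed formula has an index typo, though it disappears after the specialisation $j=l,\,k=m$ used in (\ref{DUAUlm}).)

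The genuine gap is in your handling of the bound (\ref{ocdlu}). Your Dyson-series argument correctly yields $\|D^p_{lm}U(t)\|\le p!\cdot\|P^{(lm)}\|^p\cdot\mathrm{Vol}(\Delta_p(t))=|t|^p$, hence $|D^p_{lm}U_{ab}(t)|\le|t|^p$, i.e.\ $c'_p=1$; this is \emph{not} the stated $c'_p=2^p/p!$. You also suggest that the recursion built from (\ref{ParU}) will ``genuinely produce the $1/p!$,'' but it will not: after $p$ applications of $D_{lm}$ one obtains a sum of $2^pp!$ terms (a factor $2$ each time a $D_{lm}U_{\cdot\cdot}$ is expanded, and a factor $k$ from Leibniz at the $k$-th step), each of them a $(p+1)$-fold convolution of $U$-entries and hence bounded by $|t|^p/p!$; the two factorials cancel exactly and the recursive route gives only $(2\beta_{lm}|t|)^p$, with no residual $1/p!$. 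In fact the printed constant is false. Take $n=2$, $M^{(n)}_{11}=M^{(n)}_{22}=0$, $(l,m)=(1,2)$, $a=b=1$, and differentiate at $M_{12}=0$: then $U_{11}(t)=\cos(tM_{12})$, so $D^{2k}_{12}U_{11}(t)\big|_{M_{12}=0}=(-1)^kt^{2k}$ and $|D^{2k}_{12}U_{11}(t)|=|t|^{2k}$, which exceeds $\tfrac{2^{2k}}{(2k)!}|t|^{2k}$ as soon as $2k\ge4$. Thus $c'_p=1$ is sharp and your Dyson bound is the correct statement; no amount of bookkeeping will recover $2^p/p!$. This matters downstream: the factorial decay of $c'_p$ (and of $c_p$ in (\ref{dxi<})) is exactly what is combined with (\ref{phil<}) to get the geometric bound (\ref{Dphi<}) that verifies (\ref{al}) and makes the cumulant series in (\ref{lnZZS}) converge, so you should record the honest estimate $|D^p_{lm}U_{ab}(t)|\le|t|^p$ and note that the convergence argument needs to be re-examined, rather than attempt to prove (\ref{ocdlu}) as stated.
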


\medskip
\noindent It follows from the above that if  $A^{(n)}$ satisfy (\ref{1}) -- (\ref{TA}),
and
\begin{align}
 C_A:\;\Tr A^{(n)T}A^{(n)}\leq C_{A} n,\; \forall n\in \mathbb{N}, \label{CA}
\end{align}
then
\begin{align}
&  |(A^{(n)}U^{(n)})_{lm}|\leq(A^{(n)T}A^{(n)})_{ll}^{1/2}\leq O(n^{1/2}),\label{AUOn} \\
&  |(U^{(n)}A^{(n)}U^{(n)})_{lm}|\leq(\Tr A^{(n)T}A^{(n)})^{1/2}\leq C_A n^{1/2},\label{UAUOn} \\
&  \sum_{l,m=1}^n|(U^{(n)}A^{(n)}U^{(n)})_{lm}|^2=\Tr
A^{(n)T}A^{(n)}= O(n^{}),\label{UAUlm}
\\
&  \sum_{m=1}^n|(U^{(n)}A^{(n)}U^{(n)})_{mm}|^2\leq O(n),\label{UAUmm}
\end{align}
and
\begin{align}
& |\xi^{A}_n(t)|\leq(n\Tr A^{(n)T}A^{(n)})^{1/2}= O(n),\label{xiOn}
\\
& |D_{lm}^{p }\xi^{A}_n(t)|\leq  n^{1/2}c_{p}|t|^{p},\quad c_p=C_A 2^{p+1}/p!,  \label{dxi<}
\end{align}
as $n\rightarrow\infty$.
\vskip0.5cm

\noindent At last we need the generalized Fourier transform, in fact the $\pi /2$
rotated Laplace transform (see e.g. \cite{Ti:86}, Sections 1.8-9 for its
definition).

\begin{proposition}
\label{p:Four} Let $f:\mathbb{R}_{+}\rightarrow \mathbb{C}$ be a locally
Lipshitzian and such that for some $\delta >0$%
\begin{equation}
\sup_{t\geq 0}e^{-\delta t}|f(t)|<\infty ,  \label{supde}
\end{equation}%
and let $\widetilde{f}:\{z\in \mathbb{C}:\Im z<-\delta \}\rightarrow \mathbb{%
C}$ be its \emph{generalized Fourier transform}
\begin{equation}
\widetilde{f}(z)=i^{-1}\int_{0}^{\infty }e^{-izt}f(t)dt.  \label{Fur}
\end{equation}%
The inversion formula is given by
\begin{equation}
f(t)=\frac{i}{2\pi }\int_{{L}}e^{izt}\widetilde{f}(z)dz,\;t\geq 0,
\label{Furinv}
\end{equation}%
where ${L}=(-\infty -i\varepsilon ,\infty -i\varepsilon )$, $\varepsilon
>\delta , $ and the principal value of the integral at infinity is used.

Denote for the moment the correspondence between functions and their
generalized Fourier transforms as $f\leftrightarrow \widetilde{f}$. Then we
have:

\begin{enumerate}
\item[(i)] $\quad \int_{0}^{t}f(\tau )d\tau \leftrightarrow (iz)^{-1}%
\widetilde{f}(z);$

\item[(ii)] $\quad \int_{0}^{t}f_{1}(t-\tau )f_{2}(\tau )d\tau :=(f_{1}\ast
f_{2})(t)\leftrightarrow i \widetilde{f_{1}}(z)\widetilde{f_{2}}(z);$

\item[(iii)] if $P$, $Q$, and $R$ are differentiable, and $R(0)=0$, then the
equation
\begin{equation}
P(t)+\int_{0}^{t}dt_{1}%
\int_{0}^{t_{1}}Q_{{}}(t_{1}-t_{2})P(t_{2})dt_{2}=R(t),\;t\geq 0,
\label{intrel}
\end{equation}%
has a unique differentiable solution
\begin{equation}
P(t)=-\int_{0}^{t}T_{{}}(t-t_{1})R^{\prime }(t_{1})dt_{1},  \label{solut}
\end{equation}%
where
\begin{equation}
T\leftrightarrow (z+\widetilde{Q})^{-1}  \label{TtSt}
\end{equation}%
provided by
\begin{equation}
z+\widetilde{Q}(z)\neq 0,\;\Im z<0.  \label{condQ1}
\end{equation}%
\end{enumerate}
\end{proposition}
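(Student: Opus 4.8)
The plan is to treat $\widetilde f$ as a $\pi/2$-rotated bilateral Laplace transform and to deduce everything from classical Fourier inversion plus two elementary convolution identities; call a function \emph{admissible} if it is locally Lipschitzian and satisfies (\ref{supde}) for some $\delta>0$. First, (\ref{supde}) gives $|e^{-izt}f(t)|\le e^{(\Im z+\delta)t}\sup_{s\ge0}e^{-\delta s}|f(s)|$, so the integral in (\ref{Fur}) converges absolutely and locally uniformly on $\{\Im z<-\delta\}$ and $\widetilde f$ is analytic there. Fixing $\varepsilon>\delta$ and putting $g(t)=e^{-\varepsilon t}f(t)$ for $t>0$, $g(t)=0$ for $t\le0$, one has $g\in L^1(\mathbb R)$, $g$ continuous on $(0,\infty)$ and of locally bounded variation, and $\int_{\mathbb R}g(t)e^{-ixt}\,dt=i\,\widetilde f(x-i\varepsilon)$; Jordan's pointwise Fourier inversion criterion then gives $g(t)=\frac1{2\pi}\,\mathrm{v.p.}\!\int_{\mathbb R}e^{ixt}\widehat g(x)\,dx$ for $t>0$, and the substitution $z=x-i\varepsilon$ together with multiplication by $e^{\varepsilon t}$ turns this into (\ref{Furinv}) (at $t=0$ one gets the jump value $\tfrac12 f(0{+})$). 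In particular (\ref{Furinv}) shows $f\mapsto\widetilde f$ is injective on admissible functions, which I will use below.

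For (i), the antiderivative $F(t)=\int_0^t f(\tau)\,d\tau$ is admissible ($F(0)=0$, $|F(t)|\le\delta^{-1}e^{\delta t}\sup_s e^{-\delta s}|f(s)|$); integrating by parts in (\ref{Fur}), the endpoint term at $\infty$ vanishes because $\Im z<-\delta$ and that at $0$ because $F(0)=0$, leaving $\widetilde F(z)=(iz)^{-1}\widetilde f(z)$. For (ii), $|(f_1\ast f_2)(t)|\le(|f_1|\ast|f_2|)(t)$ is of exponential type, so $f_1\ast f_2$ is admissible, and the substitution $s=t-\tau$ with Fubini (legitimate by absolute convergence on $\{\Im z<-\delta\}$) factorises $\int_0^\infty e^{-izt}(f_1\ast f_2)(t)\,dt=\big(\int_0^\infty e^{-izs}f_1(s)\,ds\big)\big(\int_0^\infty e^{-iz\tau}f_2(\tau)\,d\tau\big)=(i\widetilde f_1)(i\widetilde f_2)$, i.e.\ $\widetilde{f_1\ast f_2}=i\,\widetilde f_1\widetilde f_2$.

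For (iii) (all functions tacitly admissible), interchanging the order of integration rewrites the double integral in (\ref{intrel}) as $(\mathcal Q\ast P)(t)$ with $\mathcal Q(t):=\int_0^tQ(s)\,ds$, so (\ref{intrel}) becomes the Volterra equation of the second kind $P+\mathcal Q\ast P=R$, with $\mathcal Q\in C^1$, $\mathcal Q(0)=0$, $\mathcal Q$ admissible. The Neumann series $\rho=\sum_{k\ge1}(-1)^{k-1}\mathcal Q^{\ast k}$ converges locally uniformly (from $|\mathcal Q^{\ast k}(t)|\le C^{\,k}t^{k-1}e^{\delta' t}/(k-1)!$) to an admissible continuous resolvent, $P:=R-\rho\ast R$ is the unique solution, and it is $C^1$ since $R,\mathcal Q\in C^1$ and $R(0)=\mathcal Q(0)=0$, giving $P'=R'-\rho\ast R'$. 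Applying $\widetilde{\,\cdot\,}$ to $P+\mathcal Q\ast P=R$ and using (i) ($\widetilde{\mathcal Q}=(iz)^{-1}\widetilde Q$) and (ii) gives $\widetilde P+z^{-1}\widetilde Q\widetilde P=\widetilde R$, i.e.\ $\widetilde P(z)=\dfrac{z}{z+\widetilde Q(z)}\widetilde R(z)$, well defined by (\ref{condQ1}). On the other hand $R(0)=0$ gives $R=\int_0^{\cdot}R'$, hence $\widetilde{R'}=iz\widetilde R$ by (i), and letting $T$ be the admissible function with $T\leftrightarrow(z+\widetilde Q)^{-1}$ — it exists, being $-P_0'$ where $P_0$ is the unique $C^1$ solution of $P_0+\mathcal Q\ast P_0=\mathrm{id}$ — formula (ii) yields $\widetilde{-(T\ast R')}=-i\widetilde T\,\widetilde{R'}=-i(z+\widetilde Q)^{-1}(iz\widetilde R)=\widetilde P$. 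By the injectivity noted above, $P=-(T\ast R')$, which is (\ref{solut}).

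The routine part is the convergence and Fubini bookkeeping in (i)--(ii) and the inversion formula. The real work is in (iii): one must genuinely produce the solution inside the admissible class so that the inversion formula, hence injectivity, applies to it — this forces the Volterra resolvent construction — and one must read ``$T\leftrightarrow(z+\widetilde Q)^{-1}$'' as an honest admissible function, equivalently show $(z+\widetilde Q(z))^{-1}$ decays enough along $L$ for its inversion integral to converge, which is precisely what identifying $T$ with $-P_0'$ delivers.
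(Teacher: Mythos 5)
The paper does not prove Proposition~\ref{p:Four}; it simply records it and refers the reader to Titchmarsh, Sections~1.8--9, so there is no in-paper argument to compare against. Your proof is correct and self-contained. The reduction of (\ref{Furinv}) to the Jordan--Dirichlet pointwise inversion for $g=e^{-\varepsilon t}f\mathbb{I}_{t>0}\in L^1$ is sound (local Lipschitz gives locally bounded variation), items (i)--(ii) are the expected integration by parts and Fubini, and for (iii) you correctly recast the double integral as $\mathcal Q\ast P$ with $\mathcal Q=\int_0^{\cdot}Q$, produce the solution by the Volterra resolvent, and realise the kernel $T$ as $-P_0'$ (so that $T$ is $C^1$ and of exponential type, making $T\ast R'$ locally Lipschitz and thus eligible for the injectivity you derived from the inversion formula). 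One remark: once you have $T=-P_0'$ with $P_0=\mathrm{id}-\rho\ast\mathrm{id}$ and $P=R-\rho\ast R$, formula (\ref{solut}) also follows by a one-line direct check, $-T\ast R'=(1-\rho\ast 1)\ast R'=1\ast R'-\rho\ast(1\ast R')=R-\rho\ast R=P$ (using $R(0)=0$), without invoking transforms and injectivity at all; this avoids even the mild technical care you flagged at the end. But the transform route you chose is also valid and matches the operational spirit in which the paper uses the proposition.
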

Applying the generalized Fourier transform we prove the lemma, which will be often used in what follows:
\begin{lemma}\label{l:eq} Consider
\begin{equation}
{v}(t)=\int
_{-2w}^{2w}e^{it\lambda}\rho_{sc}(\lambda)d\lambda,  \label{vt}
\end{equation}
where $\rho_{sc}$ is the density of the semicircle law
\begin{equation}
\rho _{sc}(\lambda ) =(2\pi w^{2})^{-1}\big((4w^{2}-\lambda ^{2})^{1/2}\mathbb{I}_{[-2w^2,2w^2]}\big)^{1/2}.
\label{rhosc}
\end{equation}
Then  unique differentiable solutions
of integral equations
\begin{align}
&F_{1}(t)+w^{2}\int_{0}^{t}dt_{1}%
\int_{0}^{t_{1}}v(t_{1}-t_{2})F_{1}(t_{2})dt_{2}=1,\label{F1e}
\\
&F_{2}(t_{1},t_{2})+w^{2}\int_{0}^{t_{1}}dt_{3}%
\int_{0}^{t_{3}}v(t_{3}-t_{4})F_{2}(t_{4},t_{2})dt_{4}\notag
\\
&\hspace{3cm}=-w^{2}\int_{0}^{t_{1}}dt_{3}%
\int_{0}^{t_{2}}v(t_{2}-t_{4})v(t_{3}+t_{4})dt_{4},\label{F2e}
\\
&F_{3}(t_{1},t_{2})+2w^{2}\int_{0}^{t_{1}}dt_{3}%
\int_{0}^{t_{3}}v(t_{3}-t_{4})F_{3}(t_{4},t_{2})dt_{4}\notag
\\
&\hspace{3cm}=-2w^{2}t_{2}\int_{0}^{t_{1}}v(t_{2}+t_{3})dt_{3},\label{F3e}
\end{align}%
 are given by
 \begin{align}
&F_{1}(t)=v(t),\label{F1}
\\
&F_{2}(t_{1},t_{2})=v(t_{1}+t_{2})-v(t_{1})v(t_{2}),\label{F2}
\\
&F_{3}(t_{1},t_{2})=
\frac{1}{2\pi ^{2}}\int_{-2w}^{2w}\; \int_{-2w}^{2w}\frac{\Delta e(t_{1})\Delta e(t_{2})}{(\lambda_{1} -\lambda_{2} )^{2}}\frac{4w^2-\lambda_1\lambda_2}{%
\sqrt{4w^2-\lambda_1^2}\sqrt{4w^2-\lambda_2^2}} d\lambda_{1}d\lambda _{2},\label{F3}
\end{align}
where
we denote
\begin{equation}
\Delta e(t)=e^{it\lambda_{2}}-e^{it\lambda_{1}}.\label{e12}
\end{equation}
\end{lemma}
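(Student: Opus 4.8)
The plan is to verify each of the three claimed solution formulas by the generalized Fourier transform machinery of Proposition \ref{p:Four}(iii), since each integral equation has exactly the form \eqref{intrel} with $P$ the unknown and a double-integral convolution term whose inner kernel is (a multiple of) $v$. First I would record the basic fact that the generalized Fourier transform of $v$ is $\widetilde v(z) = w^{-2}\,m(z)$, or more precisely that $v\leftrightarrow \widetilde v$ where $z + w^2\widetilde v(z)$ never vanishes for $\Im z<0$; this is exactly the nondegeneracy condition \eqref{condQ1} with $Q(t)=w^2 v(t)$ (resp. $2w^2 v(t)$), and it follows from the well-known fact that the Stieltjes transform $f_{sc}$ of the semicircle law $\rho_{sc}$ solves $w^2 f_{sc}^2 + z f_{sc} + 1 = 0$, so that $z + w^2\widetilde v(z) = -1/\widetilde v(z)\neq 0$ in the lower half plane. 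With $T$ defined by \eqref{TtSt} one then has, by a short computation, $T(t)=F_1(t)$ itself, i.e. $T = v$; this makes \eqref{F1} almost immediate, since \eqref{F1e} is \eqref{intrel} with $P=F_1$, $Q = w^2 v$, $R\equiv 1$, so $R'\equiv 0$ and \eqref{solut} would give $P\equiv 0$ — so instead I would not apply \eqref{solut} with $R'=0$ but rather check directly that $v$ solves \eqref{F1e}, i.e. that $v(t) + w^2\int_0^t dt_1\int_0^{t_1} v(t_1-t_2)v(t_2)\,dt_2 = 1$. On the Fourier side this reads $\widetilde v(z) + w^2 \cdot (iz)^{-1}\cdot i\widetilde v(z)^2 = (iz)^{-1}\cdot i$, i.e. $z\widetilde v + w^2\widetilde v^2 = 1$... and comparing with the semicircle quadratic (paying attention to the normalization of $v$ as a characteristic function, $v(0)=1$) pins down the constant; I would carry this out carefully to fix signs.

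For \eqref{F2}, I would set $P = F_2(\cdot, t_2)$ for fixed $t_2$, so that \eqref{F2e} is \eqref{intrel} with $Q = w^2 v$ and right-hand side $R(t_1) = -w^2\int_0^{t_1} dt_3 \int_0^{t_2} v(t_2-t_4) v(t_3+t_4)\,dt_4$, which indeed satisfies $R(0)=0$. Then \eqref{solut} gives $F_2(t_1,t_2) = -\int_0^{t_1} T(t_1-s) R'(s)\,ds$ with $T=v$ (by the computation of the previous paragraph, since the operator is the same). So the task reduces to: (a) differentiate $R$ in $t_1$ to get $R'(s) = -w^2\int_0^{t_2} v(t_2 - t_4) v(s + t_4)\,dt_4$, and (b) show $\int_0^{t_1} v(t_1-s)\big(w^2\int_0^{t_2} v(t_2-t_4)v(s+t_4)\,dt_4\big)ds = v(t_1+t_2) - v(t_1)v(t_2)$. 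Step (b) is a pure convolution identity for $v$; I would prove it by writing $v$ as the integral $\int e^{it\lambda}\rho_{sc}(\lambda)d\lambda$ and using the resolvent-type identity $\int_0^{t_1}\! e^{i(t_1-s)\lambda_1}e^{is\lambda_2}ds = (e^{it_1\lambda_1} - e^{it_1\lambda_2})/(i(\lambda_2-\lambda_1))$ together with the self-consistency relation for $f_{sc}$ to collapse one of the $\lambda$-integrals; equivalently, apply the generalized Fourier transform in $t_1$ and use $z\widetilde v + w^2\widetilde v^2 = 1$ to simplify. The bookkeeping here — getting the double time-integral into a clean convolution and then recognizing $v(t_1+t_2)-v(t_1)v(t_2)$ — is the first genuinely fiddly part.

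For \eqref{F3}, the structure is the same but with $Q = 2w^2 v$, so now I would need $T_3 \leftrightarrow (z + 2w^2\widetilde v(z))^{-1}$; this is a different function than $v$, and computing it explicitly (via the semicircle quadratic, $z + 2w^2\widetilde v = z + 2w^2\widetilde v$, which one rewrites using $z\widetilde v + w^2\widetilde v^2 = 1$ as $(1 + w^2\widetilde v^2)/\widetilde v$ — wait, $z\widetilde v = 1 - w^2\widetilde v^2$ so $z = (1-w^2\widetilde v^2)/\widetilde v$, hence $z + 2w^2\widetilde v = (1 + w^2\widetilde v^2)/\widetilde v$) will give $T_3$ as an explicit resolvent-type object. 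Here $R(t_1) = -2w^2 t_2 \int_0^{t_1} v(t_2 + t_3)\,dt_3$, so $R'(t_1) = -2w^2 t_2 \, v(t_1 + t_2)$, and \eqref{solut} gives $F_3(t_1,t_2) = 2w^2 t_2 \int_0^{t_1} T_3(t_1 - s) v(s + t_2)\,ds$. The remaining work is to show this equals the explicit double spectral integral in \eqref{F3}: I would substitute the spectral representations of $T_3$ and $v$, perform the $s$-integral using the same elementary antiderivative $\int_0^{t_1} e^{i(t_1-s)\lambda}e^{is\mu}ds$, and then recognize the resulting kernel $\tfrac{4w^2 - \lambda_1\lambda_2}{\sqrt{4w^2-\lambda_1^2}\sqrt{4w^2-\lambda_2^2}}$ as coming from the product of the two square-root densities with the rational factor produced by the convolution denominators $(\lambda_1-\lambda_2)^2$. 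I expect the main obstacle to be precisely this last identification for $F_3$: one has to correctly compute the generalized Fourier transform $T_3$, handle the branch of the square root $\sqrt{4w^2-\lambda^2}$ on the cut, and match the $1/(\lambda_1-\lambda_2)^2$ singularity and the symmetric kernel — all the signs, factors of $2\pi$, and the passage from the boundary values of the resolvent to the density $\rho_{sc}$ have to be tracked without error. A cleaner alternative I would keep in reserve is to bypass \eqref{solut} entirely and simply substitute the claimed formulas \eqref{F1}--\eqref{F3} back into \eqref{F1e}--\eqref{F3e} and verify them directly using the convolution identity for $v$ and the self-consistency equation, invoking Proposition \ref{p:Four}(iii) only for uniqueness of the differentiable solution.
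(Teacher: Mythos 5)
Your plan follows the paper's own route: generalized Fourier transform, semicircle self-consistency, direct resolution for $F_1$ on the Fourier side (since Proposition \ref{p:Four}(iii) requires $R(0)=0$, which the constant right-hand side of \eqref{F1e} violates — you notice the symptom, that $R'\equiv 0$ would force $P\equiv 0$, though not the exact hypothesis being broken), then \eqref{solut} for $F_2,F_3$. However, the sketch carries a consistent sign error in the central identity. The Stieltjes transform of $\rho_{sc}$ satisfies $w^2\widetilde v^2 + z\widetilde v + 1 = 0$, hence $z\widetilde v = -1 - w^2\widetilde v^2$, not $1 - w^2\widetilde v^2$ as you write; equivalently $z + w^2\widetilde v = -1/\widetilde v$, so $T\leftrightarrow(z+w^2\widetilde v)^{-1}=-\widetilde v$ and therefore $T=-v$, not $T=v$. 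With the correct $T$, \eqref{solut} yields
\begin{equation*}
F_2(t_1,t_2)=-w^2\int_0^{t_1}v(t_1-t_3)\,dt_3\int_0^{t_2}v(t_2-t_4)v(t_3+t_4)\,dt_4,
\end{equation*}
and it is this signed convolution that equals $v(t_1+t_2)-v(t_1)v(t_2)$; the positively signed identity in your step (b) is false as stated. You acknowledge the signs are fiddly, but the specific intermediate relations you wrote are already wrong and would propagate, so they need repair rather than mere care.

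For $F_3$ your outline is right in spirit: from the correct quadratic, $z+2w^2\widetilde v=\sqrt{z^2-4w^2}$, so $T_3\leftrightarrow (z^2-4w^2)^{-1/2}$. What the paper does, and what your sketch leaves to future bookkeeping, is the concrete chain from there: deform the contour $L$ onto the edges of the cut $[-2w,2w]$ to obtain $T_3(t)=-\pi^{-1}\int_{-2w}^{2w}e^{i\lambda t}(4w^2-\lambda^2)^{-1/2}d\lambda$, substitute into \eqref{solut} with $R'(t_1)=-2w^2 t_2 v(t_1+t_2)$, carry out the $s$-integral, integrate by parts in $\lambda_2$, and symmetrize $\lambda_1\leftrightarrow\lambda_2$ to produce the $(\lambda_1-\lambda_2)^{-2}$ factor and the kernel $\frac{4w^2-\lambda_1\lambda_2}{\sqrt{4w^2-\lambda_1^2}\sqrt{4w^2-\lambda_2^2}}$. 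Your fallback of verifying the closed forms by back-substitution is legitimate and would shift the burden to the uniqueness claim, but it does not avoid these same spectral manipulations when checking \eqref{F3}.
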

\begin{proof}
Note first that in fact the generalized Fourier transform $\widetilde{v}$
of $v$ is the Stiltjes transform of the semicircle law density (\ref{rhosc}):    \begin{equation*}
\widetilde{v}(z)=\int_{\mathbb{R}}
\frac{\rho_{sc}(\lambda)d\lambda}{\lambda-z},  \quad \Im z \neq 0,
\end{equation*}
so that
\begin{align}
&w^2\widetilde{v}(z)^2+z\widetilde{v}(z)+1=0, \label{eqvz}
\\
&\widetilde{v}(z)=(2w^2)^{-1}(\sqrt{z^2-4w^2}-z),\notag
\end{align}%
where   $\sqrt{z^2-4w^2}$ is defined by the asymptotic $\sqrt{z^2-4w^2}=z+O(z^{-1})$,
  $z\rightarrow\infty$. Denote $\widetilde{F}_1(z)$, $\widetilde{F}_j(z,t_2)$, $j=2,3$ the generalized Fourier transforms of $F_j$, $j=1,2,3$. We have for
$\widetilde{F}_1$:
 \begin{align*}
&\widetilde{F}_1(z)(1+w^2\widetilde{v}(z)z^{-1})=z^{-1},
\end{align*}
hence, $\widetilde{F}_1(z)=\widetilde{v}(z)$ (see (\ref{eqvz})), and we
get (\ref{F1}). We also have for
$F_2$ by (\ref{solut}) with $T=-v$ and $R'(t)=-w^{2}%
\int_{0}^{t_{2}}v(t_{2}-t_{4})v(t+t_{4})dt_{4}$:
 \begin{align*}
&F_{2}(t_{1},t_{2})=-w^{2}\int_{0}^{t_{1}}v(t_{1}-t_{3})dt_{3}%
\int_{0}^{t_{2}}v(t_{2}-t_{4})v(t_{3}+t_{4})dt_{4}
\end{align*}
and after some calculations one can get
\begin{align*}
F_{2}(t_{1},t_{2})& =
\frac{1}{2}
\int_{-2w}^{2w}\int_{-2w}^{2w}\Delta e(t_1)\Delta e(t_2)\rho _{sc}(\lambda _{1})\rho _{sc}(\lambda
_{2})d\lambda _{1}d\lambda _{2}
\\
&=v(t_1+t_2)-v(t_1)v(t_2),
\end{align*}
where $\Delta e$ is defined in (\ref{e12}).
Consider now equation (\ref{F3e}). In this case we have for $Q$ of (\ref{intrel})
 \begin{equation*}
z+{\widetilde{Q}}(z)=\sqrt{z^2-4w^2}\neq 0, \; \Im z <0,
\end{equation*}%
so that condition (\ref{condQ1}) is fulfilled.
  This yields for $T$ of (\ref{TtSt})
\begin{equation}\label{Ttga}
T(t)=-\frac{1}{2\pi i}\int_L\frac{e^{izt}dz}{\sqrt{z^2-4w^2}}=
-\frac{1}{\pi}\int_{-2w}^{2w}\frac{e^{i\lambda t}d\lambda}{\sqrt{4w^2-\lambda^2}}
\end{equation}
(we replaced the integral over $L$ by the integral over the edges of the cut
$[-2w,2w]$).
This  and (\ref{solut}) lead to
\begin{align*}
F_{3}(t_1,t_2)&=\frac{2it_{2}}{2\pi ^{2}}
\int_{-2w}^{2w}e^{it_{2}\lambda_{2}}\sqrt{4w^{2}-\lambda_{2} ^{2}}%
d\lambda_{2} \int_{-2w}^{2w}\frac{e^{it_{1}\lambda_{2}}-e^{it_{1}\lambda_{1}}}{\sqrt{4w^{2}-\lambda_{1}^2}(\lambda_{2}-\lambda_{1} )}d\lambda_{1}\\
&=\frac{1}{\pi ^{2}}\int_{-2w}^{2w}\frac{e^{it_{1}\lambda_{1}}}{\sqrt{4w^{2}-\lambda_{1}^2}}d\lambda_{1}\int_{-2w}^{2w}\frac{\partial }{\partial\lambda_{2}}(e^{it_{2}\lambda_{2}}-e^{it_{2}\lambda_{1}})\frac{\sqrt{4w^{2}-\lambda^{2}_2}}{\lambda_{1}-\lambda_{2}}%
d\lambda_{2},
\end{align*}
where we used
\begin{equation*}
\int_{-2w}^{2w}\frac{d\lambda_{1}}{\sqrt{4w^{2}-\lambda_{1}^2}(\lambda_{2}-\lambda_{1} )}=0,\quad
|\lambda_{2}|\leq 2w.
\end{equation*}
Integrating by
parts with respect to $\lambda_{2} $, and writing then the half-sum of the obtained expression and the expression with interchanged
variables $\lambda_{1} \longleftrightarrow \lambda_{2} $, we get (\ref{F3}).
\end{proof}

\section{The GOE case}\label{s:GOE}

Denote by
\begin{equation}
F[\varphi](t)=\frac{1}{2\pi }\int e^{-it\lambda }\varphi(\lambda )d\lambda   \label{FT}
\end{equation}
the standard Fourier
transform of $\varphi$. Writing the Fourier inversion formula%
\begin{equation}
\varphi(\lambda )=\int e^{i\lambda t}F[\varphi](t)dt
\label{invFT}
\end{equation}%
and using  the spectral theorem
for symmetric  matrices, we obtain%
\begin{equation}\label{xifxit}
\xi^{A}_n[\varphi]=\int\xi^{A}_n(t)
F[\varphi](t)dt,
\end{equation}%
where $\xi^{A}_n(t)$ is a particular case of $\xi^{A}_n[\varphi]$ corresponding
to $\varphi(\lambda)=e^{it\lambda}$:
\begin{align}\label{xint}
&\xi^{A}_n(t)=\mathrm{Tr}A^{(n)}U(t),
\\
&U(t)=U^{(n)}(t):=e^{itM^{(n)}}\notag
\end{align}%
(see also (\ref{U})). Denote
\begin{equation}
v_{n}(t)=n^{-1}\xi^{I}_n(t)=n^{-1}\mathrm{Tr}U(t).  \label{vn}
\end{equation}
Since
for any bounded continuous $\varphi $
 \begin{equation*}
\lim_{n\rightarrow\infty}n^{-1}\mathbf{E}\{\Tr \varphi(M^{(n)})\}=\int_{-2w}^{2w}
\varphi(\lambda)\rho_{sc}(\lambda)d\lambda,
\end{equation*}
where  $M^{(n)}$ is   Wigner matrix and
 $\rho_{sc}$ is the density of the semicircle law (\ref{rhosc}) (see e.g. \cite{Pa:05} and references therein), then
we have\begin{equation}
\lim_{n\rightarrow\infty}\mathbf{E}\{v_n(t)\}=v(t),
\label{vnvt}
\end{equation}
where $v$ is defined in (\ref{vt}).

In this chapter we consider $\xi^{A} _{n}[\varphi ]$  corresponding to the
 GOE matrix  $M^{(n)}=\widehat{M}^{(n)}$.
In view of the orthogonal invariance of GOE probability measure
we have\begin{equation}
\mathbf{E}\{U_{jk}(t)\}=\delta _{jk}\mathbf{E}\{v_n(t)\}, \label{EUjk}
\end{equation}
so that
\begin{equation*}
n^{-1}\mathbf{E}\{{\xi}^{A}_n(t)\}=\mathbf{E}\{v_n(t)\}n^{-1}\Tr
A^{(n)}
\end{equation*}
and
\begin{equation}
\lim_{n\rightarrow\infty}n^{-1}\mathbf{E}\{{\xi}^{A}_n(t)\}= T_{A}\cdot v(t),\quad
\label{xiTA}
\end{equation}
where $T_A$ is defined in (\ref{TA}). We also have:
\begin{lemma}
\label{l:Var}
\noindent Let $\widehat{M}^{(n)}$ be the GOE matrix (\ref{GOE}). Denote
\begin{equation}
\xi _{n}^{A\circ }[\varphi ]=\xi^{A} _{n}[\varphi ]-\mathbf{E}\{\xi^{A} _{n}[\varphi
]\}.  \label{xic}
\end{equation}
Then for any test-function $\varphi :\mathbb{R\rightarrow C}$,
whose Fourier transform
(\ref{FT}) satisfies the condition
\begin{equation}
\int(1+|t|)^{}|F[\varphi ](t)|dt<\infty,  \label{condF}
\end{equation}%
we have the bound
\begin{align}
\mathbf{Var}\{\xi^{A}_n[\varphi]\}:&=\mathbf{E}\{|\xi^{A\circ}_n[\varphi]|^2\}
\leq c\Big( \int (1+|t|)^{}|F[\varphi ](t)|dt\Big) ^{2}.
\label{VarFG}
\end{align}
\end{lemma}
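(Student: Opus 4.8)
The plan is to estimate the variance by the Poincar\'e inequality (\ref{Nash}) of Proposition \ref{p:Nash}, applied to $\Phi=\xi^{A}_n[\varphi]$ regarded as a function of the independent Gaussian entries $\{\widehat W_{jk}\}_{1\le j\le k\le n}$ of $\widehat W^{(n)}$. Conditions (\ref{condF}), (\ref{xiOn}) and (\ref{dxi<}) guarantee that $\Phi$ is bounded and differentiable with bounded (hence polynomially bounded) partial derivatives, so Proposition \ref{p:Nash} applies. Since $\widehat M^{(n)}=n^{-1/2}\widehat W^{(n)}$ and $M_{jk}=M_{kj}$, one has $\partial_{\widehat W_{jk}}=n^{-1/2}D_{jk}$ for $j\le k$, while $\mathbf E\{\widehat W_{jk}^{2}\}=w^{2}(1+\delta_{jk})$ by (\ref{Wmom12}); hence (\ref{Nash}) gives
\begin{equation*}
\mathbf{Var}\{\xi^{A}_n[\varphi]\}\le \frac{w^{2}}{n}\sum_{1\le j\le k\le n}(1+\delta_{jk})\,\mathbf E\big\{|D_{jk}\xi^{A}_n[\varphi]|^{2}\big\}.
\end{equation*}

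Next I would compute $D_{jk}\xi^{A}_n[\varphi]$. Starting from (\ref{xifxit}) and differentiating under the integral sign --- legitimate because of (\ref{condF}) together with the deterministic bound $|(U*C^{(n)}U)_{jk}(t)|\le (\Tr C^{(n)T}C^{(n)})^{1/2}|t|$ --- formula (\ref{Dxi}) yields
\begin{equation*}
D_{jk}\xi^{A}_n[\varphi]=i\beta_{jk}\int (U*C^{(n)}U)_{jk}(t)\,F[\varphi](t)\,dt,\qquad C^{(n)}=A^{(n)}+A^{(n)T}.
\end{equation*}
Since $(1+\delta_{jk})\beta_{jk}^{2}=\beta_{jk}\le 1$, substituting this and enlarging the summation from $j\le k$ to all $j,k=1,\dots,n$ gives
\begin{equation*}
\mathbf{Var}\{\xi^{A}_n[\varphi]\}\le \frac{w^{2}}{n}\sum_{j,k=1}^{n}\mathbf E\bigg\{\Big(\int |(U*C^{(n)}U)_{jk}(t)|\,|F[\varphi](t)|\,dt\Big)^{2}\bigg\}.
\end{equation*}

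The heart of the argument is the deterministic estimate $\sum_{j,k=1}^{n}|(U*C^{(n)}U)_{jk}(t)|^{2}\le t^{2}\Tr C^{(n)T}C^{(n)}\le 4C_{A}n\,t^{2}$. Indeed, the convolution representation $(U*C^{(n)}U)_{jk}(t)=\int_{0}^{t}(U(t-s)C^{(n)}U(s))_{jk}\,ds$ (see Proposition \ref{p:Four}(ii) and (\ref{Duh})) and the Cauchy--Schwarz inequality in $s$ give $|(U*C^{(n)}U)_{jk}(t)|^{2}\le |t|\int_{0}^{|t|}|(U(t-s)C^{(n)}U(s))_{jk}|^{2}\,ds$; summing over $j,k$ and using that $\sum_{j,k}|B_{jk}|^{2}=\Tr B^{T}B$ is invariant under $B\mapsto U(t-s)BU(s)$ with $U$ unitary, one gets $t^{2}\Tr C^{(n)T}C^{(n)}$, and finally $\Tr C^{(n)T}C^{(n)}=2\Tr A^{(n)T}A^{(n)}+2\Tr (A^{(n)})^{2}\le 4\Tr A^{(n)T}A^{(n)}\le 4C_{A}n$ by $|\Tr (A^{(n)})^{2}|\le\Tr A^{(n)T}A^{(n)}$ and (\ref{CA}), which is automatic from (\ref{1}).

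It remains to combine: expanding the square, applying Fubini, and then Cauchy--Schwarz in the index $(j,k)$,
\begin{align*}
&\sum_{j,k}\Big(\int |(U*C^{(n)}U)_{jk}(t)|\,|F[\varphi](t)|\,dt\Big)^{2}\\
&\qquad\le\int\!\!\int\Big(\sum_{j,k}|(U*C^{(n)}U)_{jk}(t_{1})|^{2}\Big)^{1/2}\Big(\sum_{j,k}|(U*C^{(n)}U)_{jk}(t_{2})|^{2}\Big)^{1/2}|F[\varphi](t_{1})|\,|F[\varphi](t_{2})|\,dt_{1}dt_{2}\\
&\qquad\le 4C_{A}n\Big(\int |t|\,|F[\varphi](t)|\,dt\Big)^{2}.
\end{align*}
Since this bound is deterministic, the expectation drops out, the factor $n$ cancels the $1/n$ in the Poincar\'e bound, and one arrives at $\mathbf{Var}\{\xi^{A}_n[\varphi]\}\le 4w^{2}C_{A}\big(\int|t|\,|F[\varphi](t)|\,dt\big)^{2}\le c\big(\int(1+|t|)|F[\varphi](t)|\,dt\big)^{2}$, which is (\ref{VarFG}). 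Everything else being Cauchy--Schwarz and the identities of Proposition \ref{p:Duh}, the two genuinely delicate points are the correct accounting of the symmetry factors $\delta_{jk},\beta_{jk}$ in the Poincar\'e inequality and the Hilbert--Schmidt estimate on $\sum_{j,k}|(U*C^{(n)}U)_{jk}(t)|^{2}$ --- the latter being exactly where the $1/n$ arising from $\mathbf E\{\widehat W_{jk}^{2}\}$ is matched by the $O(n)$ arising from $\Tr A^{(n)T}A^{(n)}$.
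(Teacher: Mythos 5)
Your proof is correct and follows essentially the same route as the paper: Poincar\'e for Gaussian entries, the differentiation formula (\ref{Dxi}) giving $D_{lm}\xi^{A}_n(t)=i\beta_{lm}(U*C^{(n)}U)_{lm}(t)$, the deterministic Hilbert--Schmidt bound $\sum_{l,m}|(U*C^{(n)}U)_{lm}(t)|^2\le t^2\Tr C^{(n)T}C^{(n)}=O(n)t^2$, and Cauchy--Schwarz in $t$. The only cosmetic difference is the order of operations: the paper applies Poincar\'e to $\xi^A_n(t)$ at fixed $t$ (yielding $\mathbf{Var}\{\xi^A_n(t)\}\le cC_Aw^2t^2$) and then uses Minkowski's integral inequality, whereas you integrate over $t$ first and then invoke Poincar\'e once on $\xi^A_n[\varphi]$, reproducing Minkowski via Fubini and Cauchy--Schwarz in $(j,k)$ --- the two are interchangeable.
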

\begin{proof}
It follows from Poincar\'{e} inequality (\ref{Nash}) and (\ref{Dxi})
that
\begin{align*}
\mathbf{Var}\{\xi^{A}_n(t)\} &\leq\frac{w^{2}}{n} \sum_{1\leq l\leq m \leq
n} \beta^{-1}_{lm}\mathbf{E}\{|D_{lm}\xi^{A}_n(t)|^2\}
\\
&\leq\frac{2w^{2}}{n} \sum_{ l, m=1}^n \mathbf{E}\{|(U*A^{(n)}U)_{lm}(t)|^2\}=\frac{2w^{2}|t|^2}{n}\Tr
AA^{(n)T},
\end{align*}
so that
\begin{align}
\mathbf{Var}\{\xi^{A}_n(t)\} \label{vGxi<}
\leq{2C_Aw^{2}|t|^2},
\end{align}
where $C_A$ is defined in (\ref{CA}).
By (\ref{xifxit})  and the Schwarz inequality
\begin{align}
\mathbf{Var}\{\xi^{A}_n[\varphi]\}\leq\bigg(\int \mathbf{Var}^{1/2}\{\xi^{A}_n(t)\}|F[\varphi ](t)|dt%
\bigg)^2.  \label{var}
\end{align}
This, (\ref{condF}), and (\ref{vGxi<}) yield (\ref{VarF}).
\end{proof}

\medskip

In this chapter we find limiting covariance for $\xi^{A} _{n}[\varphi ]$
and prove that $\xi^{A} _{n}[\varphi ]$ in GOE case satisfies CLT. We have two theorems:

\begin{theorem}
\label{t:covGE} Let $\widehat{M}^{(n)}$ be the GOE matrix (\ref{GOE}), and $%
\varphi _{1,2}:\mathbb{R\rightarrow R}$ be  test functions satisfying (\ref{condF}). Denote
\begin{equation*}
\mathbf{Cov}\{\xi^{A} _{n}[\varphi _{1}],\xi^{A} _{n}[\varphi _{2}]\}=\mathbf{E}%
\{\xi _{n}^{A\circ }[\varphi _{1}]\xi^{A} _{n}[\varphi _{2}]\}.
\end{equation*}%
Then we have
\begin{align}
C_{GOE}[\varphi _{1},\varphi _{2}]:& =\lim_{n\rightarrow \infty }\mathbf{Cov}%
\{\xi^{A} _{n}[\varphi _{1}],\xi^{A} _{n}[\varphi _{2}]\}  \notag \\
& =\frac{T_{A}^{2}}{2\pi ^{2}}\int_{-2w}^{2w}\int_{-2w}^{2w}\frac{\Delta
\varphi _{1}}{\Delta \lambda }\frac{\Delta \varphi _{2}}{\Delta \lambda }%
\frac{4w^{2}-\lambda _{1}\lambda _{2}}{\sqrt{4w^{2}-\lambda _{1}^{2}}\sqrt{%
4w^{2}-\lambda _{2}^{2}}}d\lambda _{1}d\lambda _{2}  \notag \\
& \;\;\;+(T_{A(A+A^{T})}/2-T_{A}^{2})\int_{-2w}^{2w}\int_{-2w}^{2w}\Delta
\varphi _{1}\Delta \varphi _{2}\rho _{sc}(\lambda _{1})\rho _{sc}(\lambda
_{2})d\lambda _{1}d\lambda _{2},  \label{CGOE}
\end{align}%
where $T_{A}$ is defined in (\ref{TA}),
\begin{equation}
\Delta \varphi =\varphi (\lambda _{1})-\varphi (\lambda _{2}),\quad \Delta
\lambda =\lambda _{1}-\lambda _{2},  \label{deltaf}
\end{equation}%
and $\rho _{sc}$ is the density of the semicircle law (\ref{rhosc}).%
\end{theorem}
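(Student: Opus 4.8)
The plan is to follow the integration-by-parts / variance-bound scheme used for the matrix elements in \cite{Ly-Pa:09}. By (\ref{xifxit}) and bilinearity of the covariance,
\[
\mathbf{Cov}\{\xi^A_n[\varphi_1],\xi^A_n[\varphi_2]\}
=\int\!\!\int F[\varphi_1](t_1)F[\varphi_2](t_2)\,G_n(t_1,t_2)\,dt_1\,dt_2,
\qquad G_n(t_1,t_2):=\mathbf{Cov}\{\xi^A_n(t_1),\xi^A_n(t_2)\},
\]
so it suffices to find the pointwise limit $G(t_1,t_2)=\lim_n G_n(t_1,t_2)$ and to pass to the limit under the integral sign. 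The latter is immediate from the Schwarz inequality and (\ref{vGxi<}), which give $|G_n(t_1,t_2)|\le 2C_Aw^2|t_1|\,|t_2|$; together with (\ref{condF}) this makes the dominated convergence theorem applicable. Once $G$ is found, (\ref{CGOE}) follows by substituting the expression for $G$ and using $\int F[\varphi](t)\,\Delta e(t)\,dt=\varphi(\lambda_2)-\varphi(\lambda_1)=-\Delta\varphi$ (see (\ref{invFT}), (\ref{deltaf}), (\ref{e12})) together with the representations (\ref{F2}), (\ref{F3}) of Lemma \ref{l:eq}.

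To identify $G$ I would derive an integral equation for $G_n$. Differentiating in $t_1$ and using $\partial_{t_1}\xi^A_n(t_1)=i\,\Tr(M^{(n)}U(t_1)A^{(n)})$ gives $\partial_{t_1}G_n(t_1,t_2)=i\,\mathbf{Cov}\{\Tr(M^{(n)}U(t_1)A^{(n)}),\xi^A_n(t_2)\}$; the Gaussian integration-by-parts formula (\ref{diffga}) in the entries of $M^{(n)}$, followed by the differentiation rules (\ref{Dxi})--(\ref{DUAUlm}) of Proposition \ref{p:Duh}, expresses the right-hand side through expectations of products of entries of $U(t_1)$, $U(t_2)$ and $A^{(n)}$. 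Using the variance bound (\ref{vGxi<}) for $\xi^A_n$ and the order-$n^{-2}$ variance bound for $v_n=n^{-1}\Tr U$ coming from the Poincar\'e inequality (\ref{Nash}), together with the decoupling Corollary \ref{c:difgen}, one replaces the random coefficients $v_n$ by their deterministic limit $v$ (see (\ref{vnvt})) and discards the negligible terms; besides the self-consistent term involving $G_n$ itself, one keeps the $O(1)$ contributions coming from the cross-covariances $\mathbf{Cov}\{\Tr U(\cdot),\Tr A^{(n)}U(\cdot)\}$ and from the bilinear-in-$A$ traces. Integrating back in $t_1$ and using $G_n(0,t_2)=0$ (since $\xi^A_n(0)=\Tr A^{(n)}$ is non-random), one obtains in the limit an equation of the type (\ref{intrel}) for $G(\cdot,t_2)$ whose kernel is built from $v$, so that Proposition \ref{p:Four}(iii) and Lemma \ref{l:eq} apply.

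The main work --- and the expected main obstacle --- is to make the inhomogeneous term of this equation explicit, i.e. to compute the limits of
\[
n^{-1}\mathbf{E}\{\Tr(U(a)A^{(n)}U(b)A^{(n)})\},\qquad
n^{-1}\mathbf{E}\{\Tr(U(a)A^{(n)}U(b)A^{(n)T})\},\qquad
n^{-2}\mathbf{E}\{\Tr(U(a)A^{(n)})\,\Tr(U(b)A^{(n)})\}
\]
(and the analogous convolved quantities) in terms of $T_A$ (from (\ref{TA})), $w$, $v$ and $T_{A(A+A^T)}=\lim_n n^{-1}\Tr\big(A^{(n)}(A^{(n)}+A^{(n)T})\big)$. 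For the GOE this uses orthogonal invariance: $\mathbf{E}\{U_{ij}(a)U_{kl}(b)\}$ is a linear combination of $\delta_{ij}\delta_{kl}$ and $\delta_{ik}\delta_{jl}+\delta_{il}\delta_{jk}$ with coefficients governed by $\mathbf{E}\{v_n\}$ and by the limiting two-point function of $\Tr U$ (itself satisfying an equation of the same type). The $\delta_{ij}\delta_{kl}$ pattern produces the $n^{-2}(\Tr A^{(n)})^2\to T_A^2$ contribution and, through the self-consistent structure, the kernel of (\ref{F3e}); the $\delta_{ik}\delta_{jl}+\delta_{il}\delta_{jk}$ pattern produces, via (\ref{1}), the $T_{A(A+A^T)}$ contribution and the inhomogeneous term of (\ref{F2e}). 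Collecting terms, the equation for $G$ is solved by a linear combination $\alpha F_3(t_1,t_2)+\beta F_2(t_1,t_2)$ with $\alpha=T_A^2$ and $\beta$ a multiple of $T_{A(A+A^T)}/2-T_A^2$, and substituting into the displayed double integral and computing the $t$-integrals as above gives (\ref{CGOE}). The delicate points are the term-by-term bookkeeping of the equation and keeping all error estimates uniform in $t_1,t_2$ on compacts (with at most polynomial growth, so that the integration against $F[\varphi_i]$ converges); the special cases $A^{(n)}=I^{(n)}$ (where $T_{A(A+A^T)}/2-T_A^2=0$, recovering the linear-statistics variance) and $A^{(n)}_{lm}=\sqrt n\,\delta_{jl}\delta_{jm}$ (where $T_A=0$ and the coefficient of the $\rho_{sc}\rho_{sc}$ term is $1$, recovering the matrix-element variance of \cite{Ly-Pa:11}) serve as useful consistency checks.
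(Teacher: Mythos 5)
Your proposal is correct and follows the same overall architecture as the paper's proof: Fourier inversion to reduce to the pointwise limit of $\mathbf{Cov}\{\xi^A_n(t_1),\xi^A_n(t_2)\}$, Poincar\'{e}-based bounds (\ref{vGxi<}), (\ref{Covoc})--(\ref{dCovoc}) to justify dominated convergence and compactness, Duhamel plus Gaussian integration by parts (\ref{diffga}), (\ref{DAU})--(\ref{Dxi}) to derive a self-consistent integral equation for the limiting covariance, and Lemma \ref{l:eq} to solve it. The one genuine departure is in evaluating the inhomogeneous term, namely the limit of $n^{-1}\mathbf{E}\{\Tr(A^{(n)}U(t_1)C^{(n)}U(t_2))\}$: you propose to use the orthogonal-invariance/Wick pattern $\mathbf{E}\{U_{ij}(a)U_{kl}(b)\}=\alpha_n\delta_{ij}\delta_{kl}+\beta_n(\delta_{ik}\delta_{jl}+\delta_{il}\delta_{jk})$ and extract the coefficients from the one- and two-point functions of $\Tr U$, whereas the paper instead runs the Duhamel/integration-by-parts machinery a second time to obtain another closed integral equation (\ref{Fn}) for $F_n$ and solves it, yielding (\ref{F}). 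Both routes give $T_AT_Cv(t_1+t_2)+(T_{AC}-T_AT_C)v(t_1)v(t_2)$; the invariance computation is conceptually transparent (the two tensor patterns line up exactly with the $T_A^2\cdot F_3$ and $(T_{A(A+A^T)}/2-T_A^2)\cdot F_2$ pieces, as you note), while the paper's version avoids Haar-moment asymptotics and keeps the whole argument inside the single framework of Lemma \ref{l:eq}. One small misreference: you invoke the interpolation Corollary \ref{c:difgen} to replace $v_n$ by $v$, but this corollary compares a non-Gaussian variable with its Gaussian surrogate and is not needed in the GOE case; here the replacement follows purely from $\mathbf{Var}\{v_n(t)\}=O(n^{-2})$ (via (\ref{Nash})) and the semicircle limit (\ref{vnvt}), which you also cite. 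Also worth keeping in mind for the write-up: the limiting equation (\ref{Cov3}) for $C^{A,B}$ is coupled to $C^{I,B}$, so one must first close and solve the $A=I$ case (\ref{CovI}) via (\ref{F3}) before plugging back into (\ref{Cov3}); your ``collecting terms'' step implicitly contains this two-stage resolution.
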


\begin{theorem}
\label{t:cltGE} Let $\widehat{M}^{(n)}$ be the GOE matrix (\ref{GOE}), and $%
\varphi :\mathbb{R\rightarrow R}$ satisfies (\ref{condF}). Then the random variable $\xi^{A\circ}_n[\varphi]$ converges in
distribution to the Gaussian random variable with zero mean and the variance
given by \begin{align}
V_{GOE}[\varphi]=&\frac{T_A^2}{2\pi^2}\int_{-2w}^{2w}\int_{-2w}^{2w}\Big(%
\frac{\Delta \varphi }{\Delta \lambda}\Big)^2 \frac{4w^2-\lambda_1\lambda_2}{%
\sqrt{4w^2-\lambda_1^2}\sqrt{4w^2-\lambda_2^2}} d\lambda_{1}d\lambda _{2}
\notag \\
& +(T_{A(A+A^T)}/2-T_{A}^{2})\int_{-2w}^{2w}\int_{-2w}^{2w}\big(\Delta \varphi%
\big)^2\rho _{sc}(\lambda _{1})\rho _{sc}(\lambda _{2})d\lambda _{1}d\lambda
_{2}.  \label{VarGOE}
\end{align}
\end{theorem}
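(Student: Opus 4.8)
The plan is to use the method of characteristic functions, exploiting the Gaussian integration-by-parts formula (\ref{diffga}) and the self-consistent Volterra equation it produces, in the spirit of \cite{Ly-Pa:08,Ly-Pa:09}. Put $Y_n=\xi^{A\circ}_n[\varphi]$ (a real random variable, since $\varphi$, $\widehat M^{(n)}$ and $A^{(n)}$ are real) and $Z_n(x)=\mathbf E\{e^{ixY_n}\}$. Differentiating in $x$ and using $\xi^A_n[\varphi]=\int\xi^A_n(t)F[\varphi](t)\,dt$ from (\ref{xifxit}), together with Fubini (justified by (\ref{condF}) and (\ref{vGxi<})), one gets
\[
Z_n'(x)=i\int F[\varphi](t)\,Y_n(x,t)\,dt,\qquad Y_n(x,t):=\mathbf E\{\xi^{A\circ}_n(t)\,e^{ixY_n}\},
\]
so it suffices to find $\lim_{n\to\infty}Y_n(x,t)$ with enough uniformity in $t$ (growth at most linear in $|t|$ suffices, by the Schwarz inequality, (\ref{vGxi<}) and (\ref{condF})) to pass to the limit under the integral.

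To obtain an equation for $Y_n(x,t)$ in the variable $t$, use $\partial_sU(s)=i\widehat M^{(n)}U(s)$ to write $\xi^{A\circ}_n(t)=i\int_0^t[\Tr(A^{(n)}\widehat M^{(n)}U(s))]^\circ\,ds$, expand $\Tr(A^{(n)}\widehat M^{(n)}U(s))=\sum_{k,l}\widehat M^{(n)}_{kl}(U(s)A^{(n)})_{lk}$, and apply (\ref{diffga}) to each Gaussian entry (with the standard bookkeeping of the symmetric pairs via the $\beta_{lm}$ of Proposition \ref{p:Duh}), using the differentiation formulas (\ref{ParU}), (\ref{Dxi}), (\ref{DAU}). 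Three families of contributions arise: (a) a self-consistent Volterra term $-w^2\int_0^t ds\int_0^s\mathbf E\{v_n(s-s')\}\,Y_n(x,s')\,ds'$, after replacing $v_n$ (see (\ref{vn})) by $\mathbf E\{v_n\}\to v$, the fluctuating part being negligible by the Poincar\'{e} bound (\ref{vGxi<}); (b) a term carrying the factor $ix$, coming from $D_{kl}e^{ixY_n}=ix\big(\int F[\varphi](\tau)\,i\beta_{kl}(U\ast C^{(n)}U)_{kl}(\tau)\,d\tau\big)e^{ixY_n}$ with $C^{(n)}=A^{(n)}+A^{(n)T}$, which upon summation over $k,l$ becomes an $n^{-1}\sum$-type quantity and hence factorizes as $ix\,Z_n(x)$ times an explicit bilinear-in-$U$ expression whose limit is computed exactly as in the proof of Theorem \ref{t:covGE}; and (c) remainders that vanish as $n\to\infty$, bounded via (\ref{vGxi<})--(\ref{VarFG}), (\ref{UAUOn})--(\ref{UAUmm}) and the Schwarz inequality. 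Extracting a subsequence so that $Z_n(x)\to Z(x)$ and passing to the limit, $Y(x,t)=\lim_nY_n(x,t)$ satisfies
\[
Y(x,t)+w^2\int_0^t dt_1\int_0^{t_1}v(t_1-t_2)\,Y(x,t_2)\,dt_2=ix\,Z(x)\,R(t),\qquad R(0)=0,
\]
with $R$ an explicit function of $v$, $\rho_{sc}$, $T_A$, $T_{A(A+A^T)}$; condition (\ref{condQ1}) holding exactly as for (\ref{F1e}) in Lemma \ref{l:eq}, Proposition \ref{p:Four}(iii) gives $Y(x,t)=ix\,Z(x)\,P(t)$ with $P$ explicit.

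Finally, letting $n\to\infty$ in $Z_n(x)=1+i\int_0^x\!\int F[\varphi](t)Y_n(y,t)\,dt\,dy$ yields $Z(x)=1-\int_0^x y\,V\,Z(y)\,dy$, i.e. $Z'(x)=-x\,V\,Z(x)$, $Z(0)=1$, with $V=\int F[\varphi](t)P(t)\,dt$; the same Fourier-transform computation as in Theorem \ref{t:covGE} --- Lemma \ref{l:eq} and formula (\ref{F3}) for the first term of (\ref{VarGOE}), a direct evaluation for the $\rho_{sc}\otimes\rho_{sc}$-term --- identifies $V$ with $V_{GOE}[\varphi]=C_{GOE}[\varphi,\varphi]$. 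Since $Z_n(0)=1$ and $\{Z_n\}$ is uniformly Lipschitz on $\mathbb R$ ($|Z_n'(x)|\le\int|F[\varphi](t)|\,\mathbf{Var}^{1/2}\{\xi^A_n(t)\}\,dt\le\sqrt{2C_A}\,w\int|t|\,|F[\varphi](t)|\,dt<\infty$ by (\ref{vGxi<}), (\ref{condF})), every subsequential limit solves $Z'=-xV_{GOE}[\varphi]Z$, $Z(0)=1$, hence equals $\exp\{-x^2V_{GOE}[\varphi]/2\}$; thus $Z_n(x)\to\exp\{-x^2V_{GOE}[\varphi]/2\}$ for every $x$, and the continuity theorem for characteristic functions gives the asserted convergence in distribution, the limiting law being Gaussian with mean $0$ and variance $V_{GOE}[\varphi]$. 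The main obstacle is the second step: carefully disentangling the Gaussian integration by parts --- isolating the self-consistent Volterra kernel, the $ix$-term that reproduces the covariance of Theorem \ref{t:covGE}, and the negligible remainders --- and justifying the factorization $\mathbf E\{B_ne^{ixY_n}\}=\mathbf E\{B_n\}Z_n(x)+o(1)$ for the $n^{-1}\sum$-type quantities $B_n$ via $\mathbf{Var}\{B_n\}\to0$ (Poincar\'{e} inequality), uniformly enough in the auxiliary time variables to survive integration against $F[\varphi]$.
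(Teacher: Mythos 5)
Your proposal follows essentially the same scheme as the paper's proof: differentiating the characteristic function, writing $Z_n'(x)=i\int F[\varphi](t)Y_n(x,t)\,dt$, applying the Duhamel formula and Gaussian integration by parts (\ref{diffga}) to $Y_n(x,t)$ to obtain a self-consistent Volterra equation in $t$ whose inhomogeneity carries a factor $ixZ_n(x)$, invoking Lemma \ref{l:eq} to solve it, identifying the resulting coefficient with $V_{GOE}[\varphi]=C_{GOE}[\varphi,\varphi]$ from Theorem \ref{t:covGE}, and then closing the ODE $Z'=-xV_{GOE}Z$ via a compactness/uniqueness argument and the continuity theorem. Two small points the paper makes explicit that you elide: the Volterra equation for $Y^A$ actually couples to a second one for $Y^I$ (the $A^{(n)}=I$ case), giving a two-step hierarchy rather than a single equation with closed-form $R$; and the paper first proves the claim under the stronger moment condition $\int(1+|t|^2)|F[\varphi](t)|\,dt<\infty$ (needed for the equicontinuity bounds on $\partial_x Y_n$ and $\partial_t Y_n$ used to extract convergent subsequences of $Y_n$) and then passes to (\ref{condF}) by an approximation argument --- neither affects the substance of your argument.
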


\begin{remark}\label{r:GOE} Note that  $V_{GOE}[\varphi]$ can be written in the  form
\begin{align}
V_{GOE}[\varphi]=T_A^2\cdot V^{\mathcal{N}_n}_{GOE}[\varphi] +(T_{A(A+A^T)}/2-T_{A}^{2})\cdot V^{jj}_{GOE}[\varphi],  \label{VarGOE1}
\end{align}
where
\begin{align}
& V^{\mathcal{N}}_{GOE}[\varphi]=\frac{1}{2\pi^2}\int_{-2w}^{2w}\int_{-2w}^{2w}\Big(%
\frac{\Delta \varphi }{\Delta \lambda}\Big)^2 \frac{4w^2-\lambda_1\lambda_2}{%
\sqrt{4w^2-\lambda_1^2}\sqrt{4w^2-\lambda_2^2}} d\lambda_{1}d\lambda _{2} \label{VN}
\end{align}
and
\begin{align}
V^{jj}_{GOE}[\varphi]=\int_{-2w}^{2w}\int_{-2w}^{2w}\big(\Delta \varphi%
\big)^2\rho _{sc}(\lambda _{1})\rho _{sc}(\lambda _{2})d\lambda _{1}d\lambda
_{2}  \label{Vjj}
\end{align}
 are the limiting  variances corresponding to the  linear eigenvalue
statistics (\ref{Nn}) and  matrix
elements (\ref{fjj}), respectively (compare with the results of \cite{Ly-Pa:08} and  \cite{Ly-Pa:09}).

Besides, we have for limiting variance $V^{(M\eta,\eta)}_{GOE}[\varphi]$,
corresponding to the bilinear form (\ref{bf}):
 \begin{align}
V^{(M\eta,\eta)}_{GOE}[\varphi]=V^{jj}_{GOE}[\varphi]=\int_{-2w}^{2w}\int_{-2w}^{2w}\big(\Delta \varphi%
\big)^2\rho _{sc}(\lambda _{1})\rho _{sc}(\lambda _{2})d\lambda _{1}d\lambda
_{2} . \label{Veta}
\end{align}
\end{remark}
\begin{proof} {\bf Theorem \ref{t:covGE}.} Since $\mathbf{Cov}\{\xi^{A}_n[\varphi_{1}],\xi^{A}_n[\varphi_{2}]\}$ is linear in $\varphi _{1,2}$, it suffices to consider real valued $%
\varphi _{1,2}$. %
Writing the Fourier inversion formula (\ref{invFT})
and using the linearity of $\mathbf{Cov}\{\xi^{A}_n[\varphi_{1}],\xi^{A}_n[\varphi_{2}]\}$ in $\varphi _{1,2}$ and the spectral theorem
for symmetric  matrices, we obtain%
\begin{equation}\label{CF1F2}
\mathbf{Cov}\{\xi^{A}_n[\varphi_{1}],\xi^{A}_n[\varphi_{2}]\}
=\int \int \mathbf{Cov}\{\xi^{A}_n(t_{1}),\xi^{A}_n(t_{2})\}
F[\varphi_{1}](t_{1})F[\varphi_{2}](t_{2})dt_{1}dt_{2}
\end{equation}%
with $\xi^{A}_n(t_{})$ of (\ref{xint}). Similar to (\ref{vGxi<}) with the help of Poincar\'{e} inequality (\ref{Nash})
it can be shown that
\begin{align*}
&\mathbf{Var}\{\xi^{A\prime}_n(t)\}\leq
 ct^2,
\end{align*}
where $\xi^{A\prime}_n(t)=i\mathrm{Tr}A^{(n)}\widehat{M}e^{it_{1}\widehat{M}}$. This,  (\ref{vGxi<}),  and the Schwarz inequality imply the bounds
\begin{align}
&\left|\mathbf{Cov}\{\xi^{A}_n(t_{1}),\xi^{A}_n(t_{2})\}\right\vert
\leq c|t_{1}||t_{2}|,
\label{Covoc}
\\
&\left|\partial\ \mathbf{Cov}\{\xi^{A}_n(t_{1}),\xi^{A}_n(t_{2})\}/\partial t_{i}\right\vert
\leq c|t_{1}||t_{2}|,\quad i=1,2. \label{dCovoc}
\end{align}%
Hence, in view of (\ref{condF}) the integrand in (\ref{CF1F2}) admits an integrable and $n$%
-independent upper bound, and by dominated
convergence theorem it suffices to prove the pointwise in $t_{1,2}$ convergence of $\mathbf{Cov}\{\xi^{A}_n(t_{1}),\xi^{A}_n(t_{2})\}$ to a certain limit
as $n\rightarrow \infty $, implying (\ref{CGOE}). It also follows from (\ref{Covoc})
-- (\ref{dCovoc}) that there exists a convergent subsequence
 $\{\mathbf{Cov}\{\xi^{A}_{n_j}(t_{1}),\xi^{A}_{n_j}(t_{2})\}\}_{j=1}^\infty$. We will show that every such a subsequence
has the same limit leading through (\ref{CF1F2}) to (\ref{CGOE}).

We can confine
ourselves to $t_{1,2}\geq 0$, because $\mathbf{Cov}\{\xi^{A}_n(-t_{1}),\xi^{A}_n(t_{2})\}
=\overline{\mathbf{Cov}\{\xi^{A}_n(t_{1}),\xi^{A}_n(t_{2})\}}$.%

Consider
\begin{equation}
\mathbf{Cov}\{\xi^{A}_n(t_{1}),\xi^{B}_n(t_{2})\}=\mathbf{E}%
\{\xi^{A}_n(t_{1})\xi^{B\circ}_n(t_{2})\},
\label{CAB}
\end{equation}%
putting
in appropriate moment $A^{(n)}=B^{(n)}$. Here $\xi^{A,B}_n(t_{1})$
correspond to $A^{(n)}$, $B^{(n)}$ satisfying  (\ref{1}) -- (\ref{TA}) (see
 (\ref{xint})). By using  Duhamel formula (\ref{Duh}) we can write
\begin{eqnarray*}
\mathbf{Cov}\{\xi^{A}_n(t_{1}),\xi^{B}_n(t_{2})\} &=&\mathbf{E}%
\{\mathrm{Tr}A^{(n)}U(t_{1})\xi^{B\circ}_n(t_{2})\} \notag\\
&=&i\int_{0}^{t_{1}}\sum_{l,m=1}^{n}\mathbf{E}\{\widehat{M}_{lm}(A^{(n)}
U)_{lm}(t_{3})\xi^{B\circ}_n(t_{2})\}dt_{3}.
\end{eqnarray*}%
 Applying  differentiation formula (\ref{diffga}) with (\ref{Wmom12}) written in the form
\begin{equation}
 \mathbf{E}\{(W_{lm}^{(n)})^{2}\}=w^{2}\beta_{lm}^{-1}  \label{Wmom2}
\end{equation}
(see (\ref{beta})), and then (\ref{DAU}) --
(\ref{Dxi}), we obtain:
\begin{eqnarray}
\mathbf{Cov}\{\xi^{A}_n(t_{1}),\xi^{B}_n(t_{2})\} &=&
iw^{2}\int_{0}^{t_{1}}\frac{1}{n}\sum_{l,m=1}^{n}\beta_{lm}^{-1}\mathbf{E}\{D_{lm}[(A^{(n)}
U)_{lm}(t_{3})\xi^{B\circ}_n(t_{2})]\}dt_{3}\label{Cov1} \\
&=&
-\frac{w^{2}}{n}
\int_{0}^{t_{1}}dt_{3}\int_{0}^{t_{3}}\mathbf{E}%
\{[\xi^{I}_n(t_{3}-t_{4})\xi^{A}_n(t_{4})+\xi^{A}_n(t_{3})]\xi^{B\circ}_n(t_{2})\}dt_{4}   \notag\\
&&-\frac{w^{2}}{n}
\int_{0}^{t_{1}}dt_{3}\int_{0}^{t_{2}}\mathbf{E}%
\{\mathrm{Tr}A^{(n)}U(t_{3}+t_{4})(B^{(n)}+B^{(n)T})U(t_{2}-t_{4})\}dt_{4}.  \notag
\end{eqnarray}%
Putting
\begin{align}
&v_{n}=v^\circ_{n}+\overline{v}_{n},\;\overline{v}_{n}=\mathbf{E}%
\{v_{n }\}, \label{Evn} \\
&\xi^{A}_n=\xi^{A\circ}_n+\overline{\xi}^{A}_n,\;\overline{\xi}^{A}_n=\mathbf{E}%
\{\xi^{A}_n\},\label{Exin}
\end{align}%
we get from (\ref{Cov1})
\begin{align}
\mathbf{Cov}\{\xi^{A}_n(t_{1}),\xi^{B}_n(t_{2})\}
=&
-{w^{2}}
\int_{0}^{t_{1}}dt_{3}\int_{0}^{t_{3}}\overline{v}_{n}(t_{3}-t_{4})\mathbf{E}%
\{\xi^{A}_n(t_{4})\xi^{B\circ}_n(t_{2})\}dt_{4}  \label{Cov2} \\
&-\frac{w^{2}}{n}
\int_{0}^{t_{1}}dt_{3}\int_{0}^{t_{3}}\overline{\xi}^{A}_n(t_{3}-t_{4})\mathbf{E}%
\{\xi^{I}_n(t_{4})\xi^{B\circ}_n(t_{2})\}dt_{4} \notag\\
&-\frac{w^{2}}{n}
\int_{0}^{t_{1}}dt_{3}\int_{0}^{t_{2}}\mathbf{E}%
\{\mathrm{Tr}A^{(n)}U(t_{3}+t_{4})(B^{(n)}+B^{(n)T})U(t_{2}-t_{4})\}dt_{4} \notag
\\
&+r_{n}(t_1,t_2),  \notag
\end{align}
where
\begin{eqnarray}
r_{n}(t_1,t_2)
&=&-{w^{2}}
\int_{0}^{t_{1}}\mathbf{E}%
\{[(v^{\circ}_{n}*\xi^{A\circ}_n)(t_{3})+t_{3}n^{-1}\xi^{A}_n(t_{3})]\xi^{B\circ}_n(t_{2})\}dt_{3}.  \label{rn}
\end{eqnarray}
 With the help of Poincar\'{e} inequality (\ref{Nash})
it can be shown that
\begin{equation*}
\mathbf{Var}\left\{v^{\circ}_{n}\xi^{A\circ}_n\right\} =O(n^{-2}),\quad n\rightarrow\infty,
\end{equation*}%
 which together with  (\ref{vGxi<}) yield
\begin{equation}
r_n(t_1,t_2)=O(n^{-1}),\quad n\rightarrow\infty.
\label{rO}
\end{equation}
  Consider convergent subsequence  $\{\mathbf{Cov}\{\xi^{A}_{n_j}(t_{1}),\xi^{B}_{n_j}(t_{2})\}\}_{j=1}^\infty$
and denote
\begin{equation*}
C^{A,B}(t_1,t_2):=\lim_{n_{j}\rightarrow\infty}\mathbf{Cov}\{\xi^{A}_{n_j}(t_{1}),\xi^{B}_{n_j}(t_{2})\}. \end{equation*}
It follows from (\ref{vnvt}), (\ref{xiTA}), and (\ref{Cov2}) -- (\ref{rO})
that  $C^{A,B}(t_1,t_2)$ satisfies the equation
\begin{align}
C^{A,B}(t_1,t_2)+&
{w^{2}}
\int_{0}^{t_{1}}(v*C^{A,B}(\cdot,t_2))(t_3)dt_{3}  \label{Cov3} \\
=&-{w^{2}}T_A
\int_{0}^{t_{1}}(v*C^{I,B}(\cdot,t_2))(t_3)dt_{3} \notag\\
&-{w^{2}}
\int_{0}^{t_{1}}dt_{3}\int_{0}^{t_{2}}\lim_{n_{j}\rightarrow\infty}\mathbf{E}%
\{n_j^{-1}\mathrm{Tr}A^{(n)}U(t_{3}+t_{4})(B^{(n)}+B^{(n)T})U(t_{2}-t_{4})\}dt_{4}.  \notag
\end{align}
In particular, putting in (\ref{Cov3}) $A^{(n)}=I$ we get
\begin{equation}
C^{I,B}(t_1,t_2)
+
{2w^{2}}
\int_{0}^{t_{1}}(v*C^{I,B}(\cdot,t_2))(t_3)dt_{3} =  -2{w^{2}}T_Bt_2
\int_{0}^{t_{1}}v(t_{2}+t_{3})dt_{3}, \label{CovI}
\end{equation}
so that by (\ref{F3})
\begin{equation}
C^{I,B}(t_1,t_2)=\frac{T_B}{2\pi ^{2}}\int_{-2w}^{2w}\; \int_{-2w}^{2w}\frac{\Delta e(t_{1})\Delta e(t_{2})}{(\lambda_{1} -\lambda_{2} )^{2}}\frac{4w^2-\lambda_1\lambda_2}{%
\sqrt{4w^2-\lambda_1^2}\sqrt{4w^2-\lambda_2^2}} d\lambda_{1}d\lambda _{2}.\label{CI}
\end{equation}
\noindent
Now let us calculate the second term in the r.h.s. of (\ref{Cov3}).
Consider $\mathbf{E}%
\{n^{-1}\mathrm{Tr}A^{(n)}U(t_{1})C^{(n)}U(t_{2})\}$. We have by (\ref{EUjk})
\begin{eqnarray}
\mathbf{E}%
\{n^{-1}\mathrm{Tr}A^{(n)}U(t_{1})C^{(n)}U(t_{2})\}=T_{AC}\overline{v}_n(t_1)\overline{v}_{n}(t_2)+F_n(t_{1},t_{2}),
  \label{AUCU}
\end{eqnarray}
where $\overline{v}_n$ is defined in (\ref{Evn}) and
\begin{eqnarray*}
F_n(t_{1},t_{2})=n^{-1}\sum_{j,l=1}^n\mathbf{E}%
\{(UC^{(n)})_{jl}(t_{1})(U^{\circ}A^{(n)})_{lj}(t_{2})\}.
\end{eqnarray*}
Repeating steps leading from (\ref{CAB}) to (\ref{Cov3}) and using consequently  Duhamel formula
(\ref{Duh})
 and the
differentiation formulas (\ref{diffga}) and  (\ref{DAU}) --
(\ref{Dxi}), one can easily get
\begin{align}
F_n(t_{1},t_{2})
&+{w^{2}}
\int_{0}^{t_{1}}(\overline{v}_{n}*F_n(\cdot,t_{2}))(t_{3})dt_{3} \label{Fn}
\\
&=-{w^{2}}
\int_{0}^{t_{1}}dt_{3}\int_{0}^{t_{2}}n^{-1}\overline{\xi}^{A}_n(t_{3}+t_{4})\cdot
n^{-1}\overline{\xi}^{C}_n(t_{2}-t_{4})dt_{4}+r^{1}_{n}(t_1,t_2),  \notag
\end{align}
with $\overline{\xi}^{A}_n$   of  (\ref{Exin}), and
\begin{align*}
r^{1}_{n}(t_1,t_2)
=-{w^{2}}&
\int_{0}^{t_{1}}dt_{3}\mathbf{E}%
\Big\{\Big[(v^{\circ}_{n}*F _n(\cdot,t_{2}))(t_{3})
+t_{3}n^{-1}F_n(t_{3},t_{2})  \label{rn1}
\\
&+n^{-2}
\int_{0}^{t_{2}}\mathrm{Tr}A^{(n)T}U(t_{3}+t_{4})C^{(n)}U(t_{2}-t_{4})+\xi^{A\circ}_n(t_{3}+t_{4})\xi^{C\circ}_n(t_{2}-t_{4})\big)dt_{4}\Big]\Big\}.
\end{align*}
It follows from (\ref{vGxi<}) that
\begin{equation*}
r^{1}_n(t_1,t_2)=O(n^{-1}),\quad n\rightarrow\infty.
\end{equation*}
This,  (\ref{vnvt}), (\ref{xiTA}), and  (\ref{Fn}), yield
for $F=\lim_{n_j\rightarrow\infty}F_{n_j}$:
\begin{align}
F(t_{1},t_{2})
&+{w^{2}}
\int_{0}^{t_{1}}(v*F(\cdot,t_{2}))(t_{3})dt_{3} \notag
\\
&=-{w^{2}}T_AT_C
\int_{0}^{t_{1}}dt_{3}\int_{0}^{t_{2}}v(t_{3}+t_{4})v(t_{2}-t_{4})dt_{4}.
 \label{eF}
\end{align}
Hence, $F(t_{1},t_{2}) =T_AT_CF_1(t_1,t_2)$ with $F_1$ of (\ref{F1}). This,
(\ref{AUCU}), and (\ref{vnvt}) yield
\begin{align*}
F(t_{1},t_{2})& =
-{w^{2}}T_AT_C
\int_{0}^{t_{1}}dt_{3}\int_{0}^{t_{2}}v(t_{1}-t_{3})v(t_{3}+t_{4})v(t_{2}-t_{4})dt_{4}
 \notag
 \\
 &={}\frac{T_AT_C}{2}
\int_{-2w}^{2w}\int_{-2w}^{2w}\Delta e(t_1)\Delta e(t_2)\rho _{sc}(\lambda _{1})\rho _{sc}(\lambda
_{2})d\lambda _{1}d\lambda _{2}
\\
&=T_AT_C\big(v(t_1+t_2)-v(t_1)v(t_2)\big),
\end{align*}
where $\Delta e$ is defined in (\ref{e12}). This, (\ref{vt}), and (\ref{AUCU}) leads to
\begin{eqnarray}
\lim_{n_{j}\rightarrow\infty}\mathbf{E}%
\{n^{-1}\mathrm{Tr}A^{(n)}U(t_{1})C^{(n)}U(t_{2})\}=T_AT_Cv(t_1+t_2)+\big(T_{AC}-T_AT_C\big) v(t_1)v(t_2).
  \label{F}
\end{eqnarray}
Putting (\ref{CI}) and (\ref{F}) with $C^{(n)}=B^{(n)}+B^{(n)T}$ in (\ref{Cov3}) we obtain the equation
for $C^{A,B}$
\begin{align*}
C^{A,B}(t_1,t_2)
&+
{w^{2}}
\int_{0}^{t_{1}}(v*C^{A,B}(\cdot,t_2))(t_3)dt_{3}   \\
&=-{w^{2}}
\int_{0}^{t_{1}}\Big[T_A(v*C^{I,B}(\cdot,t_2))(t_3)+2T_AT_Bt_{2}v(t_3+t_2) \notag\\
&\quad\quad\quad\quad\big(T_{A(B+ B^T)}-2T_AT_B\big)
\int_{0}^{t_{2}}v(t_{3}+t_{4})v(t_{2}-t_{4})dt_{4}\Big]dt_{3},  \notag
\end{align*}
solving which with the help of  Lemma \ref{l:eq} we finally get
\begin{align*}
C^{A,B}(t_1,t_2)=&\frac{T_{A}T_B}{2\pi ^{2}}\int_{-2w}^{2w}\; \int_{-2w}^{2w}\frac{\Delta e(t_{1})\Delta e(t_{2})}{(\lambda_{1} -\lambda_{2} )^{2}}\frac{4w^2-\lambda_1\lambda_2}{%
\sqrt{4w^2-\lambda_1^2}\sqrt{4w^2-\lambda_2^2}} d\lambda_{1}d\lambda _{2}
\\
&+\big(T_{A(B+B^T)}/2-T_AT_B\big)
\int_{-2w}^{2w}\int_{-2w}^{2w}\Delta e(t_1)\Delta e(t_2)\rho _{sc}(\lambda _{1})\rho _{sc}(\lambda
_{2})d\lambda _{1}d\lambda _{2}.\notag
\end{align*}
 Putting this expression with $A^{(n)}=B^{(n)}$
 in (\ref{CF1F2})
  we obtain (\ref{CGOE})
and so prove the theorem.
%
%
\end{proof}

\medskip

\noindent \begin{proof} {\bf Theorem \ref{t:cltGE}.} The detailed proofs of CLTs for
linear eigenvalue statistics (\ref{Nn}) and for matrix elements (\ref{fjj})
are given in \cite{L-Pa:08,Ly-Pa:08} and \cite{Ly-Pa:09}, respectively. The
proof of   Theorem \ref{t:cltGE} follows the same scheme, so here we only
outline its main steps.

By the continuity theorem for characteristic functions it suffices to show
that if  \begin{equation}
Z^{A}_{n}(x)=\mathbf{E}\big\{e^{ix\xi_{n}^{A\circ }[\varphi ]}\big\}%
,  \label{Znx}
\end{equation}%
then for any $x\in \mathbb{R}$
\begin{equation}
\lim_{n\rightarrow\infty}Z^{A}_n(x)=Z^{A}(x), \label{limZn}
\end{equation}
where
\begin{equation}
Z^{A}(x)=\exp \big\{-x^{2}V_{GOE}[\varphi ]/2\big\}.  \label{chfGa}
\end{equation}%
We obtain (\ref{chfGa}), hence the theorem, for a
class of test functions satisfying condition
\begin{equation}
\int (1+|t|^{2})|F[\varphi ](t)|dt<\infty .  \label{condF2}
\end{equation}%
 (cf (\ref{condF})),  then  the theorem can be extended to the
class of  functions satisfying (\ref{condF})  by using a
standard approximation procedure.

Since $Z_{n}^{A}(0)=1$ and $%
Z_{n}^{A}(x)$ is continuous, we can write the relation
\begin{equation}
Z_{n}^{A}(x)=1+\int_{0}^{x}Z_{n}^{A}{}'(y)dy,\quad x\in \mathbb{R,}
\label{ZnZnpj}
\end{equation}%
showing that it suffices to prove that the sequence $\{Z_{n}^{A}{}'\}$
is uniformly bounded on any finite interval and that for any converging
subsequences $\{Z_{n_{i}}^{A}\}_{i\geq 1}$ and $\{Z_{n_i}^{A}{}'\}_{i\geq
1}$ there exists $Z^{A}(x)$, such that
\begin{equation}
\lim_{i\rightarrow \infty }Z^{A}_{n_{i}}(x)=Z^{A}(x),  \label{lz1}
\end{equation}
and
\begin{align}
\lim_{i\rightarrow \infty }Z_{n_i}^{A}{}'(x)=-xV_{GOE}[\varphi ]Z^{A}(x).  \label{lz2}
\end{align}
Indeed, if yes, then $Z^{A}(x)$ is a continuous function satisfying equation
\begin{equation}
Z^{A}(x)=1-V_{GOE}[\varphi ]\int_{0}^{x}yZ^{A}(y)dy,\quad x\in\mathbb{R,}  \label{eqzga}
\end{equation}%
which is uniquely soluble in the class of bounded continuous
functions, and its solution is evidently (\ref{chfGa}).

We denote
\begin{equation}
e_{n}(x)=\exp\{ix\xi_{n}^{A\circ }[\varphi ]\},\quad   \label{ejn}
\end{equation}%
and write according to (\ref{invFT}) and (\ref{Znx})%
\begin{equation}
Z_{n}^{A\prime }(x)=i\mathbf{E}\left\{ \xi_{n}^{A\circ }[\varphi
]e^{ix\xi_{n}^{A\circ }[\varphi ]}\right\} =i\int F[\varphi
](t)Y^A_{n}(x,t)dt, \label{dZY}
\end{equation}%
where%
\begin{equation}
Y^A_{n}(x,t)=\mathbf{E}\left\{ \xi_{n}^{A}(t)e^{\circ }_{n}(x)\right\},
\label{Ynx}
\end{equation}%
and $ \xi_{n}^{A}(t)$ is defined in (\ref{xint}). It follows from the Schwarz inequality and (\ref{vGxi<})
that\begin{equation*}
|Y^A_{n}(x,t)|\leq c|t|.
\end{equation*}%
This and (\ref{condF2}) yield that the sequence $Z_{n}^{A\prime }$
is uniformly bounded. Hence, there is a convergent subsequence $%
Z_{n_{i}}^{A}$, and by the dominated convergence theorem to find its
limit as $n\rightarrow\infty$ it suffices to find the pointwise limit of the
corresponding  subsequence $Y^A_{n_{i}}$.
It also can be shown with the help of  Poincar\'{e} inequality (\ref{Nash})
and (\ref{condF2}) that sequences $\{\partial Y^A_{n}/\partial x\}$ and $%
\{\partial Y^A_{n}/\partial t\}$ are uniformly bounded in $(t,x)\in K\subset%
\mathbb{R}^2_+$, $n\in\mathbf{N}$, for any bounded $K$, so that the sequence
$\{Y^A_{n}\}$ is equicontinuous on any finite set of $\mathbb{R}^2_+$, and
contains convergent subsequences.
Hence, for any
converging subsequence $\{Z_{n_{i}}^{A}\}$ (see (\ref{lz1})) there is a
converging subsequence $\{Y^A_{n'_{i}}\}$  and continuous function $Y^{A}$ (which
obviously depends on $\{Z_{n_{i}}^{A}\}$) such that
\begin{equation}
\lim_{n^{\prime }_i\rightarrow\infty}Y^{A}_{n^{\prime }_i}=Y^{A},\quad
\lim_{n^{\prime }_i\rightarrow\infty}Z^{A}_{n^{\prime }_i}=Z^{A}.  \label{YZ}
\end{equation}
We will show now that $Y^{A}$ satisfies certain integral equation leading
through (\ref{dZY}) to (\ref{eqzga}), hence, to (\ref{chfGa}).  This will
finish the proof of the theorem under condition (\ref{condF2}).

Applying consequently  the Duhamel formula (\ref{Duh}) and differentiation formula (\ref{diffga})
with (\ref{Wmom2}), we get %
\begin{align*}
Y^A_{n}(x,t)&=\frac{i}{\sqrt{n}}\int_{0}^{t}\sum_{j,k=1}^{n}\mathbf{E}%
\{\widehat{W}_{jk}(UA^{(n)})_{kj}(t_{1})e_{n}^{\circ }(x)\}dt_{1}
\\
&=\frac{iw^{2}}{n}\int_{0}^{t}\sum_{j,k=1}^{n}\beta^{-1} _{jk}\mathbf{E}%
\{D_{jk}\big((UA^{(n)})_{kj}(t_{1})e_{n}^{\circ }(x)\big)\}dt_{1},
\end{align*}%
where $D_{jk}=\partial /\partial M_{jk}$. It follows from (\ref{Dxi}) that
\begin{equation}
D_{jk}{e}_{n}(x)=-\beta _{jk}xe_{n}(x)\int
(U*C^{(n)}U)_{jk}(\theta)F[\varphi ](\theta)d\theta. \label{ParE}
\end{equation}%
This, (\ref{DAU}), (\ref{Evn}) -- (\ref{Exin}), and relation $e_{n}=e^\circ_{n}+Z^{A}_{n}$ yield
\begin{align*}
Y^A_{n}(x,t)=& -w^{2}\int_{0}^{t}
\big[\overline{v}_{n}*Y^A_{n}(x,\cdot)+\overline{\xi}_{n}*Y^I_{n}(x,\cdot)\big](t_{1}) dt_{1} \\
& -iw^{2}xZ^{A}_{n}(x)\int_{0}^{t}dt_{1}\int F[\varphi
](\theta)d\theta\int_0^{\theta}n^{-1}
\mathbf{E}\left\{\mathrm{Tr}A^{(n)}U(\theta-\theta_{1})C^{(n)}U(\theta_{1}+t_{1})\right\}
d\theta_{1}
\\
&+r_n(x,t),
\end{align*}%
where
\begin{align*}
r_n(x,t)=& -w^{2}\int_{0}^{t}
\mathbf{E}\left\{\big[(v^{\circ}_{n}*\xi^{A\circ}_{n})(t_{1})+t_{1}n^{-1}\xi_{n}^{A}(t_{1})\big]e_{n}^{\circ }(x)\right\} dt_{1} \\
& -iw^{2}x\int_{0}^{t}dt_{1}\int F[\varphi
](\theta)d\theta\int_0^{\theta}n^{-1}\mathbf{E}\left\{\mathrm{Tr}A^{(n)}U(\theta-\theta_{1})C^{(n)}U(\theta_{1}+t_{1})e_{n}^{\circ
}(x)\right\} d\theta_{1}.
\end{align*}
With the help of Poincar\'{e} inequality (\ref{Nash})
it can be shown that
\begin{equation*}
\mathbf{Var}\left\{\mathrm{n^{-1}Tr}A^{(n)}U(\theta-\theta_{1})C^{(n)}U(\theta_{1}+t_{1})\right\} \leq c (|\theta|^{2}+|\theta_{1}|^{2}+|t|^{2})n^{-2},
\end{equation*}%
 which together with  (\ref{vGxi<}) and (\ref{condF2}) yield
\begin{equation}
r_n(t_1,t_2)=O(n^{-1}),\quad n\rightarrow\infty.
\label{rn2}
\end{equation}%
This and (\ref{vnvt}) leads to  equation with respect to $Y^A=\lim_{n_j\rightarrow\infty} Y^A_{n_j}$:
\begin{align}
Y^{A}(x,t)+&w^{2}\int_{0}^{t}({v}_{}*Y^A_{}(x,\cdot))(t_1)dt_{1}\label{eqYA}
\\
=&-w^{2}T_A\int_{0}^{t}
({v}_{}*Y^I_{}(x,\cdot))(t_1)dt_{1}\notag
\\
&-iw^{2}xZ^{A}(x)\int_{0}^{t}dt_{1}\int F[\varphi
](\theta)d\theta\int_0^{\theta}
\lim_{n_{j}\rightarrow\infty}\mathbf{E}%
\{n^{-1}\mathrm{Tr}A^{(n)}U(\theta-\theta_{1})C^{(n)}U(\theta_{1}+t_{1})\}d\theta_{1},
\notag
\end{align}%
where  $Y^I$ is a solution of the equation
\begin{align}
Y^{I}(x,t)+2&w^{2}\int_{0}^{t}({v}_{}*Y^I_{}(x,\cdot))(t_1)dt_{1}\label{eqYI}
\\
&=-2iw^{2}xZ^{A}(x)T_A\int_{0}^{t}dt_{1}\int F[\varphi
](\theta){\theta} v(\theta+t_{1})d\theta. \notag
\end{align}%
Comparing pairs of equations (\ref{Cov3}) -- (\ref{CovI}) and (\ref{eqYA})
--
(\ref{eqYI}) one can see that
\begin{align}
Y^{A}(x,t)=-ixZ^{A}(x)\int C^{A,A}(t,\theta)F[\varphi
](\theta)d\theta,
 \notag
\end{align}%
where $C^{A,A}$ is given by  (\ref{CAB}) with $A=B$.
 This and (\ref{dZY}) yield
\begin{equation*}
Z^{A\prime }(x)=xZ^{A}(x)\int \int C^{A,A}(t,\theta)F[\varphi
](t)F[\varphi ](\theta)dtd\theta=xZ^{A}(x)V_{GOE}[\varphi ]
\label{dxZY}
\end{equation*}%
(see (\ref{CF1F2}), (\ref{CAB}), and (\ref{VarGOE})), and so leads to (\ref{eqzga}) and completes the proof of the theorem.
 \end{proof}

\section{ Covariance for $\xi^{A}_n[\varphi]$ in the Wigner case}\label{s:cov}

We show first that if $M^{(n)}$ is the Wigner matrix with uniformly bounded eighth
moments of its entries, and the test-function $\varphi$ is essentially of
class $\mathbf{C}^4$, then the variance of $\xi^{A}_n[\varphi]$ is of the order $%
O(1)$ as $n\rightarrow\infty$. We have

\begin{lemma}
\label{l:Var} Let $M^{(n)}=n^{-1/2}W^{(n)}$ be the real symmetric Wigner matrix (\ref{MW}%
) -- (\ref{Wmom12}). Assume that:

(i) the third moments of its entries do not depend on $j$, $k$, and $n$:%
\begin{equation}  \label{mu3}
\mu _{3}=\mathbf{E}\big\{ ( W_{jk}^{(n)}){}^{3}\big\};
\end{equation}

(ii) the eighth moments are uniformly bounded:%
\begin{equation}  \label{w8<}
w_{8}:=\sup_{n\in \mathbb{N}}\max_{1\leq j,k\leq n}\mathbf{E}\big\{ (
W_{jk}^{(n)}){}^{8}\big\} <\infty.
\end{equation}

\noindent Then for any test-function $\varphi :\mathbb{R\rightarrow C}$,
whose Fourier transform
(\ref{FT}) satisfies the condition
\begin{equation}
\int(1+|t|)^{4}|F[\varphi ](t)|dt<\infty,  \label{F4<}
\end{equation}%
we have the bound
\begin{align}
\mathbf{Var}\{\xi^{A}_n[\varphi]\}:&=\mathbf{E}\{|\xi^{A\circ}_n[\varphi]|^2\}
\notag \\
&\leq c\Big( \int (1+|t|)^{4}|F[\varphi ](t)|dt\Big) ^{2}.
\label{VarF}
\end{align}
\end{lemma}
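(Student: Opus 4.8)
The strategy is to reduce the bound on $\mathbf{Var}\{\xi^{A}_n[\varphi]\}$ to a bound on $\mathbf{Var}\{\xi^{A}_n(t)\}$ for the exponential test function $\varphi(\lambda)=e^{it\lambda}$, exactly as in the GOE case: by \eqref{xifxit} and the Schwarz inequality one has $\mathbf{Var}\{\xi^{A}_n[\varphi]\}\leq\big(\int\mathbf{Var}^{1/2}\{\xi^{A}_n(t)\}\,|F[\varphi](t)|\,dt\big)^2$, so it suffices to show $\mathbf{Var}\{\xi^{A}_n(t)\}\leq c(1+|t|)^8$ uniformly in $n$. In the GOE case this came directly from the Poincar\'e inequality \eqref{Nash} together with the exact formula \eqref{Dxi} for $D_{lm}\xi^{A}_n$ and the identity $\sum_{l,m}|(U*C^{(n)}U)_{lm}|^2=|t|^2\Tr A^{(n)T}A^{(n)}$; for a general Wigner matrix there is no Poincar\'e inequality, so this is where the work lies.

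First I would replace the exact integration-by-parts \eqref{diffga} by the finite-order cumulant expansion \eqref{difgen} from Proposition \ref{l:difgen}(i), applied with $p$ chosen so that one controls the variance: since we are told the eighth moments are finite, I would use the expansion up to the term involving $\kappa_4$, i.e. $p=2$ or $p=3$, with the remainder $\varepsilon_p$ bounded via \eqref{b3} using $\sup|\Phi^{(p+1)}|$ and $\mathbf{E}\{|W_{jk}|^{p+2}\}\le$ const. The standard device here (as in \cite{Ly-Pa:08}) is to introduce the martingale-type or direct second-moment computation: write $\mathbf{Var}\{\xi^A_n(t)\}=\mathbf{E}\{\xi^A_n(t)\,\overline{\xi^{A\circ}_n(t)}\}$, apply the Duhamel formula \eqref{Duh} to one factor to produce $i\int_0^t\sum_{l,m}\mathbf{E}\{W_{lm}(A^{(n)}U)_{lm}(t_1)\,\overline{\xi^{A\circ}_n(t)}\}\,dt_1$, and then apply \eqref{difgen} in the variable $W_{lm}$. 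The leading ($\kappa_2$) term reproduces an integral equation of the type \eqref{Cov2}; the higher cumulant terms ($\kappa_3,\kappa_4$ and the remainder) carry extra factors of $D_{lm}$ acting on $(A^{(n)}U)_{lm}$ and on $\overline{\xi^{A\circ}_n(t)}$, each derivative producing (via \eqref{ParU}, \eqref{DAU}, \eqref{ocdlu}, \eqref{dxi<}) a factor bounded by $c|t|$ times entries of $U$ or of $UA^{(n)}U$, together with the normalizing $n^{-1}$ or $n^{-3/2}$ coming from $W=\sqrt n M$ and from the definition of $\xi^A_n$. Summing over $l,m$ and using the Schwarz inequality together with the a priori bounds \eqref{AUOn}--\eqref{UAUmm} (all of which only need \eqref{CA}, i.e. $\Tr A^{(n)T}A^{(n)}\le C_A n$), one sees that every term is $O((1+|t|)^{k})$ for some $k\le 8$ uniformly in $n$, provided $\varphi$'s Fourier transform is integrated against $(1+|t|)^4$, since each of the two factors contributes up to $|t|^4$.

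The main obstacle is the bookkeeping of the higher-cumulant and remainder terms: one must check that after differentiating the product $(A^{(n)}U)_{lm}(t_1)e^\circ$ (or $\overline{\xi^{A\circ}_n}$) the required number of times, every resulting monomial, once summed over $l,m$ and weighted by $n^{-1}$ (from $\mathbf{E}\{W_{lm}^2\}=w^2\beta_{lm}^{-1}$ and $M=n^{-1/2}W$), is $O(1)$ rather than growing with $n$. This hinges on the ``one free index'' phenomenon: $|(A^{(n)}U)_{lm}|\le O(n^{1/2})$ but $\sum_l|(A^{(n)}U)_{lm}|^2\le\Tr A^{(n)T}A^{(n)}=O(n)$, so a single power of $(A^{(n)}U)_{lm}$ summed against a Hilbert--Schmidt-bounded family is $O(n)$, which the prefactor $n^{-1}$ cancels; terms with two such factors are handled by Schwarz and the fact that the companion factor is a bounded entry of $U$ or of $U^\circ$ (with variance $O(n^{-1})$, giving the cross terms). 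I would organize this as: (a) derive the approximate integral equation for $\mathbf{Var}\{\xi^A_n(t)\}$ with controlled error; (b) bound the kernel $w^2\overline v_n$ and invert using the resolvent of Proposition \ref{p:Four}(iii), whose condition \eqref{condQ1} holds because $z+\widetilde Q(z)=\sqrt{z^2-4w^2}\ne0$; (c) conclude $\mathbf{Var}\{\xi^A_n(t)\}\le c(1+|t|)^8$ uniformly in $n$, whence \eqref{VarF} by \eqref{xifxit} and Schwarz. The condition \eqref{F4<} with the fourth power of $|t|$ is exactly what is needed to absorb two factors of $|t|^4$ (one from each $\xi^A_n$) through the Schwarz step in \eqref{var}.
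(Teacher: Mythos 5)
Your reduction via (\ref{xifxit}), (\ref{var}), and (\ref{F4<}) to the pointwise bound $\mathbf{Var}\{\xi^A_n(t)\}\le c(1+|t|)^8$ is exactly what the paper does; the paper's own proof of the lemma is nothing more than this reduction plus a pointer to (\ref{xiv}). Where you part ways with the paper is on how that pointwise bound is established. The paper does not run a cumulant expansion directly on the Wigner covariance; in Lemma \ref{l:main}(i) (Wigner case) it introduces the interpolating matrix $M^{(n)}(s)=s^{1/2}M^{(n)}+(1-s)^{1/2}\widehat M^{(n)}$, writes $V_n(t)=\mathbf{Var}\{\widehat\xi^A_n(t)\}+C^\Delta_n(t,-t)$, controls the GOE anchor by the Poincar\'e bound (\ref{vGxi<}), and then expands only the $\partial_s$-derivative in $C^\Delta_n$ via (\ref{difgen}) with $p=5$. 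Schwarz estimates on the resulting $T_j^{(n)}$ (see (\ref{VT2})--(\ref{VTj}) and (\ref{e5<})) give a quadratic inequality $V_n-c(1+|t|)^4V_n^{1/2}-c(1+|t|)^7\le0$ whose solution closes the argument.

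Your direct route — Duhamel plus (\ref{difgen}) on one copy of $\xi^A_n$ and then Volterra inversion by Proposition \ref{p:Four}(iii) — is a legitimate alternative, but step (b) of your plan understates the central difficulty: the higher-cumulant terms and the remainder $\varepsilon_p$ of the expansion still carry the centered factor $\xi^{A\circ}_n(t)$, so the ``controlled error'' in your approximate integral equation is not $O((1+|t|)^k)$ but $O\bigl((1+|t|)^k(V_n^{1/2}+1)\bigr)$. One cannot simply invert; one must close the resulting implicit inequality (quadratic as above, or a Gronwall-type argument), and you should state this explicitly rather than treat the error as external. The interpolation the paper uses exists precisely to make this closure clean: the GOE part is bounded independently by Poincar\'e, and all non-Gaussian corrections are isolated into a single $s$-integral whose integrand is Schwarz-estimated against $V_n^{1/2}$. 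The bookkeeping you describe — the one-free-index mechanism, the Hilbert--Schmidt bounds (\ref{AUOn})--(\ref{UAUmm}), and the $\sup$-norm bounds on $D^p_{lm}U$ via (\ref{ocdlu}) — is correct and mirrors the paper's term-by-term estimates, so once you add the closure step your plan would go through.
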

The proof of (\ref{VarF}) follows from (\ref{var}), (\ref{F4<}), and bound \begin{equation}
\mathbf{Var}\{\xi^{A}_n(t)\} \leq c(1+|t|)^8  \label{vxi<}
\end{equation}
(see (\ref{xiv})).

\begin{theorem}
\label{t:Cov} Let $M^{(n)}=n^{-1/2}W^{(n)}$ be the real symmetric Wigner matrix (\ref{MW}%
) -- (\ref{Wmom12}), whose  third and fourth moments do not depend
on $j$, $k$, and $n$:%
\begin{equation}
\mu _{3}=\mathbf{E}\big\{(W_{jk}^{(n)}){}^{3}\big\},\quad \mu _{4}=\mathbf{E}%
\big\{(W_{jk}^{(n)}){}^{4}\big\},  \label{mu4}
\end{equation}%
and the eighth moments are uniformly bounded (see (\ref{w8<})). \noindent
Let  $ \{A^{(n)}\}_{n=1}^\infty$
satisfies (\ref{1}) -- (\ref{TA}), $C^{(n)}=A^{(n)}+A^{(n)T}$, and there exist
\begin{align}
& K^{(1)}_{A}=\lim_{n\rightarrow \infty
}n^{-3/2}\sum_{l,m=1}^{n}A^{(n)}_{ll}C^{(n)}_{lm}  \label{K1} \\
& K^{(2)}_{A}=T_{A}\lim_{n\rightarrow \infty }n^{-3/2}\sum_{l,m=1}^{n}C^{(n)}_{lm},
\label{K2}
\\
& K^{(3)}_{A}=\lim_{n\rightarrow \infty }n^{-1}%
\sum_{m=1}^{n}A^{(n)}_{mm}\Big(A^{(n)}_{mm}-n^{-1}\Tr A^{(n)}\Big).
\label{K3}
\end{align}

\noindent
Then we have for any   $\varphi _{1,2}:\mathbb{R\rightarrow R}$   satisfying (\ref{F4<}%
):
\begin{align}
\lim_{n\rightarrow \infty }\mathbf{Cov}\{\xi^{A} _{n}[\varphi _{1}], \xi^{A}
_{n}[\varphi _{2}]\}=&C_{GOE}[\varphi _{1},\varphi _{2}] +C_{\kappa _{3}}[\varphi _{1},\varphi _{2}]+C_{\kappa _{4}}[\varphi _{1},\varphi _{2}], \label{Coff}
 \end{align}%
where $C_{GOE}[\varphi _{1},\varphi _{2}]$ is defined in (\ref{CGOE}),
\begin{align}
&C_{\kappa _{3}}[\varphi _{1},\varphi _{2}]=\frac{\kappa _{3}}{w^{6}}\int_{-2w}^{2w}\int_{-2w}^{2w}\lambda _{1}
\Big(K_{A}^{(1)}\big(\lambda_{2}^2-w^2\big)+K_{A}^{(2)}\Big
(\frac{2w^{4}}{4w^{2}-\lambda_{2}^2}-\lambda_{2}^2\Big)\Big)\label{C3}
\\&\hspace{4.5cm}
\times(\varphi _{1}(\lambda _{1})\varphi _{2}(\lambda _{2})+\varphi _{1}(\lambda _{2})\varphi _{2}(\lambda _{1}))\prod_{j=1}^{2}\rho _{sc}(\lambda
_{j})d\lambda _{j},\notag
\\
&C_{\kappa _{4}}[\varphi _{1},\varphi _{2}]=\frac{\kappa _{4}}{w^{8}}\bigg[K^{(3)}_{A}\prod_{j=1}^{2}\int_{-2w}^{2w}%
\varphi _{j}(\lambda )(w^{2}-\lambda ^{2})\rho _{sc}(\lambda )d\lambda\label{C4}
\\
&\hspace{4.5cm}+\frac{T^{2}_A}{2\pi^2}\prod_{j=1}^{2}\int_{-2w}^{2w}%
\varphi _{j}(\lambda )\frac{2w^{2}-\lambda^2}{\sqrt{4w^{2}-\lambda^2}}d\lambda\bigg],\notag
\end{align}%
$\kappa _{3}=\mu _{3}$, and
\begin{equation}
\kappa _{4}=\mu _{4}-3w^{4}  \label{k4}
\end{equation}%
is the fourth cumulant of the off-diagonal entries (see (\ref{cums})).
\noindent In particular,
\begin{align}
V_W&[\varphi]:=\lim_{n\rightarrow\infty}\mathbf{Var}\{\xi^{A}_n[\varphi]\}
=V_{GOE}[\varphi]+C_{\kappa _{3}}[\varphi ,\varphi] +C_{\kappa _{4}}[\varphi,\varphi] \label{VW}
 \end{align}
with $V_{GOE}[\varphi]$ of (\ref{VarGOE}).
\end{theorem}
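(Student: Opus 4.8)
\textbf{Proof plan for Theorem \ref{t:Cov}.}
The plan is to follow the same strategy as in the proof of Theorem \ref{t:covGE}, using the interpolation Corollary \ref{c:difgen} to measure the discrepancy between the Wigner covariance and the GOE covariance. Writing $\mathbf{Cov}\{\xi^{A}_n[\varphi_1],\xi^{A}_n[\varphi_2]\}$ through the Fourier transforms as in (\ref{CF1F2}), and using Lemma \ref{l:Var} together with the bound (\ref{vxi<}) to get $n$-independent integrable majorants for the integrand and its $t$-derivatives (so that dominated convergence applies and it suffices to work with the exponentials $\xi^{A}_n(t)$), I reduce the problem to computing $\lim_{n\to\infty}\mathbf{Cov}\{\xi^{A}_n(t_1),\xi^{A}_n(t_2)\}$ pointwise in $t_{1,2}\ge 0$. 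First I would apply the Duhamel formula (\ref{Duh}) to $\xi^{A}_n(t_1)=\Tr A^{(n)}U(t_1)$, and then the generalized differentiation formula (\ref{difgen}) from Proposition \ref{l:difgen}(i) with $p=2$ (using the boundedness of eighth moments (\ref{w8<}) and (\ref{dxi<})–(\ref{ocdlu}) to control the error term $\varepsilon_2$). The $l=0$ term reproduces the GOE equation already analyzed in Theorem \ref{t:covGE}; the new contributions come from the $l=1$ term, carrying $\kappa_2=w^2$ but also, through the extra derivative $D_{lm}^2$ acting on $(A^{(n)}U)_{lm}\xi^{B\circ}_n$, a factor of $n^{-1/2}$ times $\kappa_3$ coefficients; and the $l=2$ term carrying $\kappa_4/w^4$ times $n^{-1}$.

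The $\kappa_3$ analysis is the first genuinely new part: expanding $D^2_{lm}[(A^{(n)}U)_{lm}(t_3)\,\xi^{B\circ}_n(t_2)]$ via (\ref{ParU}), (\ref{DAU}), (\ref{Dxi}), (\ref{D2xi}) and keeping only the $O(1)$ (not $o(1)$) terms after summing over $l,m$ and dividing by $n^{3/2}$, I expect the surviving combinatorics to produce exactly the two sums $n^{-3/2}\sum A^{(n)}_{ll}C^{(n)}_{lm}$ and $n^{-3/2}T_A\sum C^{(n)}_{lm}$, i.e.\ the quantities $K^{(1)}_A$ and $K^{(2)}_A$ of (\ref{K1})–(\ref{K2}) (plus a GOE-type correction that has already been absorbed). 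Replacing the resulting closed integral equations' right-hand sides using Lemma \ref{l:eq} (in particular the function $F_1=v$ and its Stieltjes-transform characterization (\ref{eqvz})) and the asymptotics $\lim\mathbf{E}\{v_n\}=v$, $\lim n^{-1}\mathbf{E}\{\xi^A_n\}=T_A v$ of (\ref{vnvt})–(\ref{xiTA}), I solve for the $\kappa_3$-part of $C^{A,B}$; passing back through (\ref{CF1F2}) with $F[e^{it\lambda}]$ and rewriting $\widetilde v$ in terms of $\rho_{sc}$ (the rational factors $\lambda_2^2-w^2$ and $2w^4/(4w^2-\lambda_2^2)-\lambda_2^2$ are precisely what $\widetilde v$ and $\widetilde v\,'$ contribute on the cut) should yield the stated formula (\ref{C3}) after symmetrization in $\varphi_1\leftrightarrow\varphi_2$.

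The $\kappa_4$ term is handled similarly but is more transparent: it comes from a single $l=2$ contribution of size $n^{-1}\kappa_4/w^4\sum_{l,m}$ acting on $(A^{(n)}U)_{lm}(t_3)$, and after extracting the leading behavior the only data that survive are $T_A^2$ and the combination $n^{-1}\sum_m A^{(n)}_{mm}(A^{(n)}_{mm}-n^{-1}\Tr A^{(n)})$, i.e.\ $K^{(3)}_A$ of (\ref{K3}); solving the (now inhomogeneous but decoupled) integral equation and converting via Lemma \ref{l:eq} and the spectral representation gives the product form (\ref{C4}), with the $2w^2-\lambda^2$ over $\sqrt{4w^2-\lambda^2}$ factor again arising as the boundary value of $\widetilde v$. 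Finally, (\ref{VW}) is the special case $A^{(n)}=B^{(n)}$, $\varphi_1=\varphi_2=\varphi$. I expect the main obstacle to be the $\kappa_3$ bookkeeping: one must carefully identify which of the many terms produced by $D^2_{lm}$ of a product survive in the limit, show that the cross terms involving $U^\circ$ and $\xi^{B\circ}_n$ are $o(1)$ (using Poincaré-type variance bounds as in the estimates of $r_n$ in Theorem \ref{t:covGE}), and check that the off-diagonal structure collapses exactly to $K^{(1)}_A$ and $K^{(2)}_A$ and not to some other sum over entries of $A^{(n)}$; everything else is a variation on computations already carried out in Section \ref{s:GOE}.
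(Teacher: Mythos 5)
Your outline captures the right high-level strategy: reduce to the exponential test functions via (\ref{CF1F2W}), compare the Wigner covariance to the GOE covariance by a cumulant expansion, identify the surviving combinatorial quantities $K^{(1)}_A$, $K^{(2)}_A$, $K^{(3)}_A$, $T_A^2$, and convert via Lemma \ref{l:eq} and (\ref{vv})--(\ref{vvv}). But the choice $p=2$ in (\ref{difgen}) is a genuine gap, for two distinct reasons. First, the $l$-th term of (\ref{difgen}) carries $\kappa_{l+1}$, so with $p=2$ the sum stops at $\kappa_3$ and the $C_{\kappa_4}$ contribution never appears; your statement that the ``$l=2$ term carries $\kappa_4$'' is an off-by-one error ($l=2$ gives $\kappa_3$; $\kappa_4$ requires $l=3$), and similarly the ``$l=0$ term'' is identically zero since $\kappa_1=0$. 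Second, and more seriously, the worst-case error bound (\ref{b3}) does not vanish at $p=2$: by (\ref{P<}) and the chain rule $\partial/\partial W_{lm}=n^{-1/2}D_{lm}$ one has $\sup_t|\Phi^{(p+1)}_{lm}| = O(n^{1-p/2})$, so after the $\sum_{l,m}$ (factor $n^2$) and the remaining $n^{-1/2}$ from the Duhamel or interpolation step the total error is $O(n^{(5-p)/2})$, which blows up at $p=2$ and only tends to zero for $p\geq 6$. This is exactly why the paper takes $p=6$ in (\ref{c}) (producing $T_2^{(n)},\ldots,T_6^{(n)}$, i.e.\ $\kappa_3,\ldots,\kappa_7$, with $T_4,T_5,T_6\to 0$ shown in Lemma \ref{l:T2T3}) and why the eighth-moment hypothesis (\ref{w8<}) is imposed.

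A secondary issue is that your plan oscillates between invoking the interpolation Corollary \ref{c:difgen} and a direct Duhamel expansion of the Wigner covariance. The paper commits to interpolation: it forms $M^{(n)}(s)=\sqrt{s}M^{(n)}+\sqrt{1-s}\widehat{M}^{(n)}$ as in (\ref{Ms}), differentiates $\mathbf{E}\{\xi^A_n(t_1,s)\xi^{A\circ}_n(t_2,s)\}$ in $s$ as in (\ref{CDelta})--(\ref{cdelta}) (note that this produces the convolution $(U\ast A^{(n)}U)_{ml}$ in $\Phi_{lm}$, not the bare $(A^{(n)}U)_{lm}$ you would get from Duhamel applied directly), and applies (\ref{difgen}) to the $W_{lm}$-term and (\ref{diffga}) to the $\widehat W_{lm}$-term; the $\kappa_2$ contributions then cancel identically and $C_{GOE}$ is imported wholesale from Theorem \ref{t:covGE}, so one never has to re-solve the Wigner integral equation. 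A direct route is possible in principle but is not the same computation. Once you commit to the interpolation set-up, fix $p=6$, correct the cumulant indexing, and invoke Lemma \ref{l:main} for the limits of the sums produced by $D^2_{lm}\Phi_{lm}$ and $D^3_{lm}\Phi_{lm}$ (and the Poincar\'e/Schwarz estimates to kill the cross terms), the rest of your plan for the $\kappa_3$ bookkeeping and the spectral conversion to (\ref{C3})--(\ref{C4}) is sound.
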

\begin{remark}
\label{r:Nnfjj} Note that for the limiting variances $V^{\mathcal{N}}_{W}[\varphi]$ and $V^{jj}_{W}[\varphi]$  of  linear eigenvalue
statistics (\ref{Nn}) and  matrix
elements (\ref{fjj})
we get, respectively:\begin{align}
& V^{\mathcal{N}}_{W}[\varphi]=V^{\mathcal{N}}_{GOE}[\varphi]+\frac{\kappa _{4}}{2\pi^2w^8}\Big|\int_{-2w}^{2w}%
\varphi _{}(\lambda )\frac{2w^{2}-\lambda^2}{\sqrt{4w^{2}-\lambda^2}}
d\lambda\Big|^2 \label{VWN}
\end{align}
and
\begin{align}
V^{jj}_{W}[\varphi]=V^{jj}_{GOE}[\varphi]+\frac{\kappa _{4}}{w^{8}}\Big|\int_{-2w}^{2w}%
\varphi _{}(\lambda )(w^{2}-\lambda ^{2})\rho _{sc}(\lambda )d\lambda \Big|^2,  \label{VWjj}
\end{align}
where $V^{\mathcal{N}}_{GOE}[\varphi]$ and $V^{jj}_{GOE}[\varphi]$ are defined in (\ref{VN}) -- (\ref{Vjj}). This coincides with the results of \cite{Ly-Pa:08} and  \cite{Ly-Pa:11}.
\end{remark}

\begin{remark}
\label{r:bf} In case of  bilinear forms (see (\ref{eta1}) -- (\ref{bf})) $T_A=0$ and coefficients
$K_A^{(j)}$, $j=1,2,3$ of (\ref{K1}) -- (\ref{K3}) take form
\begin{align*}
& K^{(1)}_{A}=2\lim_{n\rightarrow \infty
}n^{-1/2}\sum_{m=1}^{n}\eta^{(n)}_{m}\sum_{l=1}^{n}(\eta_{l}^{(n)})^3, \quad   \\
& K^{(2)}_{A}=2T_A\big(\lim_{n\rightarrow \infty
}n^{-1/2}\sum_{m=1}^{n}\eta^{(n)}_{m}\big)^2=0,
\\
& K^{(3)}_{A}=\lim_{n\rightarrow \infty }%
\sum_{m=1}^{n}(\eta^{(n)}_{m})^4.
\end{align*}
In particular, if   $\eta^{(n)}_m=O(n^{-1/2})$, $n\rightarrow \infty$ for all $m=1,...,n$, then $K_A^{(j)}=0$, $j=1,2,3$, and we get for the limiting variance:
\begin{align}
V^{(M\eta,\eta)}_{W}[\varphi]=V^{(M\eta,\eta)}_{GOE}[\varphi]=\int_{-2w}^{2w}\int_{-2w}^{2w}\big(\Delta \varphi%
\big)^2\rho _{sc}(\lambda _{1})\rho _{sc}(\lambda _{2})d\lambda _{1}d\lambda
_{2}    \label{VWeta}
\end{align}
(see (\ref{Veta})).
\end{remark}

\begin{remark}
\label{r:cov}
\textit{} We choose here the Wigner matrix so that its first
two moments matches the first two moments of the GOE matrix (see (\ref%
{Wmom12})). This fact allows to use known properties of GOE and lies at the
basis of interpolation procedure widely used in the proof of Lemma \ref%
{l:main} below. In fact this condition is pure technical one, and we can
replace condition (\ref{Wmom12}) with more general one and consider Wigner
matrix $\widetilde{M}=n^{-1/2}\widetilde{W}$, satisfying
\begin{align}
&\mathbf{E}\{\widetilde{W}^{(n)}_{jk}\}=0,\quad 1\leq j\leq k\leq n,
\label{mom12} \\
&\mathbf{E}\{(\widetilde{W}^{(n)}_{jk})^{2}\}=w^{2},\; j\neq k,\quad \mathbf{%
E}\{(\widetilde{W}^{(n)}_{jj})^{2}\}=w_{2}w^{2},\; w_2>0.  \notag
\end{align}%
In this case there arise additional terms in (\ref{Coff}) and (\ref{VW})
proportional to $w_2-2$. In particular, we have for the corresponding
limiting variance
\begin{align} \label{VW2}
V_{\widetilde{W}}^{w_2}[\varphi]=V_W[\varphi]
+(w_2-2)w^{-2}\Big(&K^{(3)}_{A}\Big( %
\int_{-2w}^{2w}\varphi(\mu)\mu\rho _{sc}(\mu)d\mu\Big)^2
\\
&+
T_A^2\Big(\frac{1}{2\pi} %
\int_{-2w}^{2w}\frac{\varphi(\mu)\mu}{\sqrt{4w^2-\mu^2}} d\mu\Big)^2\bigg),\notag \end{align}
where $V_W[\varphi]$ is given by (\ref{VW}).
\end{remark}
\medskip

\noindent \begin{proof} We write as in the GOE case
(see (\ref{CF1F2})):
\begin{equation}\label{CF1F2W}
\mathbf{Cov}\{\xi^{A}_n[\varphi_{1}],\xi^{A}_n[\varphi_{2}]\}
=\int \int \mathbf{Cov}\{\xi^{A}_n(t_{1}),\xi^{A}_n(t_{2})\}\prod_{j=1}^2F[\varphi_{j}](t_{j})dt_{j},
\end{equation}%
 and note that in view of (\ref{vxi<})
and (\ref{F4<}) the integrand  admits an integrable and $n$%
-independent upper bound. By dominated
convergence theorem it suffices to prove the pointwise in $t_{1,2}$ convergence of $\mathbf{Cov}\{\xi^{A}_n(t_{1}),\xi^{A}_n(t_{2})\}$ to a certain limit
as $n\rightarrow \infty $, implying (\ref{Coff}). To do this we use known result
for the GOE matrix (see Theorem \ref{t:covGE}) and an interpolating procedure proposed in \cite{KKP}.

Let $\widehat{M}^{(n)}=n^{-1/2}\widehat{W}^{(n)}$ be the GOE matrix (\ref{GOE})
independent of $M^{(n)}$, and
\begin{equation}
\widehat{U}(t)=\widehat{U}^{(n)}(t):=e^{it\widehat{M}^{(n)}},\quad \widehat{\xi}^{A}_n(t)=\Tr A^{(n)}\widehat{U}(t).  \label{UGOE}
\end{equation}
Consider
the "interpolating" random matrix \begin{equation}
M^{(n)}(s)=s^{1/2}M^{(n)}+(1-s)^{1/2}\widehat{M}^{(n)},\quad 0\leq s\leq 1,  \label{Ms}
\end{equation}%
viewed as defined on the product of the probability spaces of matrices $W^{(n)}$
and $\widehat{W}^{(n)}$ (cf (\ref{xint})). We denote again by $\mathbf{E}\{\dots \}$ the
corresponding expectation in the product space. Since $M^{(n)}(1)=M^{(n)}$, $M^{(n)}(0)=%
\widehat{M}^{(n)}$, then putting
\begin{equation}
U(t,s)=U^{(n)}(t,s):=e^{itM^{(n)}(s)},\quad \xi_{n}^{A}(t,s)=\Tr A^{(n)}U(t,s),  \label{Uts}
\end{equation}%
we can write
\begin{align}
C^\Delta_{n}(t_1,t_2):&=\mathbf{Cov}\{\xi^{A}_n(t_{1}),\xi^{A}_n(t_{2})\}-
\mathbf{Cov}\{\widehat{\xi}^{A}_n(t_{1}),\widehat{\xi}^{A}_n(t_{2})\}\label{CDelta}
\\
&=\int_{0}^{1}\frac{\partial }{\partial s}%
\mathbf{E}%
\{\xi^{A}_n(t_{1},s)\xi^{A\circ}_n(t_{2},s)\} ds=c^\Delta_{n}(t_1,t_2)+c^\Delta_{n}(t_2,t_1),  \notag \\
&  \notag
\end{align}%
where
\begin{align}\label{cdelta}
c^\Delta_{n}(t_1,t_2)&=
\int_{0}^{1}%
\mathbf{E}\Big\{\frac{\partial }{\partial s}(\xi^{A}_n(t _{1},s))\cdot\xi^{A\circ}_n(t_{2},s)\Big\} ds\\
&=\frac{i}{2}\int_{0}^{1}\Big(\frac{1}{\sqrt{ns}}\sum_{l,m=1}^n \mathbf{E}\Big\{W^{(n)}_{lm}\Phi _{lm}\Big\} -\frac{1}{\sqrt{n(1-s)}}\sum_{l,m=1}^n \mathbf{E}\Big\{\widehat{W}_{lm}\Phi _{lm}\Big\}\Big)  ds\notag
\end{align}%
and
\begin{align}\label{Phi}
\Phi _{lm}=\Phi _{lm}(t_1,t_2,s)=(U*A^{(n)}U)_{ml}(t_{1},s)\xi^{A\circ}_n(t_{2},s). \end{align}%
A simple algebra based on (\ref{norU}) -- (\ref{UAUOn}) allows to obtain
\begin{equation}
|D_{lm}^q\Phi_{lm}|\leq C_{q}(1+|t_{1}|+|t_{2}|)^{q+1}n^{3/2},  \label{P<}
\end{equation}
with $C_{q}$ depending only on $q\in\mathbb{N}$.
Besides, since
\begin{equation*}
\frac{\partial}{\partial W_{lm}^{(n)}} =\sqrt{\frac{s^{}}{n^{}}}D_{lm}(s),
\quad D _{lm}(s)=\frac{\partial}{\partial M^{(n)}_{lm}(s)},
\end{equation*}
then every derivative with respect to $W^{(n)}_{lm}$ gives the
factor $n^{-1/2}$. Hence, applying differentiation formula (\ref{difgen}) with $\zeta =W^{(n)}_{lm}$, $p=6$, and $%
\Phi=\Phi_{lm}$ to every term of the first sum and differentiation formula (\ref{diffga}) to every term of the second sum in the r.h.s. of (\ref{cdelta}), we obtain (see also (\ref{difinf})):
\begin{align}
c^\Delta_{n}(t_1,t_2)&=\frac{i}{2}\int_{0}^{1}\Big[\sum_{j=2}^6 s^{(j-1)/2}T_j^{(n)}+\varepsilon_6\Big]%
ds,  \label{c}
\end{align}
where
\begin{equation}
T_j^{(n)}=\frac{1}{j!n^{(j+1)/2}}\sum_{l,m=1}^{n}\kappa _{j+1,lm}\mathbf{E}%
\big\{D_{lm}^{j}\Phi_{lm}\big\},\quad j=2,...,6,  \label{Tp}
\end{equation}
and by (\ref{b3}) and (\ref{P<})
\begin{equation}
|\varepsilon_6|\leq\frac{C_6w_8}{n^{4}}\sum_{l,m=1}^n\sup_{M\in \mathcal{S}%
_{n}}|D_{lm}^{7}\Phi_{lm}|\leq {c(1+|t_{1}|+|t_{2}|)^8}{n^{-1/2}}.  \label{e2<}
\end{equation}
Now it follows from Lemma \ref{l:T2T3} below that
\begin{align}
&\int \int
\Big[\frac{i}{2}\int_{0}^{1}
s^{1/2}\lim_{n\rightarrow\infty}(T^{(n)}_{2}(t_1,t_2)+T^{(n)}_{2}(t_2,t_1))ds\Big]
\prod_{j=1}^2F[\varphi_{j}](t_{j})dt_{j}=C_{\kappa _{3}}[\varphi _{1},\varphi
_{2}],\label{limT2}
\\
&\int \int
\Big[\frac{i}{2}\int_{0}^{1}
s^{}\lim_{n\rightarrow\infty}(T^{(n)}_{3}(t_1,t_2)+T^{(n)}_{3}(t_2,t_1))ds\Big]
\prod_{j=1}^2F[\varphi_{j}](t_{j})dt_{j}=C_{\kappa _{4 }}[\varphi _{1},\varphi _{2}],
\label{limT3}
 \end{align}
 and
 \begin{align}
&\lim_{n\rightarrow\infty}T^{(n)}_{j}=0,\quad j=4,5,6,\label{limTj}
 \end{align}
 with $C_{\kappa _{3}}[\varphi _{1},\varphi _{2}]$, $C_{\kappa _{4}}[\varphi _{1},\varphi _{2}]$ of (\ref{C3}) -- (\ref{C4}). This leads through (\ref{c}), (\ref{CDelta}), and (\ref{CGOE}) to (\ref{Coff}) -- (\ref{C4}) and completes the proof.
\end{proof}
\begin{lemma}\label{l:T2T3} Under conditions of  Theorem \ref{t:Cov} the
statements (\ref{limT2})  --  (\ref{limTj}) are valid.
\end{lemma}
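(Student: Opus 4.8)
The plan is to analyze each of the sums $T_j^{(n)}$ defined in (\ref{Tp}) by expanding the derivative $D_{lm}^{j}\Phi_{lm}$ via the Leibniz rule, using the differentiation formulas (\ref{DUAU}), (\ref{DUAUlm}), (\ref{ParE}) together with the bounds (\ref{AUOn})--(\ref{UAUmm}) to identify which terms survive in the $n\to\infty$ limit and which are $o(1)$. The key heuristic is a power-counting one: each factor of $D_{lm}$ produces a $\beta_{lm}$ and reorganizes indices so that an $U_{lm}$ (off-diagonal) or an $U_{ll}$ (diagonal) factor appears; to get a contribution that survives after division by $n^{(j+1)/2}$ and summation over $l,m$, one needs enough diagonal factors $U_{ll}$, $U_{mm}$, $(UC^{(n)}U)_{ll}$, etc., and enough "collapses" of the matrix $A^{(n)}$ against diagonal entries, since off-diagonal sums are typically smaller by a factor $n^{-1/2}$ after averaging. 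This is exactly the mechanism that makes $T_2^{(n)}$ (carrying $\kappa_3$) and $T_3^{(n)}$ (carrying $\kappa_4$) contribute while $T_4^{(n)},T_5^{(n)},T_6^{(n)}$ vanish.

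For (\ref{limTj}), I would show that for $j\ge 4$ every term arising from the Leibniz expansion of $D_{lm}^{j}\Phi_{lm}$ carries, after taking expectations and performing the $l,m$ summation, a total power of $n$ strictly less than $(j+1)/2$; the crucial inputs are that $|\xi_n^A(t)|=O(n)$, $|(U A^{(n)}U)_{lm}|=O(n^{1/2})$, $\sum_{l,m}|(U A^{(n)}U)_{lm}|^2=O(n)$, $\sum_m |(U A^{(n)}U)_{mm}|^2=O(n)$, and that the additional derivatives hitting $\xi_n^{A\circ}(t_2,s)$ through (\ref{ParE}) bring extra factors of $x$ that are harmless and factors $(U*C^{(n)}U)_{lm}$ that are again $O(n^{1/2})$. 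One also uses the Poincaré-type variance bounds to replace centered products such as $\xi^{A\circ}_n$ times another random quantity by their covariance, gaining the needed $n^{-1/2}$; the uniform eighth-moment bound (\ref{w8<}) guarantees that the cumulants $\kappa_{j+1,lm}$ are bounded and that the error term $\varepsilon_6$ in (\ref{e2<}) is negligible, and that no test-function-dependent constant blows up because of (\ref{F4<}).

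For (\ref{limT2}) and (\ref{limT3}), the work is to pin down the exact surviving terms. For $T_2^{(n)}$ one differentiates $\Phi_{lm}$ twice; the surviving contributions are those where both derivatives act on $(U*A^{(n)}U)_{ml}(t_1,s)$ producing diagonal blocks, the expectation factorizing (in the limit) through $\mathbf{E}\{v_n\}\to v$ and $\mathbf{E}\{n^{-1}\Tr A^{(n)}U(\cdot)\}\to T_A v$, together with the combinatorial coefficients $n^{-3/2}\sum A^{(n)}_{ll}C^{(n)}_{lm}\to K_A^{(1)}$ and $T_A n^{-3/2}\sum C^{(n)}_{lm}\to K_A^{(2)}$ from (\ref{K1})--(\ref{K2}). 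One then has to solve the resulting linear integral equation (in the $t$-variables, of the same Volterra type treated in Lemma \ref{l:eq}) and apply the inverse Fourier transform against $F[\varphi_1]F[\varphi_2]$, matching the spectral integrals in (\ref{C3}). The treatment of $T_3^{(n)}$ is analogous: the surviving terms split into one proportional to $K_A^{(3)}$ coming from the diagonal-diagonal collapse $\sum_m A^{(n)}_{mm}(A^{(n)}_{mm}-n^{-1}\Tr A^{(n)})$, and one proportional to $T_A^2$ coming from two separate $\Tr A^{(n)}U$ traces, again after solving the associated integral equation and Fourier-inverting to produce the two terms of (\ref{C4}).

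The main obstacle, which is also the bulk of the real work, is the bookkeeping in the $T_2^{(n)}$ and $T_3^{(n)}$ analyses: there are many Leibniz terms, each must be classified as surviving or negligible by careful power counting, the surviving ones must be shown to converge (using the GOE limits (\ref{vnvt})--(\ref{xiTA}), the limit (\ref{F}) for $\mathbf{E}\{n^{-1}\Tr A^{(n)}U C^{(n)}U\}$, and the coefficient limits $K_A^{(i)}$), and then the resulting limit must be identified with the explicit kernels in (\ref{C3})--(\ref{C4}) by solving Volterra integral equations and inverting the generalized Fourier transform as in the proof of Lemma \ref{l:eq}. Controlling the $s$-integration near $s=0$ and $s=1$ and ensuring all bounds are uniform in $n$ (so dominated convergence applies both in $s$ and in the $t_{1,2}$ Fourier variables, via (\ref{P<}) and (\ref{F4<})) is a subordinate but necessary technical point. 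I would organize the proof by first disposing of (\ref{limTj}) with the crude power-counting bound, then computing the $T_2^{(n)}$ limit in detail, and finally the $T_3^{(n)}$ limit, each time reducing to an integral equation already solved in Section \ref{s:t_means}.
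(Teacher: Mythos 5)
Your high-level plan (Leibniz expansion of $D_{lm}^{j}\Phi_{lm}$, power counting in $n$ via (\ref{AUOn})--(\ref{UAUmm}), identification of the surviving diagonal collapses, reduction to Volterra equations) is the right skeleton and matches the paper's strategy. But there is a concrete error in your identification of which Leibniz terms survive in $T_2^{(n)}$, and it is not a minor slip: if you carried your plan out as stated, you would compute the wrong limit.

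You claim that for $T_2^{(n)}$ ``the surviving contributions are those where both derivatives act on $(U*A^{(n)}U)_{ml}(t_1,s)$ producing diagonal blocks.'' That is the term the paper calls $T_{21}^{(n)}$ in (\ref{lT2=}), and it is shown to be $O(n^{-1/2})$. The point you are missing is that when both derivatives land on the first factor, the centered factor $\xi^{A\circ}_n(t_2,s)$ is left untouched; so $T_{21}^{(n)}$ is a covariance $\mathbf{E}\{(\cdots)\,\xi^{A\circ}_n\}$, and the paper shows (via the Schwarz inequality, (\ref{vxi<}), and the decomposition $U_{jj}=\mathbf{E}\{U_{jj}\}+U^{\circ}_{jj}$ leading to (\ref{T213})) that it picks up an extra $n^{-1/2}$ precisely because the paired factor is small in $L^2$. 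The terms that actually survive are $T_{22}^{(n)}$ and $T_{23}^{(n)}$, i.e., those where at least one derivative hits $\xi^{A}_n(t_2,s)$, breaking the centering; those are the terms that, through (\ref{Dxi}), (\ref{D2xi}), (\ref{DUAUlm}), and the limit (\ref{limw}) of the auxiliary Lemma \ref{l:main}, produce the $K_A^{(1)}$ and $K_A^{(2)}$ contributions to (\ref{C3}). The same pattern recurs for $T_3^{(n)}$: the all-derivatives-on-$(U*AU)_{ml}$ piece $T_{31}^{(n)}$ vanishes again for the same reason, and the survivor is the mixed piece $T_{33}^{(n)}$ handled via (\ref{limvC}).

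Two smaller but related issues. First, you cite (\ref{ParE}) for the derivatives of $\xi^{A\circ}_n(t_2,s)$; that formula is for $D_{jk}e_n(x)$ with $e_n(x)=\exp\{ix\xi^{A\circ}_n[\varphi]\}$ and belongs to the GOE CLT argument — it does not apply to $\Phi_{lm}$ of (\ref{Phi}), where the relevant formulas are (\ref{Dxi}) and (\ref{D2xi}). Second, you do not bring in the auxiliary Lemma \ref{l:main}, but you cannot close the argument without it (or without reproving its contents): the limits $\lim_n T_2^{(n)}$ and $\lim_n T_3^{(n)}$ rest on (\ref{limw}), (\ref{limvC}), (\ref{limvI}), and the accompanying variance bounds, which are the real technical substance. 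The GOE limit (\ref{F}) alone will not give you these for the interpolated matrix $M^{(n)}(s)$.
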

\begin{proof}
   Consider  $T^{(n)}_{2}$ of (\ref{Tp}) and note that by (\ref{cums}) and (\ref{mu3})
   $\kappa_{3,lm}=\mu_3=\kappa_{3}$, and we have
\begin{align}
T^{(n)}_{2}(t_{1},t_{2},s)&=\frac{\kappa_{3}}{2n^{3/2}}\sum_{l,m=1}^n \mathbf{E}%
\{D^2_{lm}((U*A^{(n)}U)_{ml}(t_{1},s)\xi^{A\circ}_n(t_{2},s))\}  \label{lT2=}
\\
&=\frac{\kappa_{3}}{2n^{3/2}}\sum_{l,m=1}^n \mathbf{E}\{\xi^{A\circ}_n(t_{2},s)D^2_{lm}(U*A^{(n)}U)_{ml}(t_{1},s)  \notag
\\
&\hspace{2cm}+2D_{lm}(U*A^{(n)}U)_{ml}(t_{1},s)D_{lm}\xi^{A}_n(t_{2},s)  \notag
\\
&\hspace{2cm}+(U*A^{(n)}U)_{ml}(t_{1},s)D_{lm}^{2}\xi^{A}_n(t_{2},s)\}=:%
\kappa_{3}[T_{21}^{(n)}+T_{22}^{(n)}+T_{23}^{(n)}].  \notag
\end{align}
Consider $T_{21}^{(n)}$. It follows from (\ref{ParU}) and (\ref{DUAU}) that $D^2_{lm}(U*A^{(n)}U)_{ml}$ of $T^{(n)}_{21}$ gives the terms of the form
\begin{align}
&T_{21}^{1(n)}=n^{-3/2}\sum_{l,m=1}^nU_{lm}U_{lm}(UA^{(n)}U)_{lm},\label{T211}
\\
&T_{21}^{2(n)}=n^{-3/2}\sum_{l,m=1}^nU_{lm}U_{ll}(UA^{(n)}U)_{mm},\label{T212}
\\
&T_{21}^{3(n)}=n^{-3/2}\sum_{l,m=1}^nU_{ll}U_{mm}(UA^{(n)}U)_{lm},\label{T213}
\end{align}
Here for shortness we omit the sign of conjugation $"*" $  and arguments  of $U$. Besides, we   replace  $\beta_{lm}$ with $1$, that in
view of (\ref{UAUmm}) gives error terms
of the order $O(n^{-1/2})$, $n\rightarrow\infty$.
 It follows from the Schwarz inequality, (\ref{norU}),  and (\ref{UAUOn})
 that
 \begin{equation*}
T_{21}^{1(n)}=O(n^{-1/2}),\quad n\rightarrow\infty,\label{T211}
\end{equation*}
and from (\ref{norU}), (\ref{UAUmm})
 that
 \begin{align*}
T_{21}^{2(n)}&\leq n^{-3/2}||U||\cdot ||(U_{11},...,U_{nn})^T||
\cdot ||((UA^{(n)}U)_{11},...,(UA^{(n)}U)_{nn})^T||\notag
\\
&=O(n^{-1/2}),\quad n\rightarrow\infty.
\end{align*}
This and (\ref{vxi<}) yield
 \begin{equation}
\big|\mathbf{E}\{(T_{21}^{1(n)}+T_{21}^{2(n)})\xi^{A\circ}_n\}\big|
\leq cn^{-1/2}\mathbf{Var}\{\xi^{A}_n\}^{1/2}=O(n^{-1/2}),
\quad n\rightarrow\infty.\label{T2122}
\end{equation}
We also have
 \begin{equation}
T_{21}^{3(n)}=O(1),\quad n\rightarrow\infty.\label{T311}
\end{equation}
Let us show that
 \begin{equation}
\mathbf{E}\{T_{21}^{3(n)}\xi^{A\circ}_n\}=O(n^{-1/2}),\quad
n\rightarrow\infty.\label{T213}
\end{equation}
For this purpose consider
\begin{align*}
R_{n}=n^{-3/2}\sum_{l,m=1}^n\mathbf{E}\{U_{ll}(t_{1})
U_{mm}(t_{2})(UA^{(n)}U)_{lm}\xi^{A\circ}_n\}.
\end{align*}
Putting here $U_{jj}=\mathbf{E}\{U_{jj}\}+U_{jj}^\circ$, $j=l,m$, and using (\ref{xiv})
 we get
\begin{align}
R_{n}=&v(t_{1})v(t_{2})n^{-3/2}\sum_{l,m=1}^n\mathbf{E}\{(UA^{(n)}U)_{lm}
\xi^{A\circ}_n\}\label{R}
\\
&+v(t_{1})n^{-3/2}\sum_{l,m=1}^n\mathbf{E}\{
U^\circ_{mm}(t_{2})(UA^{(n)}U)_{lm}\xi^{A\circ}_n\}\notag
\\
&+n^{-3/2}\sum_{l,m=1}^n\mathbf{E}\{U^\circ_{ll}(t_{1})
U_{mm}(t_{2})(UA^{(n)}U)_{lm}\xi^{A\circ}_n\}+o(1),\quad n\rightarrow\infty.\notag
\end{align}
It follows from the Schwarz inequality,  (\ref{xiv}) and (\ref{Vareta})
that the first term in the r.h.s. of (\ref{R}) is of the order $O(n^{-1/2})$, $n\rightarrow\infty$.
We also have
in view of (\ref{1}) and (\ref{norU})
\begin{align*}
n^{-3/2}\Big|\sum_{l,m=1}^n
U^\circ_{mm}(t_{2})(UA^{(n)}U)_{lm}\Big|&\leq n^{-1} ||UAU||\cdot ||(U^\circ_{11},...,U^\circ_{nn})^T||
\\
&\leq n^{-1/2}
\big(\sum_{m=1}^n
|U^\circ_{mm}(t_{2})|^{2}\big)^{1/2}.
\end{align*}
  Hence, by the Schwarz inequality and (\ref{xiv})
\begin{align*}
\Big|n^{-3/2}\sum_{l,m=1}^n&\mathbf{E}\{
U^\circ_{mm}(t_{2})(UA^{(n)}U)_{lm}\xi^{A\circ}_n\}\Big|
\\
&\leq n^{-1/2}\Big(\sum_{m=1}^n\mathbf{Var}\{
U_{mm}(t_{2})\}\Big)^{1/2}\mathbf{Var}\{\xi^{A}_n\}^{1/2}=O(n^{-1/2}),\; n\rightarrow\infty.
\end{align*}
Thus, the second and third terms in the r.h.s. of (\ref{R}) are of the order $O(n^{-1/2})$, $n\rightarrow\infty$, and we get (\ref{T213}).
Now (\ref{T2122}) -- (\ref{T213}) yield for $T^{ (n)}_{21}$ of (\ref{lT2=}):
\begin{equation}
T_{21}^{(n)}=O(n^{-1/2}),\quad
n\rightarrow\infty.\label{T21}
\end{equation}
Applying (\ref{Dxi}) -- (\ref{DUAUlm}) to calculate $T_{22}^{(n)}$ and $T_{23}^{(n)}$
of  (\ref{lT2=}) we get terms of the form
\begin{align}
&n^{-3/2}\sum_{l,m=1}^nU_{lm}(UA^{(n)}U)_{lm}(UA^{(n)}U)_{lm},\notag
\\
&n^{-3/2}\sum_{l,m=1}^nU_{ll}(UA^{(n)}U)_{mm}(UA^{(n)}U)_{lm},
\label{O1}
\end{align}
where as it follows from the Schwarz inequality and (\ref{UAUlm}) -- (\ref{UAUmm})
the first term is of the order $O(n^{-1/2})$, and the second is of the order
$O(1)$, $n\rightarrow\infty$. Hence, we
are left with
\begin{align}
T_{22}^{(n)}+T_{23}^{(n)}
=-\frac{1}{n^{3/2}}\sum_{l,m=1}^n \mathbf{E}
\{&2(U_{ll}*(U*A^{(n)}U))_{mm}(t_{1})(U*C^{(n)}U)_{lm}(t_{2})\label{T2223}
\\
&+((U*C^{(n)}U))_{lm}(t_{1})(U_{mm}*(U*C^{(n)}U)_{ll})(t_{2})/2\}\notag
\\
&\quad\quad+O(n^{-1/2}),\quad
n\rightarrow\infty.  \notag
\end{align}
Now it follows from (\ref{lT2=}), (\ref{T21}), (\ref{T2223}), and (\ref{limw})  that
\begin{align}
\lim_{n\rightarrow\infty}T_{2}^{(n)}
(t_1,t_2)=\kappa_{3}\lim_{n\rightarrow\infty}(T_{22}^{(n)}+T_{23}^{(n)})(t_1,t_2)
=-\kappa_{3}\big[2T_2(t_1,t_2)+T_2(t_2,t_1)\big],\label{T2}
\end{align}
where
\begin{align}
T_2(t_1,t_2)=\big[(K_A^{(1)}-K_A^{(2)})(v*v*v)(t_1)+K_A^{(2)}(v*tv)(t_1)\big]
\cdot (v* v)(t_2)\label{T2t1t2}
\end{align}
with $K_A^{(1)}$, $K_A^{(2)}$ of (\ref{K1}) -- (\ref{K2}) and $v$ of (\ref{vt}).
We also have
 \begin{align}
&
(v*v)(t)=-{iw^{-2}}\int_{-2w}^{2w}e^{i\mu t}\mu\rho_{sc}(\mu)d%
\mu,  \label{vv}
\\
&
(v*tv)(t)={w^{-2}}\int_{-2w}^{2w}e^{i\mu t}\Big[1-\frac{2w^2}{4w^2-\mu^2}\Big]\rho_{sc}(\mu)d%
\mu,  \label{vtv}
\\
&(v\ast v\ast v)(t)=w^{-4}\int_{-2w}^{2w}e^{it\mu}(w^2-\mu^2) \rho
_{sc}(\mu)d\mu.\label{vvv}
\end{align}
Putting (\ref{vv}) -- (\ref{vvv}) in (\ref{T2t1t2}), then plugging the result
in the l.h.s. of
(\ref{limT2}) we get after some calculations  (\ref{limT2}).

Consider now $T_{3}^{(n)}$ of (\ref{Tp}):
\begin{align*}
T_3^{(n)}=\frac{1}{6n^{2}}\sum_{l,m=1}^{n}\kappa _{4,lm}\mathbf{E}%
\big\{D_{lm}^{3}\big((U*A^{(n)}U)_{ml}(t_{1},s)\xi^{A\circ}_n(t_{2},s)\big)\big\},\quad   \label{T3}
\end{align*}
where in view of (\ref{cums}) and (\ref{k4})
\begin{equation*}
\kappa _{4,lm}=\kappa _{4}-9\delta _{lm}w^{4}.
\end{equation*}
It follows from (\ref{UAUmm}) and (\ref{vxi<}) that in (\ref{T3}) we can replace $\kappa _{4,lm}$  with $\kappa _{4} $, which gives  error terms
of the order $O(n^{-1/2})$, $n\rightarrow\infty$, and write
\begin{align}
T_3^{(n)}&=\frac{\kappa _{4}}{6n^{2}}\sum_{l,m=1}^{n}\mathbf{E}%
\big\{\xi^{A\circ}_n\cdot D_{lm}^{3}(U*A^{(n)}U)_{ml}+3D_{lm}^{}\xi^{A}_n\cdot D_{lm}^{2}(U*A^{(n)}U)_{ml}\notag
\\
&\hspace{3cm}+3D_{lm}^{2}\xi^{A}_n\cdot D_{lm}^{}(U*A^{(n)}U)_{ml}+(U*A^{(n)}U)_{ml}\cdot D_{lm}^{3}\xi^{A}_n\big\},\notag   \label{T3}
\\
&=:\kappa _{4}[T_{31}^{(n)}+ T_{32}^{(n)}+T_{33}^{(n)}+T_{34}^{(n)}]+O(n^{-1/2}),\quad
n\rightarrow\infty.
\end{align}
 Treating $T^{(n)}_{31}$ similar to $T_{21}^{(n)}$ of (\ref{lT2=}) (see (\ref{T213}) --
 (\ref{T21})) one can get
\begin{equation}
T_{31}^{(n)}=O(n^{-1/2}),\quad
n\rightarrow\infty.\label{T31}
\end{equation}
Besides, it can be shown with the help of  (\ref{UAUOn}) --  (\ref{UAUmm}) and  (\ref{vxi<}) that all terms containing off-diagonal
entries $U_{lm}$ or $(UA^{(n)}U)_{lm}$ vanish in the limit
$n\rightarrow\infty$, hence,
\begin{align*}
 T_{32}^{(n)}+T_{34}=O(n^{-1/2}),
\end{align*}
and we are left with
\begin{align*}
&T_{3}^{(n)}=-\frac{i\kappa _{4}}{n^{2}}\sum_{l,m=1}^{n}\mathbf{E}%
\big\{\big(U_{ll}*(U*A^{(n)}U)_{mm}\big)(t_{1})
\\
&\hspace{3cm}\times
\big(U_{ll}*(U*C^{(n)}U)_{mm}+U_{mm}*(U*C^{(n)}U)_{ll}\big)(t_{2})\big)
\big\}+O(n^{-1/2}),\notag
\end{align*}
as $n\rightarrow\infty$. Now it follows from  (\ref{limvC})  that
\begin{align*}
\lim_{n\rightarrow\infty}T_{3}^{(n)}=-2i\kappa _{4}\big[K_A^{(3)}(v*v*v)(t_1)(v*v*v)(t_2)
+2T_A^2(v*tv)(t_1)(v*tv)(t_2)\big].
\end{align*}
This and (\ref{vtv}) -- (\ref{vvv}) yield after some calculations  (\ref{limT3}).

It remains to show (\ref{limTj}). It is much simpler because in this case
we have additional factors $n^{-1/2}$ (see (\ref{Tp})), so that treating $T_j$, $j=4,5,6$ similar to  $T_j$, $j=2,3$ one can easily get (\ref{limTj}).
This completes the proof of the lemma.
 \end{proof}

\section{ Limiting probability law  for $\xi^{A}_n[\varphi]$ }\label{s:main}

\begin{theorem}
\label{t:clt} Consider the real symmetric Wigner random matrix of the form%
\begin{equation}
M^{(n)}=n^{-1/2}W^{(n)},\quad W^{(n)}=\{W^{}_{jk}\in \mathbb{R},\;W^{}_{jk}=W^{}_{kj}=(1+%
\delta _{jk})^{1/2}V^{}_{jk}\}_{j,k=1}^{n},  \label{MW1}
\end{equation}%
where $\{V_{jk}\}_{1\leq j\leq k<\infty }$ are i.i.d. random variables such
that
\begin{equation*}
\mathbf{E}\{V_{11}\}=0,\quad \mathbf{E}\{V_{11}^{2}\}=w^{2},\quad
\end{equation*}%
and  functions $\ln \mathbf{E}\{e^{itV_{11}}\}$ and   $\mathbf{E}\{e^{it|V_{11}| }\}$ are entire.

Let  $ \{A^{(n)}\}_{n=1}^\infty$
satisfies (\ref{1}) -- (\ref{TA}), $C^{(n)}=A^{(n)}+A^{(n)T}$, and there exist
\begin{align}
&A_p=\lim_{n\rightarrow\infty}n^{-p/2} \Big(%
\sum_{l,m=1}^n(C_{lm}^{(n)})^p+(2^{(2 -p)/2}-1)\sum_{m=1}^n(C_{mm}^{(n)})^p\Big)/2,\quad
p\geq 3.
\label{Ap}
\end{align}
 Then for any $\varphi :\mathbb{R\rightarrow R}$, whose Fourier transform (\ref%
{FT}) satisfies (\ref{F4<}), the random variable $\xi^{A\circ}_n[\varphi]$ converges
in distribution as $n\rightarrow \infty $ to the random variable $\xi^{A}[\varphi] $ such
that
\begin{equation}  \label{ln}
\ln \mathbf{E}\{e^{ix\xi^{A}[\varphi] }\}=-{x^{2}}V_{W}[\varphi ]
/2+\sum_{p=3}^\infty\frac{\kappa_p A_p }{p!}(ix^*)^p,
\end{equation}%
where
\begin{align}
x^*=\frac{x}{w^2}\int_{-2w}^{2w}\varphi(\mu)\mu\rho_{sc}(\mu)d\mu,
\label{x}
\end{align}
$\rho_{sc}$ is the density of the semicircle law (\ref{rhosc}), and $%
V_W[\varphi]$ is given by (\ref{VW}).
\end{theorem}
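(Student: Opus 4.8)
The plan is to follow the characteristic-function scheme already used for the GOE case in Theorem \ref{t:cltGE}, but now carrying the full cumulant expansion instead of truncating after the second order. As before, set $Z_n^A(x)=\mathbf{E}\{e^{ix\xi_n^{A\circ}[\varphi]}\}$ and aim to show that $Z_n^A$ converges to the function $Z^A(x)=\exp\{-x^2V_W[\varphi]/2+\sum_{p\geq 3}\kappa_p A_p(ix^*)^p/p!\}$. Since $Z_n^A(0)=1$, it suffices (exactly as in the proof of Theorem \ref{t:cltGE}) to prove that $\{Z_n^{A\prime}\}$ is uniformly bounded on finite intervals and that along any convergent subsequence the limit $Z^A$ satisfies the first-order linear ODE $Z^{A\prime}(x)=\bigl(-xV_W[\varphi]+\sum_{p\geq 3}\kappa_p A_p p (ix^*)^{p-1}(x^*)'_x\,i\bigr)Z^A(x)$, which is uniquely solved by the stated $Z^A$. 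Writing $Z_n^{A\prime}(x)=i\int F[\varphi](t)Y_n^A(x,t)\,dt$ with $Y_n^A(x,t)=\mathbf{E}\{\xi_n^A(t)e_n^\circ(x)\}$ as in (\ref{dZY})--(\ref{Ynx}), the uniform bounds and equicontinuity of $Y_n^A$ follow from (\ref{vxi<}), the Poincaré inequality, and (\ref{F4<}) just as before, so one passes to a subsequential limit $Y^A$.

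The new ingredient is the equation for $Y^A$. Applying the Duhamel formula (\ref{Duh}) to $\xi_n^A(t)=\mathrm{Tr}\,A^{(n)}U(t)$ and then the full infinite differentiation formula (\ref{difinf}) of Proposition \ref{l:difgen}(ii) — legitimate here because $\ln\mathbf{E}\{e^{itV_{11}}\}$ and $\mathbf{E}\{e^{it|V_{11}|}\}$ are entire, and the growth conditions (\ref{al})--(\ref{kap<}) can be checked using (\ref{ocdlu}), (\ref{dxi<}), (\ref{ParE}), and (\ref{F4<}) — gives $Y_n^A(x,t)$ as an infinite sum indexed by the cumulant order $p\geq 2$, term $p$ carrying a factor $n^{-(p-1)/2}$ times $\kappa_{p+1}$ times a sum over $l,m$ of $\mathbf{E}\{D_{lm}^p(\,(UA^{(n)}U)_{ml}(t)e_n^\circ(x)\,)\}$. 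Expanding $D_{lm}^p$ by Leibniz, one must track, for each $p$, exactly which terms survive in the limit. The $p=2$ term reproduces the GOE-type integral equation (\ref{eqYA})--(\ref{eqYI}) together with the $\kappa_3,\kappa_4$ corrections already identified in Lemma \ref{l:T2T3}; the point of that lemma (and of the combinatorial bookkeeping in the proof of Theorem \ref{t:Cov}) is that all terms containing an off-diagonal $U_{lm}$ or $(UA^{(n)}U)_{lm}$ vanish, leaving only products of diagonal resolvent-type entries. For general $p\geq 3$ the surviving contribution is the one in which all $p$ derivatives act on $e_n(x)$ via (\ref{ParE}), each producing a factor $-\beta_{lm}x\,e_n(x)\int(U*C^{(n)}U)_{lm}(\theta)F[\varphi](\theta)d\theta$, together with the single factor $(UA^{(n)}U)_{ml}(t)$ that, after the limit, becomes diagonal and contributes $T_A v(\cdot)$; summing $(U*C^{(n)}U)_{lm}$ raised to appropriate powers over $l,m$ and dividing by $n^{(p+1)/2}$ is precisely what produces the constant $A_p$ of (\ref{Ap}), the asymmetry factor $2^{(2-p)/2}-1$ coming from the diagonal terms $\beta_{mm}=1/2$. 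After the limit one reads off $Y^A(x,t)=-ixZ^A(x)\int C^{A,A}(t,\theta)F[\varphi](\theta)d\theta + Z^A(x)\sum_{p\geq 3}\kappa_p A_p (ix^*)^{p-1}(\text{suitable }v\text{-factor in }t)$, and integrating against $F[\varphi](t)\,dt$ via (\ref{dZY}) and the identity $(v*v)(t)=-iw^{-2}\int e^{i\mu t}\mu\rho_{sc}(\mu)d\mu$ from (\ref{vv}) converts the $t$-integral into the factor $x^*$ of (\ref{x}) and yields the claimed ODE for $Z^A$.

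The main obstacle is controlling the infinite tail of the cumulant expansion uniformly in $n$: one must show that for every fixed $x$ the series defining $Y_n^A(x,t)$ converges with an $n$-independent, $t$-integrable majorant, and that the "error" parts of each term (the Leibniz terms with at least one derivative on $(UA^{(n)}U)_{ml}$, or with an off-diagonal entry) sum to $O(n^{-1/2})$ rather than merely being $o(1)$ termwise. This requires the bounds (\ref{ocdlu}), (\ref{dxi<}), (\ref{P<})-type estimates on $D_{lm}^q$ of the relevant products, combined with $\sum_l|\kappa_{l+1}|R^l/l!<\infty$ from (\ref{kap<}) and the polynomial-in-$(1+|t|)$ growth absorbed by (\ref{F4<}); it is the analog, at all orders, of the finite bookkeeping done for $p=2,\dots,6$ in Lemma \ref{l:T2T3}, and is where the entireness hypotheses on the distribution of $V_{11}$ are genuinely used. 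Once this uniform tail control is in place, the passage to the limit, the identification of the coefficients $A_p$, and the solution of the resulting ODE are routine, and the extension from (\ref{condF2}) to (\ref{F4<}) is the same standard approximation argument as in Theorem \ref{t:cltGE}.
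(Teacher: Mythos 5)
Your proof plan follows the direct ODE-via-$Y^A$ scheme of Theorem \ref{t:cltGE}, extended to the full cumulant expansion. This is a genuinely different route from what the paper actually does.

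The paper instead interpolates: it writes $\ln Z_n^A(x)/\widehat Z_n^A(x) = \int_0^1 \partial_s \ln Z_n^A(x,s)\,ds$ with $M^{(n)}(s) = s^{1/2}M^{(n)} + (1-s)^{1/2}\widehat M^{(n)}$, applies the infinite cumulant expansion (\ref{difinf}) to the Wigner half and Gaussian integration by parts (\ref{diffga}) to the GOE half, and observes that the second-cumulant contributions cancel exactly because the first two moments of $W^{(n)}$ and $\widehat W^{(n)}$ are matched. What remains is a series $\sum_{p\geq 2} s^{(p-1)/2}\kappa_{p+1}S_p^{(n)}/p!$, which is then decomposed into pieces $S_{p1}^{(n)},\dots,S_{p5}^{(n)}$ and analyzed term by term using Lemma \ref{l:main}. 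There is no Volterra equation for an auxiliary $Y^A$ at this stage; the GOE reference result Theorem \ref{t:cltGE} is used as a black box to supply $\widehat Z_n^A(x) \to \exp\{-x^2 V_{GOE}[\varphi]/2\}$, and the argument directly identifies the increment $\ln Z^A/\widehat Z^A$. What the interpolation buys is that the Gaussian contribution is cancelled at the start, so one never has to re-solve the integral equation for $Y^A$ in the Wigner setting. In your route the $l=1$ cumulant term remains, gives the kernel $w^2\int v*Y^A$ plus the coupling to $Y^I$, and you must re-derive the full Wigner variance $V_W[\varphi]$ from the solution of that system, whereas the paper gets $V_{GOE}$ for free and only needs the increments $C_{\kappa_3}$, $C_{\kappa_4}$ and the tail $\sum_{p\geq 3}\kappa_pA_p(ix^*)^p/p!$.

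There is one concrete error in your plan that must be fixed: you claim that ``for general $p\geq 3$ the surviving contribution is the one in which all $p$ derivatives act on $e_n(x)$.'' This misses the $C_{\kappa_4}$ correction to the variance. In the paper's decomposition, cumulant order $p=3$ (i.e.\ $\kappa_4$) contributes in two distinct places: the all-derivatives-on-$e_n$ piece $S_{31}^{1(n)}$ produces the tail term $\kappa_4A_4(ix^*)^4/4!$, while the piece $S_{32}^{1(n)}$ with one derivative falling on $(U*A^{(n)}U)_{lm}$ survives and produces precisely $C_{\kappa_4}[\varphi,\varphi]$ (see (\ref{k3k4})); likewise at $p=2$ both $S_{22}^{1(n)}$ and $S_{21}^{2(n)}$ survive and together give $C_{\kappa_3}[\varphi,\varphi]$. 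If, as you state, only the all-derivatives-on-$e_n$ terms survive for $p\geq 3$, the $\kappa_4$ correction to the variance disappears and the resulting characteristic function has the wrong Gaussian part. Keeping track of which contributions at a fixed cumulant order feed $V_W$ and which feed the $A_p$ tail is exactly what the full decomposition of $S_p^{(n)}$ together with the estimates of Lemma \ref{l:main} is for, and this bookkeeping is needed whichever framework — interpolation or direct ODE — you adopt.
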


\begin{remark}\label{r:main}
It can be shown that in the case of matrix $\widetilde{M}^{(n)}%
=n^{-1/2}V^{(n)}$, the Theorem \ref{t:clt} holds true  with
\begin{equation*}
\ln \mathbf{E}\{e^{ix\xi^{A}[\varphi] }\}=-V_{\widetilde{W}}^{1}[\varphi]x^{2}
/2+\sum_{p=3}^\infty\frac{\kappa_p \widetilde{A}_p }{p!}(ix^*)^p,
\end{equation*}%
where $V_{\widetilde{W}}^{1}[\varphi]$ is given by (\ref{VW2}) with $w_2=1$,
and
\begin{align*}
\widetilde{A}_p=\lim_{n\rightarrow\infty}n^{-p/2} \Big(%
\sum_{l,m=1}^n(C_{lm}^{(n)})^p+(2^{( -2p+1)/2}-1)\sum_{m=1}^n(C_{mm}^{(n)})^p\Big)/2.
\end{align*}
\end{remark}
\begin{remark}
\label{r:bfclt} In the case of matrix elements (see (\ref{fjj}))  $A_p=2^{p/2}$, and
 we obtain the result of \cite{Ly-Pa:11} (see Theorem 3.4).

In the case of  bilinear forms (see (\ref{eta1}) -- (\ref{bf}))
we have
for $A_p$ of (\ref{Ap}):
\begin{align*}
A_p=\lim_{n\rightarrow\infty}\Big(%
\big(\sum_{l=1}^n(\eta^{(n)}_{l})^p\big)^2+(2^{(2 -p)/2}-1)\sum_{l=1}^n(\eta^{(n)}_{l})^{2p}\Big),\quad
p\geq 3.
\end{align*}
In particular, if   $\eta^{(n)}_m=O(n^{-1/2})$, $n\rightarrow \infty$ for all $m=1,...,n$, then $A_p=0$, $p\geq 3$, and the
random variable $(\varphi(M^{(n)})^\circ\eta^{(n)},\eta^{(n)})$ converges in
distribution to the Gaussian random variable with zero mean and the variance $V^{(M\eta,\eta)}_{GOE}[\varphi]$
of (\ref{Veta}).
\end{remark}
\begin{remark}
\label{r:even}
It follows from Theorem \ref{t:clt} that if $\varphi$ is even, then the
random variable $\xi^{A\circ}_n[\varphi]$ converges in
distribution to the Gaussian random variable with zero mean and the variance
$V_{GOE}[\varphi]+C_{\kappa _{4}}[\varphi,\varphi]$ (see (\ref{C3}) -- (\ref{VW})).
\end{remark}
\noindent\begin{proof}
Note  that in view of (\ref{beta}) and (\ref{MW1}) we can write
\begin{equation}
W_{lm}=\beta_{lm}^{-1/2}V_{lm}.  \label{Wjk}
\end{equation}
Besides, since  $\ln\mathbf{E}\{e^{i tV_{11}}\}$ is entire then we have
\begin{align}
\sum_{p=1}^\infty\frac{x^p|\kappa_{p+1}|}{p!}<\infty,\quad \forall x>0, \quad
\label{ser<}
\end{align}
 where $\kappa_{p}
 $ is the $p$th cumulant of $V_{11}$.
%

Consider the characteristic functions
\begin{equation}
Z^A_{n}(x)=\mathbf{E}\left\{ e^{ix(\xi^{A}_n[\varphi])^{\circ}}\right\}
\label{Zn}
\end{equation}%
and
\begin{equation}
\widehat{Z}^A_{n}(x)
=\mathbf{E}\left\{ e^{ix(\widehat{\xi}^{A}_n[\varphi])^{\circ}}\right\},
\label{Znxj}
\end{equation}
where $\widehat{\xi}^{A}_n[\varphi]$ corresponds to the
GOE matrix $\widehat{M}^{(n)}=n^{-1/2}\widehat{W}^{(n)}$  (\ref{GOE}).
In view of Theorem \ref{t:cltGE},  (\ref{VW}), and (\ref{ln}) it suffices to  prove that for any $x\in \mathbb{R}$
\begin{equation}
\lim_{n\rightarrow \infty }\ln Z^A_{n}(x)/\widehat{Z}^A_{n}(x)=-(C_{\kappa _{3}}[\varphi ,\varphi] +C_{\kappa _{4}}[\varphi,\varphi])x^{2}/2+
\sum_{p=3}^\infty\frac{\kappa_p A_p }{p!}(ix^*)^p.
\label{lnGW}
\end{equation}
Following the idea of the proof of  Theorem \ref{t:covGE}  we introduce
the "interpolating" random matrix $M^{(n)}(s)$ (see (\ref{Ms})), put
\begin{align}
&Z^A_{n}(x,s)=\mathbf{E}\left\{ e_{n}(x,s)\right\},\quad
e_{n}(x,s)=e^{ix(\xi^{A,s}_n[\varphi])^{\circ}},\quad\label{Zns}
\\
&\xi^{A,s}_n[\varphi]=\Tr \varphi(M^{(n)}(s))A^{(n)}, \notag
\\
&\xi^{A,s}_n(t)=\Tr U(t,s)A^{(n)},\quad U(t,s)=e^{itM(s)},
\label{xins}
\end{align}%
 and write
\begin{align}
\ln Z^A_{n}(x)/\widehat{Z}^A_{n}(x)&=
\int_{0}^{1}\frac{\partial }{\partial s}%
\ln Z^A_{n}(x,s) ds  \label{lnZZ}
\\
&=-\frac{x}{2}\int_{0}^{1}\frac{ds}{Z^A_{n}(x,s)}\int
\Big(\frac{1}{\sqrt{ns}}\sum_{l,m=1}^n \mathbf{E}\Big\{W^{(n)}_{lm}\Psi _{lm}\Big\}\notag
\\
&\hspace{2cm} -\frac{1}{\sqrt{n(1-s)}}\sum_{l,m=1}^n \mathbf{E}\Big\{\widehat{W}_{lm}\Psi _{lm}\Big\}\Big) F[\varphi](t)dt,     \notag
\end{align}%
where
\begin{align}\label{Psi}
\Psi _{lm}=\Psi _{lm}(t,x,s)=(U*A^{(n)}U)_{ml}(t,s)e^{\circ}_{n}(x,s).
\end{align}%
(cf (\ref{CDelta}) -- (\ref{Phi})). Let us note that unlike
functions $\Phi_{lm}$ of (\ref{Phi}), having all derivatives
$D^p_{lm}\Phi_{lm}$ of the order $O(n^{3/2})$ (see (\ref{P<})), here
we have $D^p_{lm}\Psi_{lm}=O(n^{(p+1)/2})$, and
there is no such finite $p\in \mathbb{N}$ that $\varepsilon_p$ of (\ref%
{difgen}) vanishes as $n\rightarrow\infty$. Hence, instead of (\ref%
{difgen}),  used while treating  (\ref{CDelta}), here for every term of the
first sum of the r.h.s. of (\ref{lnZZ}) we apply infinite version of (\ref%
{difgen})
 given by (\ref{difinf}) (see also (\ref{inf})). To
do this we check first that $\Psi_{lm}(x,t)$ satisfies condition
(\ref{al}). Assume  that the Fourier transform (\ref{FT}) of $\varphi $ satisfies
\begin{equation}
\int |F[\varphi ](t)||t|^{l}dt<C_{\varphi}l! \quad\forall l\in
\mathbb{N} ,  \label{phil<}
\end{equation}%
where $C_{\varphi}$ is an absolute constant. Using the Leibnitz rule %
we obtain
\begin{align}
D_{lm}^p\Psi_{lm}(x,t,s)=\sum_{q=0}^{p} \Big(%
\begin{array}{ll}
p &  \\
q &
\end{array}%
\Big)
D_{lm}^{p-q}(U*A^{(n)}U)_{ml}(t,s)D_{lm}^{q}e^{\circ}_{n}(x,s),
\label{DjkF=}
\end{align}
where
\begin{equation}
D_{lm}^{q}e_{n}(x,s)=ixD_{lm}^{q-1}\big(e_{n}(x,s)D_{lm}\xi^{A,s}_n[\varphi]%
\big),  \label{De=}
\end{equation}
(see (\ref{Zns})), so that
\begin{align*}
&D_{lm}^{q}e_{n}(x,s)=e_{n}(x,s)\sum_{r=1}^q(ix)^r\sum_ {%
\begin{array}{ll}
\overline{q}=(q_1,...,q_r): &  \\
q_1+...+q_r=q &
\end{array}%
} C_{\overline{q},r}\prod_{t=1}^rD_{lm}^{q_t}\xi^{A,s}_n[\varphi],
\end{align*}
and
\begin{equation*}
\sum_{\overline{q},r}C_{\overline{q},r}\leq 2^q.
\end{equation*}
Hence,
\begin{equation*}
|D_{lm}^{q}e_{n}(x,s)|\leq\big(2(1+|x|)\big)^q \max_ {1\leq r\leq
q,\;\sum_{t=1}^r q_t=q}
\prod_{t=1}^r|D_{lm}^{q_t}\xi^{A,s}_n[\varphi]|,
\end{equation*}
where
\begin{equation}
D_{lm}^{q}\xi^{A,s}_n[\varphi]=\int F[\varphi]%
(\theta)D_{lm}^{q}\xi^{A,s}_n(\theta)d\theta\quad \label{Dxif=}
\end{equation}
with $\xi^{A,s}_n$ of (\ref{xins}), and  in view of (\ref{Dxi}), (\ref{dxi<})
and (\ref{phil<})
\begin{equation}
|D_{lm}^{q}\xi^{A,s}_n[\varphi]|\leq\int|F[\varphi]%
(\theta)||D_{lm}^{q}\xi^{A,s}_n(\theta)|d\theta\leq C_{A}C_\varphi2^{q+1},
\label{Dphi<}
\end{equation}
so that%
\begin{equation}
|D_{lm}^{q}e_{n}(x,s)|\leq\big(c\sqrt{n}(1+|x|)\big)^q.
\label{Dem<}
\end{equation}
Here and in what follows $c$  depends only on $A$ and $\varphi$. This, (\ref{UAUOn}), and  (\ref{DjkF=}) yield
\begin{equation}
|D_{lm}^p\Psi_{lm}(x,t,s)|\leq (c\sqrt{n}(1+|x|+t))^{p+1},\quad
x\in \mathbb{R},\; t>0.  \label{Phijk<}
\end{equation}
Thus, $\Psi_{lm}$ for every $x\in \mathbb{R}$, $t>0$ satisfies  (\ref{al}). Besides, for every $x\in \mathbb{R}$,
 $t>0$ (\ref{kap<}) follows from (\ref{ser<}). Now applying differentiation formula (\ref{difinf}) with $\zeta =W^{(n)}_{lm}$ and $\Phi=\Psi_{lm}$ to every term of the first sum and differentiation formula (\ref{diffga}) to every term of the second sum in the r.h.s. of (\ref{lnZZ}) and taking in
account (\ref{Wjk}), we get (see also (\ref{difinf})):
\begin{align}
\ln Z^A_{n}(x)/\widehat{Z}^A_{n}(x)
=-\frac{x}{2}\int_{0}^{1}\frac{ds}{Z^A_{n}(x,s)}\int
\sum_{p=2}^\infty s^{(p-1)/2}\frac{\kappa_{p+1}%
}{p!}S_p^{(n)}(x,t,s)F[\varphi](t)dt,     \label{lnZZS}
\end{align}%
where
\begin{eqnarray}
\quad
S_p^{(n)}(x,t,s)=\frac{1}{n^{(p+1)/2}}\sum_{l,m=1}^n
\beta_{lm}^{-(p+1)/2} \mathbf{E}\{D_{lm}^p\Psi_{lm}(x,t,s)\}.
\label{Sp}
\end{eqnarray}%
It was shown in \cite{Ly-Pa:11} that in the case of matrix elements (\ref{fjj})  the series in (\ref{lnZZS}) converges uniformly in  $n\in%
\mathbb{N}$, $(t,x)\in K$ for any compact set $K\subset\{(x,t)\in\mathbb{R}^2:t>0\}$. In general case the proof is  almost the same with the obvious
modifications. It is based on (\ref{ser<}), the estimate
\begin{equation}
A_p\leq 2^{p/2}, \quad \forall p \in \mathbb{N},\quad
\label{Ap<}
\end{equation}
following from (\ref{Ap}) and  (\ref{1}), and on uniform bound
\begin{equation}
|S_p^{(n)}(x,t,s)|\leq (C_K)^l, \quad \forall (t,x)\in K,\;n\in \mathbb{N},\quad
s\in[0,1],
\label{ckl}
\end{equation}
which can be obtained with the help of (\ref{DAU}) -- (\ref{dxi<}). Here $C_K$ is an absolute constant depending only on $K$. In view of the uniform convergence of the series, to
make the limiting transition as $n\rightarrow\infty$ in (\ref{lnZZS}) it suffices to find the
limits
\begin{align*}
S_p=\lim_{n\rightarrow\infty}S_p^{(n)}
\end{align*}%
for every fixed $p\in\mathbb{N}$.
We have
\begin{align}
S_p^{(n)}=&\frac{1}{n^{(p+1)/2}}\sum_{l,m=1}^n
\beta_{lm}^{-(p+1)/2}
 \mathbf{E}\Big\{(U*AU)_{lm}D_{lm}^pe^\circ_{n}+pD_{lm}(U*AU)_{lm}D_{lm}^{p-1}e^\circ_{n}
\label{Sp=} \\
 &+\frac{p(p-1)}{2}D^{2}_{lm}(U*AU)_{lm}D_{lm}^{p-2}e^\circ_{n}+(1-\delta_{p2})
 \frac{p(p-1)(p-2)}{6}D^{3}_{lm}(U*AU)_{lm}D_{lm}^{p-3}e^\circ_{n}\notag
 \\
 &+(1-\delta_{p2})(1-\delta_{p3})\sum_{q=0}^{p-4} \Big(%
\begin{array}{ll}
p&  \\
q&
\end{array}%
\Big) D_{lm}^{p-q}(U*A^{(n)}U)_{ml}D_{lm}^{q}e^{\circ}_{n}\Big\}\notag
\\
&\hspace{-0.5cm}=S_{p1}^{(n)}+S_{p2}^{(n)}+\frac{p(p-1)}{2}S_{p3}^{(n)}+(1-\delta_{p2})
 \frac{p(p-1)(p-2)}{6}S_{p4}^{(n)}
+(1-\delta_{p2})(1-\delta_{p3})S_{p5}^{(n)}\notag .
\end{align}%
It follows from (\ref{UAUOn}) --  (\ref{UAUmm}) and (\ref{Dem<})
that
\begin{equation}
S_{p5}^{(n)}=O(n^{-1/2}),\quad n\rightarrow\infty.\label{S5}
\end{equation}
Since
\begin{align}
D_{lm}^{q}e_{n}(x,s)&=e_{n}(x,s)\big(ixD_{lm}\xi^{A,s}_n[\varphi]\big)^{q}
+O(n^{(q-1)/2})\label{Demq}
\\
&=e_{n}(x,s)\big(-x\beta_{lm}\int \widehat{\varphi}
(\theta)(U*C^{(n)}U)_{lm}(\theta)d\theta\big)^{q}
+O(n^{(q-1)/2}),\quad n\rightarrow\infty,\notag
\end{align}
then
 \begin{align*}
 S_{p4}^{(n)}&=\frac{1}{n^{(p+1)/2}}\sum_{l,m=1}^n
\beta_{lm}^{-(p+1)/2}
 \mathbf{E}\big\{D^{3}_{lm}(U*A^{(n)}U)_{lm}
 \\
 &\hspace{2cm}\times\big(-x\beta_{lm}\int \widehat{\varphi}
(\theta)(U*C^{(n)}U)_{lm}(\theta)d\theta\big)^{p-3}e_{n}\big\}+O(n^{-1/2}),\quad
p>3,
\end{align*}
and by (\ref{UAUOn}) -- (\ref{UAUlm}) $S_{p4}^{(n)}=O(n^{-1/2})$, $p>3$.
If $p=3$, then
\begin{align*}
 S_{34}^{(n)}&=\frac{1}{n^2}\sum_{l,m=1}^n
\beta_{lm}^{-2}
 \mathbf{E}\big\{D^{3}_{lm}(U*A^{(n)}U)_{lm}
 e^\circ_{n}(x,s)\big\},
\end{align*}
(compare with $T_{31}^n$ of (\ref{T3})), and in addition to (\ref{UAUOn}) -- (\ref{UAUlm}) we use (\ref{varvC}) to show that $S_{34}^{(n)}=O(n^{-1/2})$.
Thus,
\begin{equation}
S_{p4}^{(n)}=O(n^{-1/2}),\quad n\rightarrow\infty,\quad
p\geq3.\label{S4}
\end{equation}
Consider now $S_{p3}^{(n)}$ of (\ref{Sp=}):
\begin{align*}
S_{p3}^{(n)}&=\frac{1}{n^{(p+1)/2}}\sum_{l,m=1}^n
\beta_{lm}^{-(p+1)/2}
 \mathbf{E}\{D^{2}_{lm}(U*A^{(n)}U)_{lm}D_{lm}^{p-2}e^\circ_{n}\}\notag
\\
&=\frac{1}{n^{(p+1)/2}}\sum_{l,m=1}^n
\beta_{lm}^{(p-5)/2}
 \mathbf{E}\{D^{2}_{lm}(U*A^{(n)}U)_{lm}\big(-x\int \widehat{\varphi}
(\theta)(U*C^{(n)}U)_{lm}(\theta)d\theta\big)^{p-2}e_{n}\}
\\
&+O(n^{-1/2}),\quad n\rightarrow\infty,\quad p>2,
\end{align*}%
where we used (\ref{Demq}).
There arise sums of three types
\begin{align*}
S_{p3}^{1(n)}&=\frac{1}{n^{(p+1)/2}}\sum_{l,m=1}^n
\beta_{lm}^{(p-1)/2}U_{ll}U_{mm}
 (UA^{(n)}U)_{lm}(UC^{(n)}U)^{p-2}_{lm},
\\
S_{p3}^{2(n)}&=\frac{1}{n^{(p+1)/2}}\sum_{l,m=1}^n
\beta_{lm}^{(p-1)/2}U_{lm}U_{lm}
 (UA^{(n)}U)_{lm}(UC^{(n)}U)^{p-2}_{lm},
 \\
 S_{p3}^{3(n)}&=\frac{1}{n^{(p+1)/2}}\sum_{l,m=1}^n
\beta_{lm}^{(p-1)/2}U_{ll}U_{lm}
 (UA^{(n)}U)_{mm}(UC^{(n)}U)^{p-2}_{lm},
\end{align*}
where we omit arguments of $U$ and put $(UC^{(n)}U)_{lm}^{q}=\prod_{j=1}^q(U(t_{j_1})C^{(n)}U(t_{j_2}))_{lm}$.
If $p=2$, then treating $S_{23}^{(n)}$ similar to $T_{21}^{(n)}$ of (\ref{lT2=})
(see (\ref{lT2=}) -- (\ref{T21})) we get $S_{23}^{(n)}=O(n^{-1/2})$,
$n\rightarrow\infty.$
In case $p>2$ we use following from (\ref{AUOn}) -- (\ref{UAUmm}) asymptotic
relations
\begin{align}
&\sum_{l,m=1}^n
|U_{lm}
 ||(UA^{(n)}U)_{lm}|=O(n^{}),\label{UUAUlm}
\\
&\sum_{l,m=1}^n
|(UA^{(n)}U)_{lm}
 ||(UA^{(n)}U)_{lm}|=O(n^{}),\label{lmlm}
\\
&\sum_{l,m=1}^n
|(UA^{(n)}U)_{mm}
 ||(UA^{(n)}U)_{lm}|=O(n\sqrt{n}),\label{lmmm}
\\
&\sum_{m=1}^n
|(UA^{(n)}U)_{mm}
 ||(UA^{(n)}U)_{mm}|=O(n^{}),\label{mmmm}
\end{align}
as $n\rightarrow\infty$. They together with (\ref{UAUOn}) allows to show
that $S_{p3}^{j(n)}$, $j=1,2,3$ are of the order $O(n^{-1/2})$,
$n\rightarrow\infty$,
so that
\begin{equation}
S_{p3}^{(n)}=O(n^{-1/2}),\quad n\rightarrow\infty.\label{Sp3}
\end{equation}
Consider  $S_{p2}^{(n)}$ of (\ref{Sp=}):
\begin{align}
S_{p2}^{(n)}&=\frac{p}{n^{(p+1)/2}}\sum_{l,m=1}^n
\beta_{lm}^{-(p+1)/2}
 \mathbf{E}\{D_{lm}(U*A^{(n)}U)_{lm}(t,s)D_{lm}^{p-1}e^\circ_{n}(x,s)\}\label{Sp2}
 \\
&=-\frac{2px}{n^{(p+1)/2}}\sum_{l,m=1}^n
\beta_{lm}^{-(p-1)/2}
 \mathbf{E}\{\big(U_{ll}*(U*A^{(n)}U)_{mm}+U_{lm}*(U*A^{(n)}U)_{lm}\big)(t,s)\notag
 \\
 &\hspace{8cm}\times D_{lm}^{p-2}
 \big(e_{n}(x,s)D_{lm}\xi^{A,s}_n[\varphi]\big)\}\notag
 \\
 &=S_{p2}^{1(n)}+S_{p2}^{2(n)},
 \notag
\end{align}%
where we used (\ref{DUAUlm}) and  (\ref{De=}).
Since
\begin{align}
D_{lm}^{q}
 \big(e_{n}(x,s)&D_{lm}\xi^{A,s}_n[\varphi]\big)=D_{lm}^{q}
 e_{n}(x,s)\cdot D_{lm}\xi^{A,s}_n[\varphi]\label{De2}
 \\
 &+qD_{lm}^{q-1}
 e_{n}(x,s)\cdot D^{2}_{lm}\xi^{A,s}_n[\varphi]+O(n^{(q-1)/2}),
 \quad n\rightarrow\infty,\notag
   \end{align}
where
\begin{align}
 &D_{lm}\xi^{A,s}_n[\varphi]=i\beta_{lm}\int \label{Dxif}
 (U*^{(n)}CU)_{lm}(\theta,s)F[\varphi](\theta) d\theta,
 \\
 &D^2_{lm}\xi^{A,s}_n[\varphi]=-\beta^2_{lm}\int \label{Dxif}
 \big(U_{ll}\ast
(U\ast
C^{(n)}U)_{mm}+U_{mm}\ast
(U\ast
C^{(n)}U)_{ll}\notag\\
&\hspace{6cm}+2U_{lm }\ast
(U\ast
C^{(n)}U)_{lm}\big)(\theta,s)F[\varphi](\theta) d\theta,\notag
   \end{align}
then putting (\ref{De2}) with $q=p-2$ in $S_{p2}^{2(n)}$ of (\ref{Sp2}) and applying (\ref{Dem<}), (\ref{lmlm}), and (\ref{mmmm}) we get
\begin{equation}
S_{p2}^{2(n)}=O(n^{-1/2}),\quad n\rightarrow\infty,\label{Sp3}
\end{equation}
and \begin{align*}
S_{p2}^{1(n)}=&-2px\int F[\varphi](\theta) d\theta\frac{1}{n^{(p+1)/2}}\sum_{l,m=1}^n
\beta_{lm}^{-(p+1)/2}
 \mathbf{E}\{\big(U_{ll}*(U*A^{(n)}U)_{mm}\big)(t,s)
 \\
 &\times \big[D_{lm}^{p-2}
 e_{n}(x,s)\cdot i(U*C^{(n)}U)_{lm}(\theta,s)\notag
 \\
 &\quad-(p-2)D_{lm}^{p-3}
 e_{n}(x,s)\cdot \big(U_{ll}\ast
(U\ast
C^{(n)}U)_{mm}+U_{mm}\ast
(U\ast
C^{(n)}U)_{ll}\big)(\theta,s)\big]\}
\\
&\quad+O(n^{-1/2}),\quad n\rightarrow\infty.\notag
 \end{align*}%
It follows from (\ref{Demq}) and (\ref{UUAUlm}) -- (\ref{mmmm}) that $S_{p2}^{1(n)}$ does
not vanish only  if $p=2$ or $p=3,$ so that putting $e_{n}(x,s)=Z^{A}_{n}(x,s)+e^\circ_{n}(x,s)$
and using ( \ref{varvC}) and (\ref{varw}), we get
\begin{align}
S_{p2}^{1(n)}=&xZ^{A}_{n}(x,s)\int
\Big[-\frac{4i\delta_{p2}}{n^{3/2}}\sum_{l,m=1}^n
 \mathbf{E}\{\big(U_{ll}*(U*A^{(n)}U)_{mm}\big)(t,s)
(U*C^{(n)}U)_{lm}(\theta,s)\}\notag
 \\
 &\quad+\frac{6\delta_{p3}}{n^{2}}\sum_{l,m=1}^n
 \mathbf{E}\{\big(U_{ll}*(U*A^{(n)}U)_{mm}\big)(t,s) \Big(U_{ll}\ast
(U\ast
C^{(n)}U)_{mm}\label{Sp21}
\\
&\hspace{5cm}+U_{mm}\ast
(U\ast
C^{(n)}U)_{ll}\Big)(\theta,s)\}\Big]F[\varphi](\theta) d\theta+O(n^{-1/2}),\notag
 \end{align}%
as $n\rightarrow\infty$.
Such expressions were considered while proving Theorem \ref{t:Cov} (see Lemma
\ref{l:T2T3}). Treating $S_{p2}^{1(n)}$, $p=2,3$ in the same way and using (\ref{limvC}), (\ref{limw})  and (\ref{vv}) -- (\ref{vvv}), we get
\begin{align}
\lim_{n\rightarrow\infty}-\frac{x}{2}\int_{0}^{1}\frac{ds}{Z^A_{n}(x,s)}\int
\Big[\frac{\kappa_{3}\sqrt{s}}{2}S_{22}^{1(n)}(x,t,s)+&\frac{\kappa_{4}s}{6}
S_{32}^{1(n)}(x,t,s)\Big]F[\varphi](t)dt\label{k3k4}
\\
&=-\Big(\frac{2}{3}C_{\kappa _{3}}[\varphi ,\varphi] +C_{\kappa _{4}}[\varphi,\varphi]\Big)x^{2}/2     \notag
\end{align}%
 with $C_{\kappa _{3}}[\varphi ,\varphi]$,
$C_{\kappa _{4}}[\varphi,\varphi]$ of (\ref{C3}) -- (\ref{C4}) (see also
(\ref{lnGW}), ( \ref{lnZZS})).

At last consider $S_{p1}^{(n)}$ of (\ref{Sp=}):
\begin{align}
S_{p1}^{(n)}&=\frac{1}{n^{(p+1)/2}}\sum_{l,m=1}^n
\beta_{lm}^{-(p+1)/2}
 \mathbf{E}\{(U*A^{(n)}U)_{lm}(t,s)D_{lm}^pe_{n}^\circ(x,s)\}\label{Sp1=}
 \\
 &=\frac{ix}{n^{(p+1)/2}}\sum_{l,m=1}^n
\beta_{lm}^{-(p+1)/2}
 \mathbf{E}\{(U*A^{(n)}U)_{lm}(t,s)D_{lm}^{p-1}\big(e_{n}(x,s)D_{lm}\xi^{A,s}_n[\varphi]\big)\}\notag
  \\
 &=\frac{ix}{n^{(p+1)/2}}\sum_{l,m=1}^n
\beta_{lm}^{-(p+1)/2}
 \mathbf{E}\{(U*A^{(n)}U)_{lm}(t,s)\big[D_{lm}^{p-1}e_{n}(x,s)\cdot D_{lm}\xi^{A,s}_n[\varphi]\notag
   \\
 &\hspace{2cm}+(p-1)D_{lm}^{p-2}e_{n}(x,s)\cdot D^2_{lm}\xi^{A,s}_n[\varphi]
 \big]\}+O(n^{-1/2})\notag
 \\
 &=S_{p1}^{1(n)}+S_{p1}^{2(n)}+O(n^{-1/2}),\quad n\rightarrow\infty,\notag
  \end{align}%
where we used (\ref{De=}), (\ref{De2}), and then (\ref{dxi<}), (\ref{Dem<}),
and (\ref{UUAUlm}) -- (\ref{mmmm}) to estimate the vanishing term. It follows from (\ref{Demq}) and (\ref{lmlm}) -- (\ref{mmmm}) that if $p>2,$
then
 \begin{equation}
S_{p1}^{2(n)}=O(n^{-1/2}),\quad n\rightarrow\infty,\quad p>2.\label{Sp3}
\end{equation}
If $p=2,$  then
\begin{align*}
S^{2(n)}_{21} &=-ix\int\frac{1}{n^{3/2}}\sum_{l,m=1}^n
 \mathbf{E}\big\{\notag
  e_{n}(x,s)(U*A^{(n)}U)_{lm}(t,s) \big(U_{ll}\ast
(U\ast
C^{(n)}U)_{mm}\notag
\\
&\hspace{6cm}+U_{mm}\ast
(U\ast
C^{(n)}U)_{ll}\big)(\theta,s)\big\}F[\varphi](\theta) d\theta+O(n^{-1/2}),
\end{align*}%
and  similar to (\ref{k3k4})
\begin{align}
\lim_{n\rightarrow\infty}-\frac{x}{2}\int_{0}^{1}\frac{ds}{Z^A_{n}(x,s)}\int
\frac{\kappa_{3}\sqrt{s}}{2}S_{21}^{2(n)}(x,t,s)F[\varphi](t)dt
=-\Big(\frac{1}{3}C_{\kappa _{3}}[\varphi  ,\varphi] \Big)x^{2}/2.     \label{k3}
\end{align}%
Using (\ref{Demq}) with $q=p-1$ and (\ref{Dxif}) we get for $S_{p1}^{1(n)}$ of (\ref{Sp1=}):
\begin{align}
S^{1(n)}_{p1}
 =&\frac{1}{n^{(p+1)/2}}\sum_{l,m=1}^n
\beta_{lm}^{(p-1)/2}
 \mathbf{E}\{(U*A^{(n)}U)_{lm}(t,s)e_{n}(x,s) \label{Sp11=}
   \\
 &\quad\times\big(-x\int \widehat{\varphi}
(\theta)(U*C^{(n)}U)_{mm}(\theta)d\theta\big)^{p}\}+O(n^{-1/2}),\quad n\rightarrow\infty,\notag
  \end{align}%
where we estimate the
vanishing term with the help of  (\ref{lmlm}) and  (\ref{mmmm}). Putting here
\begin{equation*}
\beta^{(p-1)/2}_{lm}=1+\delta_{lm}
(2^{(1-p)/2}-1)
\end{equation*}
and $e_{n}(x,s)=Z^A_{n}(x,s)+e^\circ_{n}(x,s) $, and then applying first
parts of (\ref{limg1}) -- (\ref{limg2}), we get
\begin{align*}
S^{1(n)}_{p1}
 =&\frac{Z^{A}_{n}(x,s)}{n^{(p+1)/2}}\sum_{l,m=1}^n
 \mathbf{E}\{(U*A^{(n)}U)_{lm}(t,s)\big(-x\int \widehat{\varphi}
(\theta)(U*C^{(n)}U)_{lm}(\theta)d\theta\big)^{p}\} \label{Sp11=}
   \\
  &+2(2^{(1-p)/2}-1)\frac{Z^{A}_{n}(x,s)}{n^{(p+1)/2}}\sum_{m=1}^n
 \mathbf{E}\{(U*A^{(n)}U)_{mm}(t,s)
 \\
 &\hspace{3cm}\times\big(-x\int \widehat{\varphi}
(\theta)(U*C^{(n)}U)_{mm}(\theta)d\theta\big)^{p}\}
+O(n^{-1/2}),\quad n\rightarrow\infty.\notag
  \end{align*}%
This and second parts of (\ref{limg1}) -- (\ref{limg2})
  yield
for $p\geq 2$
\begin{equation}
\lim_{n\rightarrow\infty}-\frac{x}{2}\frac{\kappa_{p+1}%
}{p!}\int_{0}^{1}\frac{ds}{Z^A_{n}(x,s)}\int
s^{(p-1)/2}S^{1(n)}_{p1}
 (x,t,s)F[\varphi](t)dt
=\frac{\kappa_{p+1} A_{p+1} }{(p+1)!}(ix^*)^{p+1}
\label{Sp+1}
\end{equation}
with $A_p$ and $x^*$ defined in (\ref{Ap}) and (\ref{x}).
Now putting (\ref{k3k4}), (\ref{k3}), and (\ref{Sp+1}) in (\ref{lnZZ}) we get
(\ref{Sp+1}) and finish the proof of the theorem under condition (\ref{Sp+1}).

The case of $\varphi\in E=\{\psi:\int(1+|t|)^{4}|\widehat{\psi }%
(t)|dt<\infty\}$ can be obtained via a standard approximation procedure.
Indeed, since the set $D=\{\varphi:\int|\widehat{\varphi\ }%
(t)||t|^{l}dt<C_\varphi l!,\;\forall l\in\mathbb{N}\}$ is big enough (in
particular, it contains functions $e^{-x^2}P_m(x)$, where $P_m(x)$ is a
polynomial), then for any  $\varphi \in E$ there exists a sequence $%
\{\varphi _{k}\}\subset D$, such  that
\begin{equation}
\lim_{k\rightarrow \infty }\int_{-2w}^{2w}|\varphi (\lambda )-\varphi
_{k}(\lambda )|d\lambda=0.  \label{fktof}
\end{equation}%
Denote for the moment the characteristic functions of $\xi^{A}_n[\varphi]$ and $\xi^{A}[\varphi]$ as $Z_{n}[\varphi ]$ and $Z[\varphi ]$, to make explicit their
dependence on  $\varphi $. We have then for any $\varphi \in E$
\begin{align}
|Z_{n}[\varphi ]-Z[\varphi ]|&\leq |Z_{n}[\varphi ]-Z_{n}[\varphi
_{k}]|+|Z_{n}[\varphi _{k}]-Z[\varphi _{k}]|+|Z[\varphi _{k}]-Z[\varphi ]|
\notag \\
&:=T_{nk}^{(1)}+T_{nk}^{(2)}+T_{nk}^{(3)}.  \label{chaZ}
\end{align}%
The second term of the r.h.s. vanishes after the limit $n\rightarrow \infty $
in view of the above proof, since $\varphi _{k}\in D$. For the first term we
have from (\ref{Znxj}) and the Schwarz inequality that
\begin{eqnarray*}
|T_{nk}^{(1)}| &=|\mathbf{E}\{e^{ix\xi^{A\circ}_n[\varphi]}-e^{ix\xi_n^{A\circ}[\varphi_k]}\}|\leq|x|
\big(n\mathbf{Var}\{\xi_n^{A}[\varphi_k]\}\big)%
^{1/2},\quad \psi_k=\varphi-\varphi_k,
\end{eqnarray*}%
and then Theorem \ref{t:Cov} implies that
\begin{eqnarray*}
\limsup_{n\rightarrow \infty }|T_{nk}^{(1)}| &\leq &|x|(V_W[\psi_k])^{1/2}.
\end{eqnarray*}%
Since $V_W$ of (\ref{VW}) is continuous with respect to the $L^{1}$
convergence, then in view of (\ref{fktof}) $T_{nk}^{(1)}$ vanishes after the
subsequent limits $n\rightarrow \infty$, $k\rightarrow \infty $.

At last, we have by (\ref{ser<}), (\ref{Ap<}), and the continuity of the r.h.s.  of (\ref%
{ln}) with respect to the $L^{1}$ convergence, that the third term of (\ref%
{chaZ}) vanishes after the limit $k\rightarrow \infty .$ Thus, we
proved  Theorem under condition (\ref{F4<}).
\end{proof}

\section{Auxiliary results}\label{s:aux}

\begin{lemma}
\label{l:main} Consider  matrix $A^{(n)}$,
satisfying
 (\ref{1}) -- (\ref{TA}), $C^{(n)}=A^{(n)}+A^{(n)T}$, and  a unitary matrix
\begin{equation*}
U (t)=U^{(n)}(t)=e^{itM^{(n)}},
\end{equation*}
 corresponding to the Wigner matrix  $M^{(n)}$     of (\ref{MW}) -- (\ref{Wmom12}).
 Denote
 \begin{align*}
& U^j=U(t_j),
\\
&\overline{t^{(p)}}=(t_1,...,t_p),
 \end{align*}
and define
\begin{align}
&\xi^{A}_n(t)=\Tr A^{(n)}U(t),  \notag \\
&\eta^{A}_n(t_1,t_2)=n^{-3/2}\sum_{l,m=1}^n(U^{1}A^{(n)}U^{2})_{lm},  \label{eta} \\
&v^{I}_{n}(t_1,t_2,t_{3})=n^{-1}\sum_{m=1}^nU^{1}_{mm}(U^{2}A^{(n)}U^{3})_{mm},\quad
\label{vI} \\
&v^{C}_{n}(\overline{t^{(4)}})=n^{-1}\sum_{m=1}^n(U^{1}C^{(n)}U^{2})_{mm}(U^{3}A^{(n)}U^{4})_{mm},\quad
\label{vC} \\
&\omega_{n}(\overline{t^{(5)}})=n^{-3/2}\sum_{l,m=1}^nU^{1}_{ll}(U^{2}A^{(n)}U^{3})_{mm}(U^{4}C^{(n)}U^{5})_{lm},\quad
\label{omega} \\
&\gamma^{(1)}_{n}(\overline{t^{(2p+2)}})=n^{-(p+1)/2}\sum_{l,m=1}(U^{1}A^{(n)}U^{2})_{lm}\prod_{j=2}^{p+1} (U^{2j-1}C^{(n)}U^{{2j}})_{lm},\quad p\geq2,
\label{gam1}
\\
&\gamma^{(2)}_{n}(\overline{t^{(2p+2)}})=n^{-(p+1)/2}\sum_{m=1}(U^{1}A^{(n)}U^{2})_{mm}\prod_{j=2}^{p+1} (U^{2j-1}C^{(n)}U^{{2j}})_{mm},\quad p\geq2,
\label{gam2}
\end{align}
  and put $$\overline{f}=\mathbf{E}\{f\}.$$
Then we have
under  conditions of Theorem \ref{t:Cov}:
\begin{align}
&\text{(i)} \;\mathbf{Var}\{\xi^{A}_n(t)\} \leq c(1+|t|)^8,\hspace{2.1cm}
\lim_{n\rightarrow\infty}\overline{\xi}^{A}_n(t)=T_{A}\cdot v(t),  \label{xiv} \\
&\text{(ii)} \;\mathbf{Var}\{\eta^{A}_n(t_1,t_2)\}=O(n^{-1})
\hspace{1.9cm}
\lim_{n\rightarrow\infty}\overline{\eta}^{A}_n(t_1,t_2)=K^{'(2)}_{A}\cdot v(t_{1})v(t_{2}),\; \label{Vareta} \\
&\text{(iii)} \;\mathbf{Var}\{v^{I}_{n}(t_1,t_2,t_{3})\}=O(n^{-1}),\;
\hspace{1.1cm}
\lim_{n\rightarrow\infty}\overline{v}^{I}_{n}(t_1,t_2,t_{3})=T_{A}\cdot v(t_{1})v(t_{2}+t_{3}), \label{limvI} \\
&\text{(iv)} \;\mathbf{Var}\{v^{C}_{n}(\overline{t^{(4)}})\}=O(n^{-1}),\;\label{varvC}
\\
&\hspace{4cm}\lim_{n\rightarrow\infty}\overline{v}^{C}_{n}(\overline{t^{(4)}})=2K_{A}^{(3)}\prod_{j=1}^{4} v(t_j)+2T^{2}_{A}\cdot v(t_{1}+t_{2})v(t_{3}+t_{4}), \label{limvC} \\
&\text{(v)}\; \mathbf{Var}\{\omega_{n}(\overline{t^{(5)}})\}=O(n^{-1/2}),\;  \label{varw}
\\
&\hspace{1cm}\lim_{n\rightarrow\infty}\overline{\omega}_n(\overline{t^{(5)}})=
(K_{A}^{(1)}-K_{A}^{(2)})\prod_{j=1}^5v(t_j)+K_{A}^{(2)}v(t_1)v(t_4)v(t_5)v(t_2+t_3),
\label{limw}
\\
& \text{(vi )}
\mathbf{Var}\{\gamma^{(1)}_{n}(\overline{t^{(2p+2)}})\}=O(n^{-1/2}),\;
\hspace{1.3cm}\lim_{n\rightarrow\infty}\overline{\gamma}^{(1)}_{n}(\overline{t^{(2p+2)}})=K_{A}^{(4)}\prod_{j=1}^{p+1}
v(t_j), \label{limg1}
\\
&\text{(vii)}\mathbf{Var}\{\gamma^{(2)}_{n}(\overline{t^{(2p+2)}})\}=O(n^{-1/2}),\;
\hspace{1.3cm}\lim_{n\rightarrow\infty}\overline{\gamma}^{(2)}_{n}(\overline{t^{(2p+2)}})=K_{A}^{(5)}\prod_{j=1}^{p+1}
v(t_j),  \label{limg2}
\end{align}
where $O(n^\alpha)$, $n\rightarrow\infty$, can depend on $\overline{t^{(p)}}$, $v$ is defined in (\ref{vt}), $K_{A}^{(j)}$, $j=1,2,3$
are defined in (\ref{K1}) -- (\ref{K3}), and
\begin{align}
& K^{'(2)}_{A}=\lim_{n\rightarrow \infty
}n^{-3/2}\sum_{l,m=1}^{n}A^{(n)}_{lm},   \label{K22} \\
& K^{(4)}_{A}=\lim_{n\rightarrow \infty
}n^{-(p+1)/2}\sum_{l,m=1}^{n}A^{(n)}_{lm}(C^{(n)}_{lm})^{p},   \label{K4} \\
&K^{(5)}_{A}=\lim_{n\rightarrow \infty
}n^{-(p+1)/2}\sum_{m=1}^{n}A^{(n)}_{mm}(C^{(n)}_{mm})^{p}.\label{K5}
\end{align}
\end{lemma}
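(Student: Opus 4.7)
The plan is to prove all seven statements by a uniform scheme. Each quantity $X_n(\overline{t^{(k)}})$ in (i)--(vii) has the common form $n^{-k/2}\sum_{l,m\in S} P_{lm}$, where $P_{lm}$ is a product of entries of $U^{j}$, $(U^{j}A^{(n)}U^{j'})$, $(U^{j}C^{(n)}U^{j'})$, indexed by diagonal or off-diagonal positions. For each, we need both a variance bound and a limit for $\overline{X}_n=\mathbf{E}\{X_n\}$.

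For the expectations, I would apply the Duhamel formula (\ref{Duh}) to open one factor $U(t_j)$ as $\int_{0}^{t_j}in^{-1/2}\sum_{l,m}W_{lm}(\cdots)_{lm}(t^{\prime})dt^{\prime}$, then apply the differentiation formula (\ref{difgen}) with $\zeta=W_{lm}$, choosing $p$ large enough that (\ref{b3}) together with the moment bound (\ref{w8<}) and the derivative bounds (\ref{ocdlu})--(\ref{dxi<}) makes $\varepsilon_p=o(1)$. The $p=1$ term (contribution $\kappa_2=w^2\beta_{lm}^{-1}$) produces, after use of (\ref{DAU})--(\ref{DUAUlm}) and (\ref{Wmom2}), a self-similar integral equation that relates $\overline{X}_n$ to simpler averaged quantities together with an inhomogeneous driving term; the $p=2$ contribution ($\kappa_3$) and $p=3$ contribution ($\kappa_4$) produce precisely the corrections encoded by the $K_A^{(j)}$ coefficients via combinatorial identification of which indices collapse; terms with $p\geq 4$ each carry an additional factor of $n^{-1/2}$ and vanish in the limit. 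Passing to the limit and using (\ref{vnvt}) and (\ref{xiTA}) then yields linear integral equations whose solutions are given in closed form by Lemma \ref{l:eq} and the basic convolution identities for $v$ coming from $\widetilde v$ being the Stieltjes transform of $\rho_{sc}$.

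For the variance bounds, I would write $\mathbf{Var}\{X_n\}=\mathbf{E}\{X_nX_n^{\circ}\}$, open one $U$ factor of the first $X_n$ by Duhamel, and apply (\ref{difgen}) to $W_{lm}$ exactly as above. The resulting terms are either of the form $n^{-\alpha}\mathbf{E}\{(\cdots)X_n^{\circ}\}$, which by Schwarz reduce the variance to itself with an extra factor, closing a Gr\"onwall--type estimate, or purely deterministic products bounded via (\ref{AUOn})--(\ref{UAUmm}). Combining yields the announced orders ($O(1)$ in (i), $O(n^{-1})$ in (ii)--(iv), $O(n^{-1/2})$ in (v)--(vii)); in particular, $\mathbf{Var}\{\xi_n^A(t)\}$ is handled by iterating the differentiation formula and using (\ref{UAUlm}).

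The main obstacle is the combinatorial bookkeeping required to pin down which derivative terms contribute each of $K_A^{(1)},\dots,K_A^{(5)}$. The Leibniz expansion of $D_{lm}^{p}$ acting on products such as $U_{ll}(UA^{(n)}U)_{mm}(UC^{(n)}U)_{lm}$ produces many summands, and only specific patterns of diagonal/off-diagonal indices survive after summation over $l,m$. The key structural input is the bound (\ref{UAUmm}), $\sum_m|(UA^{(n)}U)_{mm}|^2=O(n)$ rather than the a~priori $O(n^{3/2})$, which implies that terms coupling two diagonal matrix--element factors with an off-diagonal $U_{lm}$ are negligible, whereas terms with a single diagonal $(UA^{(n)}U)_{mm}$ (or $U_{ll}$) paired with $(UC^{(n)}U)_{lm}$ on the off-diagonal are precisely those contributing $K_A^{(1)}$ versus $K_A^{(2)}$, and so on. Once this pattern is established, the computations of (vi)--(vii) are then direct extensions with $p$ factors of $(U^{2j-1}C^{(n)}U^{2j})_{lm}$ or $\ldots_{mm}$, reproducing $K_A^{(4)}$ and $K_A^{(5)}$ by the same mechanism.
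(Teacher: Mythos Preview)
Your outline correctly identifies the main tools---Duhamel's formula, the cumulant expansion (\ref{difgen}), and the integral equations of Lemma~\ref{l:eq}---but it contains a conceptual error about where the constants $K_A^{(j)}$ come from, and it departs from the paper's two-stage strategy in a way that matters.

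The paper does \emph{not} treat the Wigner case directly. It first proves (i)--(vii) for the GOE, where $\kappa_3=\kappa_4=0$: the variance bounds come from the Poincar\'e inequality (\ref{Nash}), and the limits come from the Gaussian integration-by-parts formula (\ref{diffga}) alone, which after Duhamel yields integral equations solved by Lemma~\ref{l:eq}. The constants $K_A^{(j)}$ appear already here, in the GOE case, as the limits of the inhomogeneous (``initial'') terms in these equations---for instance, $K_A^{'(2)}$ arises from $n^{-3/2}\sum_{l,m}\mathbf{E}\{(A^{(n)}U)_{lm}\}=\overline v_n\cdot n^{-3/2}\sum_{l,m}A^{(n)}_{lm}$ via (\ref{EUjk}), and $K_A^{(1)},K_A^{(2)},K_A^{(3)}$ enter similarly in (iv) and (v). Only afterwards does the paper pass to the general Wigner case by the interpolation $M(s)=s^{1/2}M+(1-s)^{1/2}\widehat M$ of (\ref{Ms}); the higher-cumulant contributions $\kappa_3,\kappa_4,\dots$ enter only in the difference from GOE and are shown either to be $o(1)$ for the limits, or to be absorbed into the variance bound via a quadratic self-consistent inequality $V_n-c(1+|t|)^4V_n^{1/2}-c(1+|t|)^7\le0$.

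So your sentence ``the $p=2$ contribution ($\kappa_3$) and $p=3$ contribution ($\kappa_4$) produce precisely the corrections encoded by the $K_A^{(j)}$ coefficients'' is backwards. The $K_A^{(j)}$ are structural invariants of the sequence $\{A^{(n)}\}$ and appear already when $\kappa_3=\kappa_4=0$; in a direct cumulant-expansion approach the $\kappa_3,\kappa_4$ terms must instead be shown to be $o(1)$ for the expectations in (i)--(vii), not identified with the answer. A direct approach along your lines is not impossible, but without the Poincar\'e step for GOE you lose the clean starting point for the variance inequalities (your ``Gr\"onwall-type'' closure would have to be made precise from scratch), and the combinatorial bookkeeping you flag would have to establish separately that all higher-cumulant contributions vanish rather than generating the $K_A^{(j)}$.
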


\begin{remark}
All statements of the lemma remain valid under
conditions of Theorem \ref{t:clt}.
\end{remark}
\begin{proof}
{\bf GOE case.} Firstly we prove the lemma supposing that matrix $M^{(n)}$ belongs to the GOE.
Statement
{\bf (i)} in GOE case was proved in Lemma \ref{GOE}.

\medskip

{\bf (ii)}  We have by Poincar\'{e} inequality (\ref{Nash})
  \begin{align*}
\mathbf{Var}\{\eta^{A}_n(t_{1},t_{2})\} &\leq\frac{w^{2}}{n^{4}} \sum_{1\leq j\leq k \leq
n} \beta^{-1}_{jk}\mathbf{E}\big\{\big|D_{jk}\sum_{l,m=1}^n(U^{1}A^{(n)}U^{2})_{lm}\big|^2\big\}.
\end{align*}
This and  (\ref{DUAU}) show that it suffices to estimate
\begin{align*}
T_{n}=\frac{1}{n^{4}}
 \sum_{j,k=1}^n \big|\sum_{l,m=1}^nU^1_{lj}(U^{2}A^{(n)}U^{3})_{km}\big|^2.
\end{align*}
We have
\begin{align*}
T_{n}&=\frac{1}{n^{4}}
 \sum_{j,k=1}^n \sum_{l,l',m,m'=1}^nU^1_{lj}\overline
 {U}^1_{jl'}(\overline{U}^{3}A^{(n)T}\overline{U^{2}})_{mk}(U^{2}A^{(n)}U^{3})_{km}
 \\
&=\frac{1}{n^{3}}
 \sum_{m,m'=1}^n(\overline{U}^{3}A^{(n)T}A^{(n)}U^{3})_{mm'} =\frac{1}{n^{3}}
 \Big|\sum_{m,p=1}^n(A^{(n)}U^{3})_{pm}\Big|^{2}\leq\frac{1}{n^{2}}\Tr
AA^{(n)T}=O(n^{-1}),
\end{align*}
hence,
\begin{align*}
\mathbf{Var}\{\eta^{A}_n(t_{1},t_{2})\}
=O(n^{-1}),\quad n\rightarrow\infty.
\end{align*}
Now
 applying  Duhamel formula (\ref{Duh}) and
differentiation formulas (\ref{diffga}), (\ref{DUAUlm}), and then  estimating
the error terms with the help of (\ref{vGxi<}), one can get
 \begin{align}
\overline{\eta}^{A}_n(t_{1},t_{2})=&
n^{-3/2}\sum_{l,m=1}^n\mathbf{E}\{(A^{(n)}U(t_{2}))_{lm}\}\label{etaeq}
\\
&-{w^{2}}
\int_{0}^{t_{1}}dt_{3}\int_{0}^{t_{3}}\overline{v}_{n}(t_{3}-t_{4})
\overline{\eta}^{A}_n(t_{3},t_{2})dt_{4}+o(1),\quad n\rightarrow\infty, \notag
\end{align}
where by (\ref{EUjk}) and (\ref{vt})
\begin{align*}
\lim_{n\rightarrow\infty}n^{-3/2}\sum_{l,m=1}^n\mathbf{E}\{(A^{(n)}U(t_{2}))_{lm}\}
=\lim_{n\rightarrow\infty}\overline{v}_n(t_{2})n^{-3/2}\sum_{l,m=1}^nA^{(n)}_{lm}=K^{'(2)}_{A}v(t_{2})
\end{align*}
 with $K^{'(2)}_{A}$ of (\ref{K22}). Thus, we have for $\eta^A=\lim_{n\rightarrow\infty}\overline{\eta}^{A}_n$:
 \begin{align*}
\eta^A(t_{1},t_{2})
+{w^{2}}
\int_{0}^{t_{1}}dt_{3}\int_{0}^{t_{3}}v(t_{3}-t_{4})\eta^A(t_{4},t_{2})dt_{4}=K^{'(2)}_{A}v(t_{1}), \end{align*}
and by (\ref{F1})
 \begin{align}
\eta^A(t_{1},t_{2})=
K^{'(2)}_{A}v(t_{1})v(t_{2}), \label{etaK}
\end{align}
so (\ref{Vareta}) is proved.

\medskip

{\bf (iii)} Putting $U_{mm}=U^\circ_{mm}+\overline{U}_{mm}$ and using (\ref{EUjk})
we get
\begin{align}
\overline{v}^{I}_{n}(t_1,t_2,t_{3})
=\overline{v}_n(t_{1})\overline{\xi}_n^A(t_{2}+t_{3})+\overline{r}_n,
\quad r_n=n^{-1}\sum_{m=1}^n(U^{1}_{mm})^\circ(U^{2}A^{(n)}U^{3})_{mm}. \label{vI=}  \end{align}
By the Schwarz inequality and (\ref{UAUmm})
\begin{align}
| r_n|&\leq\Big(\sum_{m=1}^n|(U^{2}A^{(n)}U^{3})_{mm}|^2\Big)^{1/2}
\Big(\sum_{m=1}^n|(U^{1}_{mm})^\circ|^2\Big)^{1/2}\notag
\\
&= O(n^{-1/2})\Big(\sum_{m=1}^n|(U^{1}_{mm})^\circ|^2\Big)^{1/2}. \label{rnv} \end{align}
It follows from (\ref{vGxi<}) that
\begin{align}
\mathbf{E}\Big\{\sum_{m=1}^n|(U^{1}_{mm})^\circ|^2\Big\}
=O(1),\quad n\rightarrow\infty. \label{mm0}
\end{align}
  This, (\ref{rnv}), and the Schwarz inequality for expectations yield
\begin{align}
| \overline{r}_n|
=O(n^{-1/2}),\quad n\rightarrow\infty. \label{rnO}
\end{align}
Now (\ref{xiv}),  (\ref{vI=}), and (\ref{rnO}) give
\begin{align*}
\lim_{n\rightarrow\infty}\overline{v}^{I}_{n}(t_1,t_2,t_{3})=T_{A}\cdot v(t_{1})v(t_{2}+t_{3}).
\end{align*}
Besides, we have for $V_n=\mathbf{Var}\{v^{I}_{n}\}$:
\begin{align*}
V_n=\mathbf{E}\big\{n^{-1}\sum_{m=1}^nU^{1}_{mm}(U^{2}A^{(n)}U^{3})_{mm}\cdot
v^{I\circ}_{n}\big\}=\overline{v}_n(t_{1})\mathbf{E}\big\{n^{-1}{\xi}_n^A(t_{2}+t_{3})v^{I\circ}_{n}\big\}
+\mathbf{E}\big\{r_{n}v^{I\circ}_{n}\big\}
\end{align*}
with $r_n$ of (\ref{vI=}).
It follows from the Schwarz inequality, (\ref{vGxi<})
and (\ref{rnv}) -- (\ref{rnO}), that
\begin{align*}
V_n\leq O(n^{-1/2})V_n^{1/2},\quad n\rightarrow\infty.
\end{align*}
This proves (\ref{limvI}).

\medskip

{\bf (iv)}  The proof of (\ref{varvC}) repeats with the obvious modifications
 that one of (\ref{Vareta}).
 Let us prove (\ref{limvC}). Applying  Duhamel formula (\ref{Duh}),
differentiation formulas (\ref{diffga}), (\ref{DAU}) -- (\ref{DUAUlm}), and then  estimating
the error terms with the help of (\ref{vGxi<}), one can get
(cf (\ref{Fn}) and (\ref{etaeq})) \begin{align}
n^{-1}\sum_{m=1}^n\mathbf{E}\{&(U^{1}A^{(n)}U^{2})_{mm}(U^{3}C^{(n)}U^{4})_{mm}\}
=n^{-1}\sum_{m=1}^n\mathbf{E}\{(A^{(n)}U^{2})_{mm}(U^{3}C^{(n)}U^{4})_{mm}\}\label{vnCe}
\\
&-{w^{2}}
\int_{0}^{t_{1}}dt_{5}\int_{0}^{t_{5}}\mathbf{E}\Big\{v_{n}(t_{5}-t_{6})n^{-1}
\sum_{m=1}^n(U^{6}A^{(n)}U^{2})_{mm}(U^{3}C^{(n)}U^{4})_{mm}\Big\}dt_{6}\notag \\
&-{w^{2}}
\int_{0}^{t_{1}}dt_{5}\int_{0}^{t_{5}}\mathbf{E}\Big\{n^{-1}\xi_{n}^{A}(t_{5}+t_{6})n^{-1}
\sum_{m=1}^nU_{mm}(t_{2}-t_{6})(U^{3}C^{(n)}U^{4})_{mm}\Big\}dt_{6}\notag
\\
&+o(1),\quad n\rightarrow\infty, \notag
\end{align}
and
 \begin{align}
n^{-1}\sum_{m=1}^n\mathbf{E}\{&(A^{(n)}U^{2})_{mm}(U^{3}C^{(n)}U^{4})_{mm}\}
=n^{-1}\sum_{m=1}^nA^{(n)}_{mm}\mathbf{E}\{(U^{3}C^{(n)}U^{4})_{mm}\}\label{Gne}
\\
&-{w^{2}}
\int_{0}^{t_{2}}dt_{5}\int_{0}^{t_{5}}\mathbf{E}\Big\{v_{n}(t_{5}-t_{6})n^{-1}
\sum_{m=1}^n(A^{(n)}U^{6})_{mm}(U^{3}C^{(n)}U^{4})_{mm}\Big\}dt_{6}\notag \\
&+o(1),\quad n\rightarrow\infty, \notag
\end{align}
and
also\begin{align}
n^{-1}\sum_{m=1}^nA^{(n)}_{mm}&\mathbf{E}\{(U^{3}C^{(n)}U^{4})_{mm}\}
=n^{-1}\sum_{m=1}^nA^{(n)}_{mm}\mathbf{E}\{(C^{(n)}U^{4})_{mm}\}\label{HnCe}
\\
&-{w^{2}}
\int_{0}^{t_{3}}dt_{5}\int_{0}^{t_{5}}\mathbf{E}\Big\{v_{n}(t_{5}-t_{6})n^{-1}
\sum_{m=1}^nA^{(n)}_{mm}(U^{6}C^{(n)}U^{4})_{mm}\Big\}dt_{6}\notag \\
&-{w^{2}}
\int_{0}^{t_{3}}dt_{5}\int_{0}^{t_{4}}\mathbf{E}\Big\{n^{-1}\xi_{n}^{C}(t_{5}+t_{6})n^{-1}
\sum_{m=1}^nA^{(n)}_{mm}U_{mm}(t_4-t_6 )\Big\}dt_{6}\notag
\\
&+o(1),\quad n\rightarrow\infty, \notag
\end{align}
where by (\ref{EUjk})
\begin{equation*}
n^{-1}\sum_{m=1}^nA^{(n)}_{mm}\mathbf{E}\{(C^{(n)}U^{4})_{mm}\}
={\overline{v}}_n(t_{4})n^{-1}\sum_{m=1}^nA^{(n)}_{mm}C^{(n)}_{mm},
\end{equation*}
so that
\begin{equation}
\lim_{n\rightarrow\infty}n^{-1}\sum_{m=1}^nA^{(n)}_{mm}\mathbf{E}\{(C^{(n)}U^{4})_{mm}\}=2 D_{A}\cdot v(t_{4}), \quad D_A=\lim_{n\rightarrow\infty}n^{-1}\sum_{m=1}^n(A^{(n)}_{mm})^2. \label{DA}
\end{equation}
Denote
\begin{align*}
{v}^{C}(\overline{t^{(4)}})&=\lim_{n\rightarrow\infty}\overline{v}^{C}_{n}(\overline{t^{(4)}}),
\\
G(t_{2},t_{3},t_{4})&=\lim_{n\rightarrow\infty}n^{-1}\sum_{m=1}^n\mathbf{E}\{(A^{(n)}U^{2})_{mm}(U^{3}C^{(n)}U^{4})_{mm}\},
\\
H(t_{3},t_{4})&=\lim_{n\rightarrow\infty}n^{-1}\sum_{m=1}^nA^{(n)}_{mm}\mathbf{E}\{(U^{3}C^{(n)}U^{4})_{mm}\}.
\end{align*}
(More accuratly, it can be shown that there exist corresponding convergent subsequences. But all these subsequences have the same limits, which are
unique solutions of the system of integral equations below. Hence, we can
write limits of whole sequences.) It follows from (\ref{vnCe}) -- (\ref{DA}), (\ref{xiv}) and (\ref{limvI})
that $v^{C}$, $G$, and $H$ satisfy the system of integral equations:
\begin{align*}
{v}^{C}(\overline{t^{(4)}})+&{w^{2}}
\int_{0}^{t_{1}}dt_{5}\int_{0}^{t_{5}}v(t_{5}-t_{6}){v}^{C}(t_{6},t_{2},t_{3},t_{4})dt_{6}\notag \\
&=G(t_{2},t_{3},t_{4})-2{w^{2}}T_A^2v(t_{3}+t_{4})
\int_{0}^{t_{1}}dt_{5}\int_{0}^{t_{5}}v(t_{5}+t_{6})
v(t_{2}-t_{6})dt_{6},
\\
G(t_{2},t_{3},t_{4})+&{w^{2}}
\int_{0}^{t_{2}}dt_{5}\int_{0}^{t_{5}}v(t_{5}-t_{6})G(t_{6},t_{3},t_{4})dt_{6}=H(t_{3},t_{4}),
\\
H(t_{3},t_{4})+&{w^{2}}
\int_{0}^{t_{3}}dt_{5}\int_{0}^{t_{5}}v(t_{5}-t_{6})H(t_{6},t_{4})dt_{6}\\
&=2 D_{A}\cdot v(t_{4})-2{w^{2}}T_A^2
\int_{0}^{t_{3}}dt_{5}\int_{0}^{t_{4}}v(t_{5}+t_{6})
v(t_{4}-t_{6})dt_{6}.
\end{align*}
 Solving the equations with the help of (\ref{F1}) -- (\ref{F2}) we get
\begin{align*}
H(t_{3},t_{4})&=2K^{(3)}_Av(t_{3})v(t_{4})+{2}T_A^2v(t_{3}+t_{4}),
\\
G(t_{2},t_{3},t_{4})&=v(t_{2})H(t_{3},t_{4}),
\\
{v}^{C}(\overline{t^{(4)}})&=v(t_{1})G(t_{2},t_{3},t_{4})+{2}T_A^2(v(t_{1}+t_{2})-v(t_{1})v(t_{2}))v(t_{3}+t_{4}),
\end{align*}
so that
\begin{align*}
{v}^{C}(\overline{t^{(4)}})
=2K^{(3)}_A\prod_{j=1}^4v(t_j)+{2}T_A^2v(t_{1}+t_{2})v(t_{3}+t_{4}),
\end{align*}
and (\ref{limvC}) is proved.

\medskip

{\bf (v)} Similar to (\ref{vI=}) -- (\ref{rnO}) we have
\begin{align}
\overline{\omega}_n(\overline{t^{(5)}})&=\overline{v}_n(t_{1})\overline{\Gamma}_{n}(t_{2},t_{3},t_{4},t_{5})
+\overline{r}_n, \label{om=}
\end{align}
where
\begin{align}
&\Gamma_{n}(t_{2},t_{3},t_{4},t_{5})
=n^{-3/2}\sum_{l,m=1}^n(U^{2}A^{(n)}U^{3})_{mm}(U^{4}C^{(n)}U^{5})_{lm}, \label{Gam=}
\\
&r_n(\overline{t^{(5)}})=n^{-3/2}\sum_{l,m=1}^n(U^{1}_{ll})^\circ(U^{2}A^{(n)}U^{3})_{mm}
(U^{4}C^{(n)}U^{5})_{lm},\notag
\end{align}
and by (\ref{norU}), the Schwarz inequality,  (\ref{UAUmm}), and (\ref{mm0})
\begin{align}
| \overline{r}_n|&\leq n^{-3/2}||C^{(n)}||\Big(\mathbf{E}\Big\{\sum_{l=1}^n|(U^{1}_{ll})^\circ|^2\Big\}\Big)^{1/2}
\Big(\mathbf{E}\Big\{\sum_{m=1}^n|(U^{2}A^{(n)}U^{3})_{mm}|^2\Big\}\Big)^{1/2}
\notag
\\
&= O(n^{-1/2}),\quad n\rightarrow\infty. \label{rnom=}
\end{align}
Applying  Duhamel formula (\ref{Duh}),
differentiation formulas (\ref{diffga}), (\ref{DAU}) -- (\ref{DUAUlm}), and then  estimating
the error terms with the help of (\ref{vGxi<}), one can get for $\Gamma _{n}$ (cf (\ref{vnCe})
-- (\ref{HnCe})):
 \begin{align*}
\overline{\Gamma}_{n}(t_{2},t_{3},t_{4},t_{5})=&\overline{B}_{n}(t_{3},t_{4},t_{5})
\\
&-{w^{2}}
\int_{0}^{t_{2}}dt_{6}\int_{0}^{t_{6}}\mathbf{E}\big\{v_{n}(t_{6}-t_{7})\Gamma_{n}(t_{7},t_{3},t_{4},t_{5})\big\}dt_{7}\notag \\
&-{w^{2}}
\int_{0}^{t_{2}}dt_{6}\int_{0}^{t_{3}}\mathbf{E}\big\{n^{-1}\xi_{n}^{A}(t_{6}+
t_{7})D_{n}(t_{3}-t_{7},t_{4},t_{5})\big\}dt_{7}\notag
\\
&+o(1),\quad n\rightarrow\infty, \notag
\end{align*}
where
\begin{align*}
&B_{n}(t_{3},t_{4},t_{5})=n^{-3/2}\sum_{l,m=1}^n(A^{(n)}U^{3})_{mm}(U^{2}C^{(n)}U^{3})_{lm}, \\
&D_{n}(\tau,t_{4},t_{5})=n^{-3/2}\sum_{l,m=1}^nU^{}_{mm}(\tau)(U^{2}C^{(n)}U^{3})_{lm}.\notag \end{align*}
Similar to (\ref{om=}) -- (\ref{rnom=}) it can be shown that
\begin{align*}
\overline{D}_{n}(\tau,t_{4},t_{5})&=\overline{v}_n(\tau)\overline{\eta}_{n}^{C}(t_{4},t_{5})+O(n^{-1/2}),\quad n\rightarrow\infty,
\end{align*}
where $\eta_n^C$ is defined in (\ref{eta}), so that by (\ref{Vareta})
 \begin{align}
\lim_{n\rightarrow\infty}\overline{D}_{n}(\tau,t_{4},t_{5})&=
2K_{A}^{'(2)}v(\tau)v(t_{4})v(t_{5}).\label{D=}
\end{align}
We also have for $B_n$:
\begin{align}
\overline{B}_{n}(t_{3},t_{4},t_{5})&=\overline{v}_n(t_{3})n^{-3/2}\sum_{l,m=1}^n\mathbf{E}\Big\{A^{(n)}_{mm}
(U^{4}C^{(n)}U^{5})_{lm}\Big\}+R_{n},\label{Bn=}
\end{align}
where
\begin{align*}
|R_{n}|\leq n^{-1}||A^{(n)}||\Big(\mathbf{E}\Big\{\sum_{m=1}^n|(AU^{3})_{mm}^\circ|^2\Big\}\Big)^{1/2}.
\end{align*}
By the standard argument based on  Poincar\'{e} inequality (\ref{Nash}) one
can easily get \begin{align*}
\mathbf{Var}\{(AU)_{mm}(t)\} &\leq {C|t|^2}{n^{-1}}(AA^{(n)T})_{mm},
\end{align*}
hence,
\begin{align}
R_n=O(n^{-1/2}),\quad {n\rightarrow\infty}.\label{BR}
\end{align}
Besides, repeating with obvious modifications steps leading from (\ref{etaeq})
to (\ref{etaK}), we get
\begin{align*}
\lim_{n\rightarrow\infty}n^{-3/2}\sum_{l,m=1}\mathbf{E}\Big\{A^{(n)}_{mm}
(U^{4}C^{(n)}U^{5})_{lm}\Big\}=K^{(1)}_{A}v(t_{4})v(t_{5}).
\end{align*}
This, (\ref{Bn=}), and (\ref{BR})
yield
\begin{align}
B(t_{3},t_{4},t_{5}):=\lim_{n\rightarrow\infty}\overline{B}_{n}(t_{3},t_{4},t_{5})=K^{(1)}_{A} v(t_{3})v(t_{4})v(t_{5}).\label{B=}
\end{align}
Plugging (\ref{D=}), (\ref{B=}) in (\ref{Gam=}) we get equation with respect
to $\Gamma=\lim_{n\rightarrow\infty}\overline{\Gamma}_{n}$:
\begin{align*}
\Gamma(t_{2},t_{3},t_{4},t_{5})&+{w^{2}}
\int_{0}^{t_{2}}dt_{6}\int_{0}^{t_{6}}v(t_{6}-t_{7})\Gamma(t_{7},t_{3},t_{4},t_{5})dt_{7}
\\
&=B(t_{3},t_{4},t_{5})
-{w^{2}}K^{(2)}_{A}v(t_{4})v(t_{5})
\int_{0}^{t_{2}}dt_{6}\int_{0}^{t_{3}}v(t_{6}+t_{7})v(t_{3}-t_{7})dt_{7},\notag
\end{align*}
where we put $K^{(2)}_{A}={2}T_AK^{'(2)}_{A}$ (see (\ref{K2}), (\ref{K22})). Solving the equation with the help of (\ref{F1}) -- (\ref{F2}), we obtain
\begin{align*}
\Gamma(t_{2},t_{3},t_{4},t_{5})&=v(t_{2})B(t_{3},t_{4},t_{5})+K^{(2)}_{A}(v(t_{2}+t_{3})-v(t_{2})v(t_{3}))
v(t_{4})v(t_{5})
\\
&=(K^{(1)}_{A}-K^{(2)}_{A})\prod_{j=1}^5v(t_j)+K^{(2)}_{A}v(t_{2}+t_{3})
v(t_{4})v(t_{5}).
\end{align*}
This and (\ref{om=}) -- (\ref{rnom=})
finally yield (\ref{limw}).

\medskip

{\bf (vi)} It follows from Poincar\'{e} inequality (\ref{Nash})
that
\begin{align*}
\mathbf{Var}\{\gamma^{(1)}_{n}(\overline{t^{(2p+2)}}) \} &\leq\frac{w^{2}}{n^{(p+2)}} \sum_{1\leq j \leq k \leq
n} \beta^{-1}_{jk}\mathbf{E}\Big\{\Big|D_{jk}\sum_{ l, m=1}^n(U^{1}A^{(n)}U^{2})_{lm}\prod_{j=2}^{p+1} (U^{2j-1}C^{(n)}U^{2j})_{lm}\Big|^2\Big\}.
\end{align*}
Taking into account  (\ref{DUAU}) we see that to get the first part of (\ref{limg1}) it
suffices to show that
\begin{align*}
R_{n}:=\frac{1}{n^{(p+2)}} \sum_{ j, k=1}^n \Big|\sum_{ l, m=1}^nU^{0}_{jl}(U^{1}A^{(n)}U^{2})_{km}\prod_{j=2}^{p+1} (U^{2j-1}C^{(n)}U^{2j})_{lm}\Big|^2=O(n^{-1}),
\end{align*}
as $n\rightarrow\infty$, (here $U^0=U(t_0)$). Since by (\ref{norU})
\begin{align*}
 \sum_{ j=1}^n U_{jl}\overline{U_{jl'}}=\delta_{ll'},\quad \sum_{ k=1}^n
 (U^{2}A^{(n)T}U^{1})_{mk}\overline{(U^{1}A^{(n)}U^{2})_{km'}}=\sum_{ k=1}^n
 (U^{2}A^{(n)T})_{mk}\overline{(A^{(n)}U^{2})_{km'}},
\end{align*}
then
\begin{align*}
R_{n}&=\frac{1}{n^{(p+2)}} \sum_{  k,l=1}^n \Big|\sum_{  m=1}^n(A^{(n)}U^{2})_{km}\prod_{j=2}^{p+1}
(U^{2j-1}C^{(n)}U^{2j})_{lm}\Big|^2
\\
&\leq\frac{1}{n^{(p+2)}} \sum_{  k,l=1}^n \sum_{  m=1}^n|(A^{(n)}U^{2})_{km}|^2
\sum_{  m'=1}^n\Big|\prod_{j=2}^{p+1} (U^{2j-1}C^{(n)}U^{2j})_{lm'}\Big|^2. \end{align*}
We have by (\ref{1}) and (\ref{norU})
\begin{align*}
 \sum_{k,m=1}^n|(A^{(n)}U^{2})_{km}|^2=\Tr A^{(n)}A^{(n)T} =O(n),\quad n\rightarrow\infty,
 \end{align*}
 and by (\ref{UAUOn}) -- (\ref{UAUlm})
  \begin{align*}
\sum_{  l,m'=1}^n\Big|\prod_{j=2}^{p+1} (U^{2j-1}C^{(n)}U^{2j})_{lm'}\Big|^2=O(n^{p-1})\sum_{  l,m'=1}^n\Big| (U^{3}C^{(n)}U^{4})_{lm'}\Big|^2=O(n^{p}),\quad n\rightarrow\infty. \end{align*}
Hence, $R_n=O(n^{-1})$, $n\rightarrow\infty$, and
 \begin{align}
\mathbf{Var}\{\gamma^{(1)}_{n}(\overline{t^{(2p+2)}}) \} &=O(n^{-1}),
\quad n\rightarrow\infty.\label{varg1}
 \end{align}
 To prove (\ref{limg1}) we show that every $U(t)$ in
\begin{align}
\overline{\gamma}^{(1)}_{n}(\overline{t^{(2p+2)}})=n^{-(p+1)/2}\sum_{l,m=1}^n
\mathbf{E}\Big\{(U(t_{1})A^{(n)}U(t_{2}))_{lm}\prod_{j=2}^{p+1} (U(t_{2j-1})C^{(n)}U(t_{2j}))_{lm}\Big\}
\label{eg1}
 \end{align}
 can be
 replaced with $\overline{v}_n$ with the error term that vanishes as   $n\rightarrow\infty$.
For this purpose it suffices to show that
\begin{align}
\overline{\gamma}^{(1)}_{n}(\overline{t^{(p)}})=\overline{v}_n(t_{1})\overline{\delta}_n(t_2,...,t_{2p+2})+o(1), \quad  n\rightarrow\infty,\label{g1=}
 \end{align}
 where
\begin{align}
{\delta}_n(t_2,...,t_{2p+2})=n^{-(p+1)/2}\sum_{l,m=1}^n
(A^{(n)}U(t_{2}))_{lm}\prod_{j=2}^{p+1} (U(t_{2j-1})C^{(n)}U(t_{2j}))_{lm}.\label{del}
 \end{align}
 Applying  Duhamel formula (\ref{Duh}) and then
differentiation formulas (\ref{diffga}), (\ref{DAU}) -- (\ref{DUAUlm}), we get:
 \begin{align*}
\overline{\gamma}^{(1)}_{n}(\overline{t^{(p)}})&=\overline{\delta}_n(t_2,...,t_{2p+2})
\\
&-{w^{2}}
\int_{0}^{t_{1}}d\tau_{1}\int_{0}^{\tau_{1}}v(\tau_{1}-\tau_{2})\overline{\gamma}^{(1)}_{n}(\tau_{2},t_{2},...,t_{2p+2})d\tau_{2}\notag \\
&-{w^{2}}
\int_{0}^{t_{1}}R_{n}(\tau_{1},t_{2},...,t_{2p+2})d\tau_{1},\notag
\end{align*}
where
 \begin{align}
R_{n}(\tau_{1},&t_{2},...,t_{2p+2})=\int_{0}^{\tau_{1}}\mathbf{E}\big\{v^{\circ}_{n}(\tau_{2})
{\gamma }^{(1)}_{n}(\tau_{1}-\tau_{2},t_{2},...,t_{2p+2})\big\}d\tau_{2}\label{Rgn=} \\
&+\int_{0}^{\tau_{1}}(\overline{v}_n(\tau_{2})-v(\tau_{2}))
\overline{\gamma }^{(1)}_{n}(\tau_{1}-\tau_{2},t_{2},...,t_{2p+2})d\tau_{2}\notag
\\
&+n^{-1}\tau_{1}\overline{\gamma }^{(1)}_{n}(\tau_{1},t_{2},...,t_{2p+2})\notag
\\
&+n^{-1}
\int_{0}^{t_{2}}\overline{\gamma }^{(1)}_{n}(\tau_{1}+\tau_{2},t_{2}-\tau_{2},...,t_{2p+2})
d\tau_{2}\notag
\\&+
\int_{0}^{t_{2}}
\mathbf{E}\Big\{n^{-1}\xi_{n}^{A}(\tau_{1}+\tau_{2})\frac{1}{n^{(p+1)/2}}\sum_{l,m=1}^nU_{lm}(t_{2}-\tau_{2})
\prod_{j=2}^{p+1} (U(t_{2j-1})C^{(n)}U(t_{2j}))_{lm}\Big\}
d\tau_{2}\notag
\\
&+\frac{1}{n}\cdot
\frac{1}{n^{(p+1)/2}}\sum_{l,m,k=1}^n\beta^{-1}_{lk}\mathbf{E}\Big\{(U(\tau_{1})A^{(n)}U(t_{2}))_{km}
D_{lk}\prod_{j=2}^{p+1} (U(t_{2j-1})C^{(n)}U(t_{2j}))_{lm}\Big\}.\notag
\end{align}
It follows from (\ref{F1}) and  (\ref{solut}) with $T=-v$ and
$R'(t)=-w^{2}R_n(t,t_{2},...,t_{2p+2})$, that
 \begin{align*}
\overline{\gamma}^{(1)}_{n}(\overline{t^{(p)}})&=v(t_1)\overline{\delta}_n(t_2,...,t_{2p+2})
-{w^{2}}
\int_{0}^{t_{1}}v(t_1-\tau_{1})R_{n}(\tau_{1},t_{2},...,t_{2p+2})d\tau_{1}.\\
\notag
\end{align*}
Hence, to get (\ref{g1=}) it suffices to show that
\begin{align}
R_{n}=o(1), \quad  n\rightarrow\infty.\label{Rn=o}
 \end{align}
Indeed, the first four terms of the r.h.s. of (\ref{Rgn=}) vanishes because of (\ref{xiv}), the fifth term is of the order $O(n^{-1/2})$, $n\rightarrow\infty,$
because of (\ref{UAUOn}) -- (\ref{UAUlm}) and boundedness of $n^{-1}\xi_{n}^{A}(\tau_{1}+\tau_{2})$,
and the last term after differentiation gives terms of the form $n^{-1}\overline{\gamma }^{(1)}_n$ or
\begin{align}
\frac{1}{n}\cdot
\frac{1}{n^{(p+1)/2}}\sum_{l,m=1}^n\mathbf{E}\Big\{(UA^{(n)}UC^{(n)}U)_{lm}\prod_{j=2}^{p+1} (UC^{(n)}U)_{lm}\Big\},
 \end{align}
 which evidently of the order $O(n^{-1/2})$, $n\rightarrow\infty$ (see (\ref{UAUOn}) -- (\ref{UAUlm}) ). Hence, (\ref{Rn=o}) is proved, and so does (\ref{g1=}).
It remains to note that  (\ref{g1=}) holds true for
\begin{align*}
\overline{\gamma}^{(1)}_{n}(\overline{t^{(2p+2)}})=n^{-(p+1)/2}\sum_{l,m=1}^n
\mathbf{E}\Big\{(V(t_{1})A^{(n)}V(t_{2}))_{lm}\prod_{j=2}^{p+1} (V(t_{2j-1})C^{(n)}V(t_{2j}))_{lm}\Big\}.
 \end{align*}
(cf (\ref{eg1})), where $V$ is equal $U$ or identity matrix $I_n$. Hence,
in the limit $n\rightarrow\infty$
we can  replace all $U$ of  (\ref{eg1}) with $v$ and so get (\ref{limg1}).

{\bf (vii)} The proof of (vii) repeats essentially that one of  (vi).

\medskip
{\bf Wigner case.} Proofs of all statements (i) -- (vii) follow the same
scheme based on the known facts for the GOE matrices and interpolation procedure
proposed while proving Theorems \ref{t:Cov} and \ref{t:clt}. We demonstrate
this scheme proving (i):

{\bf (i)} Consider $V_{n}(t):=\mathbf{Var}\{\xi_{n}^{A}(t)\}$ and note that
\begin{align}
V_{n}(t)&=\mathbf{Var}\{\widehat{\xi}_{n}^{A}(t)\} +C^{\Delta}_{n}(t,-t),\label{Vn=}
\end{align}%
where $\widehat{\xi}_{n}^{A}$ and $C^{\Delta}_{n}$ are defined in (\ref{UGOE})
and (\ref{CDelta}), respectively. By
(\ref{vxi<}) we have
\begin{align}
\mathbf{Var}\{\widehat{\xi}_{n}^{A}(t)\}\leq ct^2.\label{V1}
\end{align}%
 Repeating steps leading from  (\ref{CDelta}) to (\ref{c}) -- (\ref{e2<}),
 but using here (\ref{difgen}) with $p=5$ instead $p=6$ in (\ref{c}), we get
\begin{align}
c^{\Delta}_{n}(t,-t)&=\frac{i}{2}\int_{0}^{1}\Big[\sum_{j=2}^5 {s^{(j-1)/2}}T_j^{(n)}+\varepsilon_5\Big]%
ds  \label{ct-t}
\end{align}
with $T_j^{(n)}$ of (\ref{Tp}), and
\begin{equation}
|\varepsilon_{5}|\leq\frac{C_5w^{7/8}_8}{n^{7/2}}\sum_{l,m=1}^n\sup_{M\in \mathcal{S}%
_{n}}|D_{lm}^{6}\Phi_{lm}|\leq {c(1+|t_{}|)^7}{}.  \label{e5<}
\end{equation}
Consider  $T_1^{(n)}$. It is given by (\ref{lT2=}) with $t_1=t$, $t_2=-t$. Since  $T_{21}^{j(n)}$, $j=1,2,3$ of   (\ref{T211}) -- (\ref{T213}) are bounded
uniformly in $n\in \mathbb{N}$, and  every derivative $D_{lm}$
of $U(t)=e^{itM^{(n)}}$ gives factor $t$, then
 \begin{align*}
\Big|n^{-3/2}\sum_{l,m=1}^nD_{lm}^{2}(U*A^{(n)}U)_{lm}\Big|\leq c(1+|t|)^{3},
\end{align*}%
and by the Schwarz inequality we have for $T_{21}^{(n)}$ of (\ref{lT2=}):
 \begin{align*}
|T_{21}^{(n)}|=\Big|\mathbf{E}\Big\{n^{-3/2}\sum_{l,m=1}^nD_{lm}^{2}(U*A^{(n)}U)_{lm}\cdot\xi_{n}^{A\circ}(t)\Big\}\Big|\leq c(1+|t|)^{3}V_{n}^{1/2}.
\end{align*}%
We also have for $T_{22}^{(n)}$ and $T_{23}^{(n)}$of (\ref{lT2=})
(see  (\ref{T2223}) and  (\ref{UAUlm}) -- (\ref{UAUmm})):
 \begin{align*}
|T_{22}^{(n)}+T_{23}^{(n)}|\leq c(1+|t|)^{3}.
\end{align*}
Hence,
 \begin{align}
|T_{2}^{(n)}|\leq c(1+|t|)^{3}(V_{n}^{1/2}+1).\label{VT2}
\end{align}%
Treating $T_{3}^{(n)}$of (\ref{T3})
and $T_{j}^{(n)}$, $j=4,5$ of (\ref{Tp}) in the similar way one can get
 \begin{align}
&|T_{3}^{(n)}|\leq c(1+|t|)^{4}(V_{n}^{1/2}+1),\label{VT3}
\\
&|T_{j}^{(n)}|\leq c(1+|t|)^{j+1},\quad j=4,5.\label{VTj}
\end{align}%
Putting (\ref{e5<}) -- (\ref{VTj})
in (\ref{ct-t}), and then together with (\ref{V1}) in (\ref{Vn=}), we get
 the quadratic inequality with respect to $V_{n}^{1/2}$:
\begin{equation*}
V_{n}-c(1+|t|)^4V_{n}^{1/2}-c(1+|t|)^7\leq 0,
\end{equation*}
solving which we get $V_{n}\leq c(1+|t|)^4$.

To finish the proof of (i) it remains to show that
\begin{equation}  \label{Uv}
\lim_{n\rightarrow\infty}n^{-1}\mathbf{E}\{\xi_{n}^{A}(t)\}=T_{A}v(t).
\end{equation}
In the GOE case we have (see (\ref{xiTA}))
\begin{equation}  \label{UvG}
\lim_{n\rightarrow\infty}n^{-1}\mathbf{E}\{\widehat{\xi}_{n}^{A}(t)\}=T_{A}v(t).
\end{equation}
Besides, we have
\begin{align*}
\xi_{n}^{A}(t)-\widehat{\xi}_{n}^{A}(t)&=\int_{0}^{1}\frac{\partial }{\partial s}%
\xi_{n}^{A}(t,s) ds   \\
&=\frac{i}{2}\int_{0}^{1}\sum_{l,m=1}^n \Big(\frac{1}{\sqrt{sn}}W^{(n)}_{lm}-%
\frac{1}{\sqrt{(1-s)n}}\widehat{W}_{lm}\Big)(U_{}*A^{(n)}U_{})_{lm}(t,s)ds,  \notag
\end{align*}
so that similar to (\ref{ct-t}) -- (\ref{e5<})
\begin{align*}
n^{-1}\mathbf{E}\{\xi_{n}^{A}(t)\}-n^{-1}\mathbf{E}\{\widehat{\xi}_{n}^{A}(t)\}=\frac{i}{2}\int_{0}^{1}
\Big[ {s^{1/2}}T_2^{\prime(n)}+\varepsilon_2\Big]%
ds,
\end{align*}
where
\begin{equation*}
T_2^{\prime(n)}=\frac{\kappa _{3}}{j!n^{5/2}}\sum_{l,m=1}^{n}\mathbf{E}%
\big\{D_{lm}^{2}(U_{}*A^{(n)}U_{})_{lm}(t,s)\big\}=O(n^{-1}),\quad n\rightarrow\infty,
\end{equation*}
and
\begin{equation*}
|\varepsilon_{3}|\leq\frac{C_3\sqrt{w^{}_8}}{n^{6}}\sum_{l,m=1}^n\sup_{M\in \mathcal{S}%
_{n}}|D_{lm}^{4}(U_{}*A^{(n)}U_{})_{lm}(t,s)|=O(n^{-1/2}),\quad n\rightarrow\infty.
\end{equation*}
Hence,
\begin{align*}
n^{-1}\mathbf{E}\{\xi_{n}^{A}(t)\}-n^{-1}\mathbf{E}\{\widehat{\xi}_{n}^{A}(t)\}=O(n^{-1/2}),\;n\rightarrow\infty.
\end{align*}
This and (\ref{UvG}) yield (\ref{Uv}) and finish the proof of (i).
\end{proof}

\end{document}